\title{Finite Automata with Time-Delay Blocks
\thanks{%
This work has been financially supported in part by the
European Commission FP7-ICT Cognitive Systems, Interaction, and Robotics under the contract  \# 270180 (NOPTILUS);  
by Funda\c c\~ ao para Ci\^ encia e Tecnologia under project 
PTDC/EEA-CRO/104901/2008 (Modeling and control of Networked vehicle systems in persistent autonomous
operations);
by Austrian Science Fund (FWF) Grant No P
23499-N23 on Modern Graph Algorithmic Techniques in Formal Verification;
FWF NFN Grant No S11407-N23 (RiSE); ERC Start grant (279307: Graph Games);
Microsoft faculty fellows award;  ERC Advanced grant QUAREM; and FWF Grant
No  S11403-N23 (RiSE).}
}
\author{Krishnendu Chatterjee$^1$ \and Thomas A. Henzinger$^{1}$   
\and Vinayak S. Prabhu$^2$ }
\institute{
$^1$ IST Austria (Institute of Science and Technology Austria); $\qquad$
$^2$FEUP, University of Porto; \\
{\tt \{krish.chat,tah\}@ist.ac.at, vinayak@fe.up.pt}
}
\date{}
\begin{document}
\maketitle

\thispagestyle{empty}

\begin{abstract}
The notion of delays arises naturally in many computational models, such as, in the 
design of circuits, control systems, and dataflow languages.
In this work, we introduce \emph{automata with delay blocks} (ADBs), extending 
finite state automata with variable time delay blocks, 
for deferring individual transition output symbols, in a discrete-time setting.
We show that the ADB languages strictly subsume the regular languages, and 
are incomparable in expressive power to the context-free languages.
We show that ADBs are closed under union, concatenation and Kleene star, 
 and under intersection with regular 
languages, but not closed under complementation and intersection with other
ADB languages.
We show that the emptiness and the membership problems are decidable in 
polynomial time for ADBs, whereas the universality problem is undecidable.
Finally we consider the linear-time model checking problem, i.e., whether the 
language of an ADB is contained in a regular language, and show that the model 
checking problem is PSPACE-complete.
\end{abstract}

\section{Introduction}
The class of dynamical systems (or processes) with delays occur frequently in 
control systems where delays arise due to physical constraints
(see \emph{e.g.}~\cite{zhong2006robust, chiasson2007applications, 
sipahi2012time, Gu_Kharitonov_Chen_2003}). 
The notion of delays is also common in systems where transmission of 
information is involved. 
Delay blocks have been used for modeling such time delays in engineering 
systems,
for example, the unit delay block in Simulink~\cite{SimulinkDelay} delays the 
input signal by one 
sample period, corresponding to the $z^{-1}$ discrete time $Z$-transform 
operator.
The memory block in Simulink, meant for continuous time signals, delays the 
input by one integration time step. 
Mathworks' Control Systems Toolbox~\cite{ControlToolbox} can be used for modeling delays in control 
systems using the $ e^{-\Delta s}$
Laplace transform operator (in the transfer functions)
for modeling a delay of  $\Delta$ time units; 
the coupling between the delay and the
system dynamics is tracked in the internal state space model.
The notion of delays arises naturally in other computational models, 
e.g., time delays are used in the design and analysis of 
circuits (timing analysis and analysis of circuits with latches),
and delays are a key component in dataflow languages
(\emph{e.g.} in the Ptolemy II framework~\cite{EkerJanneckLeeLiuLiuLudvigSachsXiong03_TamingHeterogeneityPtolemyApproach, PtolemyOnline}).

Although delay constructs have been widely used in control systems, 
design of circuits, and dataflow languages, they have not been considered in 
the classical automata theoretic settings in computer science.
One approach to model delays in the automata theoretic setting has been by the
introduction of an automaton model for an intermediate buffer for
explicitly modeling the state of the buffer. 
This approach suffers from three crucial drawbacks: 
(1)~the buffer length has to be fixed in any given model, 
(2)~the buffer contents have to be explicitly modeled leading to
unnecessary model complexity, and 
(3)~the state space of the system blows up with increasing buffer size, due
to state space modeling of the buffer contents.

In this work, we introduce an extension of the standard finite state automata 
model by enriching automata with variable discrete-time delay blocks for
deferring individual output symbols. 
We call the resultant structures \emph{automata with delay blocks} (ADBs). 
Viewing the automata as generators of strings, 
the string generated by an accepting run of a standard finite state automaton
is the 
same as the sequence of symbols observed as the output of the run. 
In automata with delay blocks, the output symbols are 
\emph{generated} by a regular 
automaton structure, but the \emph{output sequence} of the symbols  
differs from 
the symbol generation sequence due to the delay blocks involved. 
In an ADB, there is an associated discrete-time delay $\Delta$ with each 
transition $e$ labelled by an output symbol; in the output the symbol
labeling the edge $e$ appears after a delay of $\Delta$  time units. 
Time passes in the model in discrete time steps, either via an explicit 
$\tick$ transition in the ADB, or when the automaton run ends in an accepting state. 
We present a couple of examples to illustrate the model. 
Given an ADB $\A$, let $\lan(\A)$ denote the (discrete-time) output language 
of the automaton, and let $\ulan(\A)$ denote the untimed output language.

\begin{example}
Consider a shipwreck scenario  where hazardous material containers from a wrecked ship 
 are floating in the ocean, and are being dispersed by ocean currents.
A team of autonomous underwater vehicles (AUVs) is monitoring the situation, their
goal being to (1)~detect the possible locations of the drums using sonar data, and
(2)~monitor affects on underwater marine life due to leaking materials from the
containers. 
For illustrative purposes, consider a team of two vehicles named AUV-1 and AUV-2.
AUV-1 is operating at a depth of 10 meters, and is taking sonar imaging data above it 
and processing it to detect the floating drum locations.
 AUV-2 is operating at a depth of 150 meters and monitoring the
underwater marine life situation.
The search pattern of AUV-2 depends on the possible sightings of containers given by
AUV-1 which are conveyed   through acoustic communication.
AUV-1 periodically, surfaces as it is close to the surface, sends its full detailed imaging 
data to the base station through GSM communication (high datarate and only works 
above water, underwater acoustic communication is extremely 
low datarate and has limited range) 
 where human operators study data and update the earlier
sighting inferences of AUV-1, and send the updates back to AUV-1, which must
then convey the updates back to AUV-2 through underwater acoustic communication.
The human  operators may also change the resurfacing frequency of AUV-1 depending
on the data received.

We are interested in describing the pattern of messages received by AUV-2.
We define one discrete time unit to be the time in between two AUV-1 resurfacings (note
that this corresponds to variable physical times, and a variable number of point 
monitorings).
In one such time unit, AUV-1 sends $k$ point locations to AUV-2, each annotated with 
$\yes$ and 
$\may$ (for possible
container sightings, $\may$ denotes ``maybe'').
The updates from the base station  are conveyed to AUV-2 in the next time slot 
from AUV-1 as simply  a $k$-bit sequence
corresponding to the same $k$ locations as in the previous time slot (the locations are
not sent again to AUV-2 as underwater communication is extremely expensive).
Let us denote the sending of the point coordinates as the event $\point$.
Then the (untimed) language describing the pattern of messages from AUV-1 to AUV-2 is
\[
\set{ (w\# w')  \mid w \in \set{\point  \yes, \point   \may}^* \text{ and }
w' \in \set{\yes, \no}^* \text{ and } 2|w'| = |w|}^*
\]
 where
$\#$ denotes the demarcation between two adjacent time slots.
This language can be described in a natural and intuitive fashion by the ADB in 
Figure~\ref{figure:example-zero}.
\begin{figure}[t]
\hspace*{-4mm}
  \strut\centerline{\setlength{\unitlength}{0.00043745in}
\begingroup\makeatletter\ifx\SetFigFont\undefined%
\gdef\SetFigFont#1#2#3#4#5{%
  \reset@font\fontsize{#1}{#2pt}%
  \fontfamily{#3}\fontseries{#4}\fontshape{#5}%
  \selectfont}%
\fi\endgroup%
{\renewcommand{\dashlinestretch}{30}
\begin{picture}(8396,2530)(0,-10)
\put(799,2076){\ellipse{854}{584}}
\put(799,2076){\ellipse{944}{674}}
\put(7948,2074){\ellipse{854}{584}}
\put(4389,2070){\ellipse{854}{584}}
\put(4365,357){\ellipse{854}{584}}
\put(7961,332){\ellipse{854}{584}}
\path(2442,2233)(2937,2233)(2937,1918)
	(2442,1918)(2442,2233)
\path(1272,2053)(2442,2053)
\path(2937,2053)(3972,2053)
\path(3792.000,1993.000)(3972.000,2053.000)(3792.000,2113.000)
\path(4827,2053)(5997,2053)
\path(6492,2053)(7527,2053)
\path(7347.000,1993.000)(7527.000,2053.000)(7347.000,2113.000)
\path(12,2503)(327,2188)
\path(157.294,2272.853)(327.000,2188.000)(242.147,2357.706)
\path(7707,1333)(8202,1333)(8202,1018)
	(7707,1018)(7707,1333)
\path(7932,1783)(7932,1333)
\path(7932,1018)(7932,613)
\path(7872.000,793.000)(7932.000,613.000)(7992.000,793.000)
\path(3927,343)(732,343)(732,1738)
\path(792.000,1558.000)(732.000,1738.000)(672.000,1558.000)
\path(5952,478)(6447,478)(6447,163)
	(5952,163)(5952,478)
\path(7527,343)(6447,343)
\path(5952,343)(4782,343)
\path(4962.000,403.000)(4782.000,343.000)(4962.000,283.000)
\path(5367,1648)(5862,1648)(5862,1333)
	(5367,1333)(5367,1648)
\path(5862,1333)(7617,478)
\path(5367,1648)(4782,1918)
\path(4970.576,1897.047)(4782.000,1918.000)(4920.289,1788.092)
\path(5997,2233)(6492,2233)(6492,1918)
	(5997,1918)(5997,2233)
\put(732,2008){\makebox(0,0)[lb]{\smash{{\SetFigFont{8}{9.6}{\familydefault}{\mddefault}{\updefault}$l_0$}}}}
\put(2622,2008){\makebox(0,0)[lb]{\smash{{\SetFigFont{8}{9.6}{\familydefault}{\mddefault}{\updefault}0}}}}
\put(6132,2008){\makebox(0,0)[lb]{\smash{{\SetFigFont{8}{9.6}{\familydefault}{\mddefault}{\updefault}0}}}}
\put(1497,2143){\makebox(0,0)[lb]{\smash{{\SetFigFont{8}{9.6}{\familydefault}{\mddefault}{\updefault}$\point$}}}}
\put(8022,1468){\makebox(0,0)[lb]{\smash{{\SetFigFont{8}{9.6}{\familydefault}{\mddefault}{\updefault}$\set{\yes,\no}$}}}}
\put(7887,1108){\makebox(0,0)[lb]{\smash{{\SetFigFont{8}{9.6}{\familydefault}{\mddefault}{\updefault}1}}}}
\put(1632,433){\makebox(0,0)[lb]{\smash{{\SetFigFont{8}{9.6}{\familydefault}{\mddefault}{\updefault}$\tick$}}}}
\put(4287,2008){\makebox(0,0)[lb]{\smash{{\SetFigFont{8}{9.6}{\familydefault}{\mddefault}{\updefault}$l_1$}}}}
\put(7797,2008){\makebox(0,0)[lb]{\smash{{\SetFigFont{8}{9.6}{\familydefault}{\mddefault}{\updefault}$l_2$}}}}
\put(6132,253){\makebox(0,0)[lb]{\smash{{\SetFigFont{8}{9.6}{\familydefault}{\mddefault}{\updefault}0}}}}
\put(4242,298){\makebox(0,0)[lb]{\smash{{\SetFigFont{8}{9.6}{\familydefault}{\mddefault}{\updefault}$l_4$}}}}
\put(7842,253){\makebox(0,0)[lb]{\smash{{\SetFigFont{8}{9.6}{\familydefault}{\mddefault}{\updefault}$l_3$}}}}
\put(5547,1423){\makebox(0,0)[lb]{\smash{{\SetFigFont{8}{9.6}{\familydefault}{\mddefault}{\updefault}0}}}}
\put(6177,1243){\makebox(0,0)[lb]{\smash{{\SetFigFont{8}{9.6}{\familydefault}{\mddefault}{\updefault}$\point$}}}}
\put(6717,73){\makebox(0,0)[lb]{\smash{{\SetFigFont{8}{9.6}{\familydefault}{\mddefault}{\updefault}$\#$}}}}
\put(4962,2143){\makebox(0,0)[lb]{\smash{{\SetFigFont{8}{9.6}{\familydefault}{\mddefault}{\updefault}$\set{\yes,\may}$}}}}
\end{picture}
}}
  \caption{Automaton $\A_0$ with delay blocks.}
  \label{figure:example-zero}
\end{figure}
The automaton also makes it clear that the $i$-th  $\yes,\no$ that appears
after the $\#$ corresponds to the $i$-th $\point \yes, \point\may$ in the previous
time slot; this relationship may be useful for further processing of  the
point coordinates values.  
We explain the workings of the automaton $\A_0$ in detail below.

The initial location  is $l_0$ which is also the only accepting location.
Each edge has a delay block, with the number inside the block denoting the time 
delay associated with the block.
Consider accepting runs of the automaton. 
The output symbols are \emph{generated} in accepting runs according to the regular 
expression  sequence $(\point \set{\yes, \may} \set{\yes, \no} )^*$ (the
transition labeled $\tick$ denotes time passing by one time unit and
$\tick$  is not an
output symbol).
However, because of the associated delays with the transitions, the output 
symbols
(namely $\point, \yes, \may, \no$) appear in a different sequence.
Consider a particular run sequence $r = \point \yes \yes \# \tick  \point \yes \no
\point \may\no\# \tick$.
Recall that time advances in ADBs either via the explicit
$\tick$ transition, or when the run ends in an accepting  state. 
Thus, in the run $r$, the first four symbols (\emph{i.e.} $\point \yes \yes \#$) are
generated at time $0$.
The second $\yes$ symbol has an associated delay of $1$, the 
rest have an associated delay of $0$.
The $0$-delay symbols appear immediately in the output (at time $0$).
Then, we have the first $\tick$ transition, which results in time 
advancing to $1$.
At time $1$, first the $1$-delay symbol, $\yes$ (generated previously) appears at the output.
Then, the sequence  $\point \yes \no \point \may\no\#$ is generated, with the
first and the second $\no$ symbols having a delay of $1$.
Except for these two delayed $\no$ symbols, the rest appear immediately
at time $1$.
Then comes the second  $\tick$ transition 
 which results in time advancing to $2$, and at time $2$, the 
two delayed $\no$ symbols appear.
Thus, the time stamped output sequence corresponding to the run $r$ after time $2$ 
is
$\tuple{\point, 0} \tuple{\yes, 0} \tuple{\#,0} \tuple{\yes, 1}
\tuple{\point, 1} \tuple{\yes, 1} \tuple{\point, 1} \tuple{\may,1}
\tuple{\#, 1} \tuple{\no, 2} \tuple{\no,2}$
(the second element in the tuples denotes the timestamp when the first element of
the tuple appears in the output).
\qed
\end{example}

\begin{example}
 Consider the ADB $\A_1$ in Figure~\ref{figure:example-one}.
\begin{figure}[t]
  \strut\centerline{\setlength{\unitlength}{0.00043745in}
\begingroup\makeatletter\ifx\SetFigFontNFSS\undefined%
\gdef\SetFigFontNFSS#1#2#3#4#5{%
  \reset@font\fontsize{#1}{#2pt}%
  \fontfamily{#3}\fontseries{#4}\fontshape{#5}%
  \selectfont}%
\fi\endgroup%
{\renewcommand{\dashlinestretch}{30}
\begin{picture}(4882,2691)(0,-10)
\put(844,1697){\ellipse{854}{584}}
\put(2617,299){\ellipse{854}{584}}
\put(4447,1688){\ellipse{854}{584}}
\put(844,1697){\ellipse{944}{674}}
\path(1137,999)(1632,999)(1632,684)
	(1137,684)(1137,999)
\path(2307,2664)(2802,2664)(2802,2349)
	(2307,2349)(2307,2664)
\path(3612,999)(4107,999)(4107,684)
	(3612,684)(3612,999)
\path(3612,684)(2982,369)
\path(3116.164,503.164)(2982.000,369.000)(3169.830,395.833)
\path(1137,954)(912,1449)
\path(1041.107,1309.962)(912.000,1449.000)(931.863,1260.306)
\path(2802,2484)(2805,2484)(2810,2483)
	(2820,2481)(2836,2478)(2858,2475)
	(2886,2470)(2919,2464)(2957,2457)
	(2999,2450)(3044,2442)(3090,2434)
	(3137,2425)(3183,2416)(3227,2407)
	(3270,2399)(3311,2390)(3350,2382)
	(3386,2373)(3420,2365)(3452,2357)
	(3482,2349)(3511,2341)(3538,2332)
	(3563,2324)(3588,2315)(3611,2306)
	(3635,2296)(3659,2286)(3683,2275)
	(3706,2263)(3730,2251)(3754,2237)
	(3778,2223)(3802,2207)(3828,2191)
	(3854,2172)(3882,2153)(3910,2132)
	(3939,2109)(3970,2086)(4000,2062)
	(4031,2037)(4061,2013)(4089,1990)
	(4115,1968)(4138,1949)(4157,1933)
	(4173,1920)(4197,1899)
\path(4022.026,1972.376)(4197.000,1899.000)(4101.046,2062.685)
\path(4287,1404)(3927,999)
\path(822,1989)(824,1991)(828,1995)
	(834,2003)(845,2015)(859,2030)
	(876,2049)(897,2070)(919,2093)
	(942,2116)(967,2139)(991,2162)
	(1016,2183)(1040,2202)(1065,2220)
	(1089,2237)(1113,2252)(1138,2265)
	(1164,2278)(1191,2290)(1219,2301)
	(1250,2311)(1271,2318)(1294,2325)
	(1318,2332)(1343,2338)(1369,2345)
	(1398,2351)(1428,2358)(1460,2364)
	(1495,2370)(1532,2377)(1571,2384)
	(1613,2390)(1657,2397)(1704,2404)
	(1754,2411)(1805,2419)(1857,2426)
	(1910,2433)(1964,2440)(2016,2447)
	(2066,2454)(2114,2460)(2157,2465)
	(2195,2470)(2228,2474)(2255,2478)
	(2275,2480)(2290,2482)(2300,2483)
	(2305,2484)(2307,2484)
\path(1317,684)(1319,682)(1325,678)
	(1334,670)(1348,659)(1366,644)
	(1387,628)(1411,609)(1436,589)
	(1461,570)(1485,552)(1509,534)
	(1532,518)(1553,503)(1573,490)
	(1593,478)(1612,466)(1631,456)
	(1650,446)(1670,436)(1688,428)
	(1706,420)(1726,412)(1746,404)
	(1769,396)(1793,388)(1818,379)
	(1846,370)(1877,361)(1909,352)
	(1943,342)(1978,332)(2014,322)
	(2049,312)(2081,303)(2109,296)
	(2133,289)(2150,285)(2162,282)
	(2169,280)(2172,279)
\path(12,1674)(372,1674)
\path(192.000,1614.000)(372.000,1674.000)(192.000,1734.000)
\put(777,1629){\makebox(0,0)[lb]{\smash{{\SetFigFontNFSS{8}{9.6}{\familydefault}{\mddefault}{\updefault}$l_0$}}}}
\put(4287,1629){\makebox(0,0)[lb]{\smash{{\SetFigFontNFSS{8}{9.6}{\familydefault}{\mddefault}{\updefault}$l_1$}}}}
\put(2487,234){\makebox(0,0)[lb]{\smash{{\SetFigFontNFSS{8}{9.6}{\familydefault}{\mddefault}{\updefault}$l_2$}}}}
\put(3837,729){\makebox(0,0)[lb]{\smash{{\SetFigFontNFSS{8}{9.6}{\familydefault}{\mddefault}{\updefault}1}}}}
\put(1362,729){\makebox(0,0)[lb]{\smash{{\SetFigFontNFSS{8}{9.6}{\familydefault}{\mddefault}{\updefault}2}}}}
\put(2487,2439){\makebox(0,0)[lb]{\smash{{\SetFigFontNFSS{8}{9.6}{\familydefault}{\mddefault}{\updefault}0}}}}
\put(1182,2394){\makebox(0,0)[lb]{\smash{{\SetFigFontNFSS{8}{9.6}{\familydefault}{\mddefault}{\updefault}$a$}}}}
\put(4242,1044){\makebox(0,0)[lb]{\smash{{\SetFigFontNFSS{8}{9.6}{\familydefault}{\mddefault}{\updefault}$b$}}}}
\put(1407,279){\makebox(0,0)[lb]{\smash{{\SetFigFontNFSS{8}{9.6}{\familydefault}{\mddefault}{\updefault}$c$}}}}
\end{picture}
}}
  \caption{Automaton $\A_1$ with delay blocks.}
  \label{figure:example-one}
\end{figure}
The initial state is $l_0$, which is also the only accepting state. 
Consider an accepting run of the automaton. 
The output symbols are generated in accepting runs according to the regular 
expression  sequence $(abc)^*$. 
However, the output delay associated with the transition for $a$ is $0$, 
for $b$  is $1$, and for $c$ the delay is $2$. 
As there are no explicit time advancing $\tick$ transitions, time advances 
only when the run ends in the accepting state, and then the symbols with 
delay $0$ are observed (according to their generation sequence), then the
symbols at time $1$, and so on. It can be seen that the output symbol
sequence for the ADB $\A_1$ is  $a^n b^n c^n$.
Thus, the untimed language $\ulan (\A_1 )$  is 
$\set{a^n b^n c^n \mid  n \geq 0}$. 
Including the output time stamps in the words, we
get the timed language $\lan(\A_1 )$ as  
$\set{\tuple{a,0}^n \tuple{b,1}^n \tuple{c,2}^n \mid  n \geq 0}$
(no output symbols appear after time $2$).
\qed
\end{example}

\begin{example}
Consider the ADB $\A_2$ in Figure~\ref{figure:example-two}.
\begin{figure}[t]
  \strut\centerline{\setlength{\unitlength}{0.00043745in}
\begingroup\makeatletter\ifx\SetFigFontNFSS\undefined%
\gdef\SetFigFontNFSS#1#2#3#4#5{%
  \reset@font\fontsize{#1}{#2pt}%
  \fontfamily{#3}\fontseries{#4}\fontshape{#5}%
  \selectfont}%
\fi\endgroup%
{\renewcommand{\dashlinestretch}{30}
\begin{picture}(6669,2691)(0,-10)
\put(2631,1697){\ellipse{854}{584}}
\put(6234,1688){\ellipse{854}{584}}
\put(2631,1697){\ellipse{944}{674}}
\put(435,1696){\ellipse{854}{584}}
\put(4404,299){\ellipse{854}{584}}
\path(2924,999)(3419,999)(3419,684)
	(2924,684)(2924,999)
\path(4094,2664)(4589,2664)(4589,2349)
	(4094,2349)(4094,2664)
\path(5399,999)(5894,999)(5894,684)
	(5399,684)(5399,999)
\path(2249,2619)(2429,1989)
\path(2321.859,2145.591)(2429.000,1989.000)(2437.242,2178.558)
\path(5399,684)(4769,369)
\path(4903.164,503.164)(4769.000,369.000)(4956.830,395.833)
\path(2924,954)(2699,1449)
\path(2828.107,1309.962)(2699.000,1449.000)(2718.863,1260.306)
\path(4589,2484)(4592,2484)(4597,2483)
	(4607,2481)(4623,2478)(4645,2475)
	(4673,2470)(4706,2464)(4744,2457)
	(4786,2450)(4831,2442)(4877,2434)
	(4924,2425)(4970,2416)(5014,2407)
	(5057,2399)(5098,2390)(5137,2382)
	(5173,2373)(5207,2365)(5239,2357)
	(5269,2349)(5298,2341)(5325,2332)
	(5350,2324)(5375,2315)(5398,2306)
	(5422,2296)(5446,2286)(5470,2275)
	(5493,2263)(5517,2251)(5541,2237)
	(5565,2223)(5589,2207)(5615,2191)
	(5641,2172)(5669,2153)(5697,2132)
	(5726,2109)(5757,2086)(5787,2062)
	(5818,2037)(5848,2013)(5876,1990)
	(5902,1968)(5925,1949)(5944,1933)
	(5960,1920)(5984,1899)
\path(5809.026,1972.376)(5984.000,1899.000)(5888.046,2062.685)
\path(6074,1404)(5714,999)
\path(2609,1989)(2611,1991)(2615,1995)
	(2621,2003)(2632,2015)(2646,2030)
	(2663,2049)(2684,2070)(2706,2093)
	(2729,2116)(2754,2139)(2778,2162)
	(2803,2183)(2827,2202)(2852,2220)
	(2876,2237)(2900,2252)(2925,2265)
	(2951,2278)(2978,2290)(3006,2301)
	(3037,2311)(3058,2318)(3081,2325)
	(3105,2332)(3130,2338)(3156,2345)
	(3185,2351)(3215,2358)(3247,2364)
	(3282,2370)(3319,2377)(3358,2384)
	(3400,2390)(3444,2397)(3491,2404)
	(3541,2411)(3592,2419)(3644,2426)
	(3697,2433)(3751,2440)(3803,2447)
	(3853,2454)(3901,2460)(3944,2465)
	(3982,2470)(4015,2474)(4042,2478)
	(4062,2480)(4077,2482)(4087,2483)
	(4092,2484)(4094,2484)
\path(3104,684)(3106,682)(3112,678)
	(3121,670)(3135,659)(3153,644)
	(3174,628)(3198,609)(3223,589)
	(3248,570)(3272,552)(3296,534)
	(3319,518)(3340,503)(3360,490)
	(3380,478)(3399,466)(3418,456)
	(3437,446)(3457,436)(3475,428)
	(3493,420)(3513,412)(3533,404)
	(3556,396)(3580,388)(3605,379)
	(3633,370)(3664,361)(3696,352)
	(3730,342)(3765,332)(3801,322)
	(3836,312)(3868,303)(3896,296)
	(3920,289)(3937,285)(3949,282)
	(3956,280)(3959,279)
\path(2159,1629)(2157,1627)(2151,1623)
	(2142,1617)(2127,1606)(2107,1592)
	(2083,1575)(2055,1555)(2023,1533)
	(1990,1510)(1956,1487)(1922,1464)
	(1889,1442)(1857,1421)(1827,1402)
	(1799,1384)(1772,1368)(1746,1354)
	(1722,1341)(1699,1329)(1676,1318)
	(1655,1308)(1633,1300)(1612,1291)
	(1588,1283)(1564,1276)(1540,1269)
	(1515,1263)(1490,1258)(1465,1253)
	(1438,1249)(1412,1245)(1385,1242)
	(1358,1240)(1332,1239)(1305,1238)
	(1279,1238)(1254,1238)(1229,1239)
	(1205,1241)(1182,1243)(1159,1246)
	(1138,1249)(1118,1253)(1098,1257)
	(1079,1261)(1059,1267)(1039,1273)
	(1019,1280)(999,1287)(979,1296)
	(958,1306)(936,1317)(913,1329)
	(889,1343)(864,1357)(839,1373)
	(813,1388)(790,1403)(768,1417)
	(750,1429)(719,1449)
\path(902.781,1401.835)(719.000,1449.000)(837.726,1300.999)
\path(719,1944)(722,1946)(728,1949)
	(739,1955)(754,1964)(775,1976)
	(799,1989)(826,2004)(855,2019)
	(884,2034)(912,2049)(939,2062)
	(965,2074)(990,2084)(1013,2093)
	(1036,2101)(1058,2108)(1080,2114)
	(1102,2120)(1124,2124)(1145,2128)
	(1166,2131)(1189,2133)(1212,2135)
	(1236,2137)(1261,2138)(1287,2138)
	(1314,2138)(1341,2137)(1369,2136)
	(1397,2134)(1425,2132)(1454,2128)
	(1482,2125)(1509,2121)(1537,2116)
	(1563,2111)(1589,2105)(1614,2099)
	(1639,2093)(1663,2086)(1687,2079)
	(1708,2072)(1729,2064)(1751,2056)
	(1772,2047)(1795,2037)(1818,2027)
	(1842,2015)(1868,2002)(1895,1988)
	(1924,1972)(1953,1956)(1984,1939)
	(2016,1920)(2048,1902)(2079,1884)
	(2108,1867)(2134,1851)(2156,1838)
	(2175,1827)(2204,1809)
\path(2019.423,1852.947)(2204.000,1809.000)(2082.707,1954.904)
\put(2564,1629){\makebox(0,0)[lb]{\smash{{\SetFigFontNFSS{8}{9.6}{\familydefault}{\mddefault}{\updefault}$l_0$}}}}
\put(5624,729){\makebox(0,0)[lb]{\smash{{\SetFigFontNFSS{8}{9.6}{\familydefault}{\mddefault}{\updefault}1}}}}
\put(3149,729){\makebox(0,0)[lb]{\smash{{\SetFigFontNFSS{8}{9.6}{\familydefault}{\mddefault}{\updefault}2}}}}
\put(4274,2439){\makebox(0,0)[lb]{\smash{{\SetFigFontNFSS{8}{9.6}{\familydefault}{\mddefault}{\updefault}0}}}}
\put(2969,2394){\makebox(0,0)[lb]{\smash{{\SetFigFontNFSS{8}{9.6}{\familydefault}{\mddefault}{\updefault}$a$}}}}
\put(6029,1044){\makebox(0,0)[lb]{\smash{{\SetFigFontNFSS{8}{9.6}{\familydefault}{\mddefault}{\updefault}$b$}}}}
\put(3194,279){\makebox(0,0)[lb]{\smash{{\SetFigFontNFSS{8}{9.6}{\familydefault}{\mddefault}{\updefault}$c$}}}}
\put(1169,2169){\makebox(0,0)[lb]{\smash{{\SetFigFontNFSS{8}{9.6}{\familydefault}{\mddefault}{\updefault}$\tick$}}}}
\put(1034,954){\makebox(0,0)[lb]{\smash{{\SetFigFontNFSS{8}{9.6}{\familydefault}{\mddefault}{\updefault}$\tick$}}}}
\put(359,1629){\makebox(0,0)[lb]{\smash{{\SetFigFontNFSS{8}{9.6}{\familydefault}{\mddefault}{\updefault}$l_3$}}}}
\put(6164,1629){\makebox(0,0)[lb]{\smash{{\SetFigFontNFSS{8}{9.6}{\familydefault}{\mddefault}{\updefault}$l_1$}}}}
\put(4319,234){\makebox(0,0)[lb]{\smash{{\SetFigFontNFSS{8}{9.6}{\familydefault}{\mddefault}{\updefault}$l_2$}}}}
\end{picture}
}}
  \caption{Automaton $\A_2$ with delay blocks.}
  \label{figure:example-two}
\end{figure}
The initial state is $l_0$, which is also the only accepting state. 
The                                                                                          
accepting runs of the automaton correspond to the regular expression sequence 
$\left( abc \left(\tick \tick\right)^*\right) ^*$. 
Consider a particular
run  sequence $r  = abc\, abc\, \tick  \tick abc \tick \tick \tick \tick abc$. 
Recall that time advances in ADBs either via the explicit
$\tick$ transition, or when the run ends in an accepting  state. 
In the run $r$, the first two $a$ occurrences are generated at time
0 as no $\tick$ transitions have been encountered until then;
these two $a$ occurrences appear immediately in the
output at time 0 (the associated delay is $0$ for the delay block). 
The first two $b$ occurrences are also generated at time $0$, but
appear in the output at time $1$, when the first $\tick$ transition is taken. 
The first two $c$ occurrences are generated
at time $0$, and appear in the output at time $2$, when the second $\tick$ transition is 
taken. 
Thus, after the first two $\tick$ transitions, the time-stamped output string is 
$\tuple{a, 0}^2 \tuple{b, 1}^2 \tuple{c, 2}^2 $. 
The third $a$ occurrence in $r$ is generated
at time $2$ (after the first two $\tick$ transitions), and appears immediately at time $2$. 
The third $b$ occurrence in $r$ is generated at time $2$, and appears
after a delay of one time unit, when the third $\tick$ transition is taken. 
The third $c$ occurrence in $r$ is generated at
time $2$, and appears at time $4$, when the fourth $\tick$ transition is taken. 
Continuing in this fashion, we see that
the time-stamped output corresponding to the run $r$ is 
$\tuple{a, 0}^2 \tuple{b, 1}^2 \tuple{c, 2}^2\,  \tuple{a, 2} \tuple{b, 3}  \tuple{c, 4}
\tuple{a, 6} \tuple{b, 7}  \tuple{c, 8}$. 

    Letting , $\tuple{\sigma,i}^0$ denote the empty string, the timed language of the 
automaton $\A_2$  can be observed to be
\[
\lan(\A_2) = 
\left \{ 
\begin{array}{c}
  \tuple{a, 0}^{n_0} \tuple{b, 1}^{n_0} \tuple{c, 2}^{n_0}\,
  \tuple{a, 2}^{n_2} \tuple{b, 3}^{n_2} \tuple{c, 4}^{n_2} \dots\\
   \tuple{a, 2k}^{n_{2k}} \tuple{b, 2k+1}^{n_{2k}} \tuple{c, 2k+2}^{n_{2k}}\\
   \text{such that}\\
              n_i\geq  0 \text{ for all } i, \text{  and } k  \geq 0
\end{array}
\right\}
\]
The untimed language of the automaton 
$\A_2$ can be observed to be
\[
\ulan(\A_2 ) = \left\{\left.\mspace{-10mu}
\begin{array}{l}
a^{n_0} b^{n_0} c^{n_0}  
  a^{n_1} b^{n_1} c^{n_1} \dots\\
 \qquad a^{n_k} b^{n_k} c^{n_k} 
\end{array}
\right\vert\ 
  n_i \geq 0 \text{ for all } i \text{ and  } k \geq 0
\right\}
\]
Equivalently, using the untimed language of the automaton $\A_1$ from the 
previous example,
$\ulan(\A_2 ) = \set{w_0 w_1 \dots w_n \mid w_i \in \ulan(\A_1) \text{ for } 0\leq i\leq n}$.
\qed
\end{example}

\noindent{\bf Our contributions.}
In this work, along with the introduction of ADBs, we study their expressive
power, closure properties, and the basic decision and model checking problems.
Our main results are as follows:
\begin{compactenum}[$\star$]
\item \emph{Expressive power:}
We show that the untimed languages of ADBs strictly subsume regular 
languages, and are incomparable in expressive power to context-free languages.
ADBs are able to express a simple class of languages not expressible by
context-free languages.
For example, the automata $\A_1$ of Figure~\ref{figure:example-one} has the
untimed language $\set{a^n b^n c^n \mid  n \geq 0}$ which is not context free.
\item \emph{Closure properties:} We show that untimed ADB languages
 are closed under union, concatenation,  Kleene star, and 
intersection with regular languages, but not under complementation and 
intersection with other untimed ADB languages.
\item \emph{Decision and model checking problems:}
We show that the emptiness and the membership problems are decidable for ADBs
in polynomial time, 
whereas the universality of untimed ADB languages is undecidable.
Finally, we consider the model checking problem, where an ADB is 
considered as the model generating words, and a regular language specifies
the desired set of words. 
The model checking problem is then the containment of the untimed ADB language in 
the regular language, and we show that the problem is 
PSPACE-complete.
\end{compactenum}
Thus, ADBs provide a natural and practical extension of finite state automata 
for modeling discrete time processes involving delays where the output generation 
is via a regular process.
ADBs though incomparable in expressiveness to context-free languages, enjoy 
several nice properties similar to that of context-free languages, 
for instance, ADBs admit decidable emptiness, membership and model checking 
algorithms.
We note that the delays used in ADBs are of most use in \emph{modelling}
and \emph{analysis} of
naturally occurring delays in physical systems, not in directly \emph{building}
engineering systems.
Thus, non-closure under intersection of ADBs is not a deal-breaker --- systems
are built compositionally as regular automata; delays are only used in the analysis of the composed system.

For our technical contribution we present illustrative ideas behind two of 
the key results.
(1)~We show that the balanced parenthesis language is not expressible as an
untimed ADB language.
This is a bit surprising because ADBs can express non context-free languages 
like $a^n b^n c^n$.
This inexpressibility (which establishes incomparability to context-free
languages) is a result of the fact that the maximum delay
present in an ADB  limits the ``depth''  of the nestings in the generated
word.
Consider a word $a^n \circ w \circ  b ^n$, where $\circ$ is the concatenation
operator, and $w$ is a subword.
To match the $a^n$ with the $b^n$, the ADB needs to use at least one
delay block, say of delay $k$.
Then, to express matchings in the word $w$, it can only use delay blocks
of delay \emph{strictly less than} $k$.
(2)~We can model check an untimed ADB language against  a regular
specification (\emph{i.e.} a finite state automaton).
To show this, we check for emptiness of an untimed
ADB language and a regular language complement of the
specification by constructing a
non-deterministic finite state  automaton which has an
accepting path iff the intersection of the languages is non-empty.
This automaton maintains a guess of the future executions of the
regular specification automaton for $M$ future timepoints, where
$M$ is the largest delay of the given ADB.
The guesses are verified whenever time advances.
Proofs omitted from the main paper can be found in the appendix.


\smallskip\noindent\textbf{Related Work.}
The model of timed automata~\cite{AlurD94}
is a widely studied formalism for timed systems.
Timed automata do not have any construct for delaying 
generated output symbols, and their untimed languages are regular, unlike
for ADBs.
In the task scheduling context, a model which is somewhat related has 
recently been introduced in~\cite{StiggeEGY11}, the digraph real-time task model
 (DRT).
In a DRT instance, jobs are released according to a specified directed weighted graph, where the weights on the edges denote the time that must elapse
in between the job releases.
The nodes, which correspond to jobs, are  annotated with
the worst case execution times and the deadlines for the jobs.
Thus, the deadline  sequence for when the jobs must finish differs from
the jobs release sequence 
due to the deadline and execution time ``delays''.
However, the edge weights in the DRT model are \emph{strictly} positive and
integer valued --- this implies that the ``queue'' of currently executing jobs
has length at most $N$ where $N$ can be computed from the DRT instance.
Thus, the deadline sequences form a regular set.
In ADBs, an \emph{unbounded} number of symbols can be generated, before
an output symbol is seen, thus the implicit queue is of unbounded length.
This additional power of ADBs
 can be used to model scheduling problems where a bound on 
the number  of job creations per unit time is not known a priori.
The work in~\cite{MalerP95} explores delays in circuits.
 Only delays signals which "hold" for a given time $d$ are of relevance, where 
$d$ is a given constant; signals which do not persist for at least $d$ time units
are not output. This gives regularity, allowing
the system to be modeled as a timed automaton.
We do not require a hold time, in our discrete time framework, an unlimited number  
of letters (actually all) in between two
time ticks are delayed if so specified.

The ADB model also has some similarity to computational models of
automata augmented with queues.
An ADB with $M$ delay blocks can be viewed as writing to $M$ unbounded queues at any 
given point in time, corresponding to the $M$ delays indexed by the delay 
blocks.
The work in~\cite{Ibarra00} presents decidability results for reader-writer 
systems augmented with one unbounded queue in between the reader and writer for 
communication, one pushdown stack for either the reader or writer, and 
finitely many reversal bounded counters for both.
It also shows undecidability for two finite state automata (reader and writer) 
with two unbounded communication queues in between. 
The work of~\cite{Brand1983} shows decidability results for two finite state 
automata augmented with an unbounded one way communication queue in between 
them, and mention undecidability if there are more than two communicating 
finite state automata augmented with just one unbounded queue in the system.
The work of~\cite{Bouajjani98symbolicreachability} presents symbolic 
semi-algorithms for analyzing communicating finite state automata with queue 
communication channels.
If the queue channels are \emph{lossy}, then decidability can be shown for a 
variety of problems~\cite{AbdullaJ96a}.
Model checking is usually done on systems with \emph{bounded} buffers (see 
\emph{e.g.}~\cite{FrehseM07}), and suffers from the state explosion problem with
increasing buffer size.
Our key result shows that the ADB model has the decidable model checking 
property in spite of containing any number of unbounded \emph{delay} buffers.
One key intuition behind the decidable result is the fact that messages 
corresponding to time $\Delta$ are invisible to an observer until all messages 
corresponding to the previous time-points have been output and consumed.

\section{Automata with Delay Blocks} 
\label{section:Model}
In this section we introduce our model of automata with delay blocks,
and illustrate with examples the timed and untimed languages generated
by these automata.

\smallskip\noindent{\bf Automata with delay blocks (ADB).}
A finite \emph{automata with delay blocks} (ADB) is a tuple 
$\A=(L,D,\Sigma, \delta, l_s, L_f)$ where
\begin{compactitem}
\item
  $L$ is a finite set of locations.
\item
  $l_s\in L$ is the starting location.
\item
  $L_f\subseteq L$ is the set of accepting locations.
\item
  $\Sigma$ is the set of output symbols.
\item
  $D$ is a finite set of delay blocks.
  Each delay block $d\in D$ is indexed by a natural number $t\geq 0$ 
  to indicate the
  amount of delay for the outputs.
  We denote a delay block with delay $t$ by $\block{t}$.
\item
  $\delta$ is the transition relation,  
\[
  \delta: \, \Big( \left(L\times\Sigma\times D\right)\
    \cup\
    \left(L\times\set{\epsilon,\tick} \right)\Big)
    \mapsto 2^L\]
  where  $\epsilon$ denotes the empty string, and
  $\tick\notin \Sigma$ denotes a time passage of one time unit.
  \begin{compactitem}
  \item
    A transition $\delta(l,\sigma, \block{t}) = L'$ denotes a location change
    from $l$ to a location in $L'$ non-deterministically, with $\sigma$ being output $t$ time units into the future.
  \item
     A transition $\delta(l,\epsilon) = L'$ denotes an epsilon transition
     from $l$ to a location in $L'$ non-deterministically, with no new output requirements.
   \item
      A transition $\delta(l,\tick) = L'$ denotes time advancing by one time
      unit, and the location changing from $l$ to a location in $L'$ non-deterministically.
  \end{compactitem}
\end{compactitem}
A finite string $w$ is a sequence of elements. Given a string $w$, we let $|w|$ 
denote the length of the string $w$, and let $w[i]$ denote the $i$-th element 
(starting from index 0) in the string $w$ if $|w| >  i$.
The empty string is denoted by $\epsilon$.
The concatenation of two words $w_1$  and $w_2$ is denoted $w_1 w_2$ and also
$w_1\circ w_2$.
We also use the standard regular expression constructs.
For $i\geq 0$, we denote by $\rep_i(w)$ the string $w$ repeated $i$ times (letting
$\rep_0(w) = \epsilon$),
\emph{i.e.,} $\rep_i(w)$ is the string $\underset{i \text{ occurences}}{\underbrace{w w\dots w}}$.

\smallskip\noindent\textbf{Discrete Timed words.}
A (discrete) timed word $w$ is a finite string belonging to $(\Sigma\times\nat)^*$ where
$\nat$ denotes the set of natural numbers.
We refer to the first element of the tuple $w[i]$ as the \emph{output 
symbol}
and 
the second element of the tuple $w[i]$ as the \emph{timestamp}. 
The timestamps denote the discrete time at which the first element of the 
tuples appear in the word.
We require that for $i < j < |w|$, and for 
 $w[i]=\tuple{w_i^{\sigma}, w_i^{t}}$ and  
$w[j]=\tuple{w_j^{\sigma}, w_j^{t}}$, we have $ w_i^{t} \leq  w_j^{t}$ (i.e. the 
timestamps are non-decreasing).
Given a timed word $w\in (\Sigma\times\nat)^*$, let $\untime(w)\in \Sigma^*$
be the untimed word
denoting the projection of $w$ onto $\Sigma^*$, that is, if
$w=\tuple{\sigma_0,t_0} \dots \tuple{\sigma_m,t_m}$, then 
$\untime(w) = \sigma_0\dots\sigma_m$.
Given  a timed word $w=\tuple{\sigma_0, t_0}\dots \tuple{\sigma_n, t_n}$ and
a natural number $\Delta \geq 0$, we
let $w\oplus \Delta$ be the timed word
$\tuple{\sigma_0, t_0+\Delta}\dots \tuple{\sigma_n, t_n+\Delta}$ (the time stamps
are advanced by $\Delta$ for all $w[i]$).
  Given an untimed word $w = \sigma_0 \sigma_1 \dots \sigma_m$,
  let $\kappa_t(w)$ denote the timed word
  $\tuple{\sigma_0, t} \tuple{\sigma_1, t} \dots \tuple{\sigma_m, t}$,
  that is the timed word where each output symbol of $w$ occurs at time $t$.

\smallskip\noindent\textbf{Generation of discrete timed words by ADBs.}
A generating \emph{run} $r$ of the automaton $\A$ is a finite sequence 
$l_0 \stackrel{\alpha_0}{\longrightarrow} l_1  
\stackrel{\alpha_1}{\longrightarrow}\dots l_n$ for
$\alpha_i \in \set{\epsilon,\tick}\, \cup\, \left(\Sigma\times D\right)$, such that
$l_0$ is the starting location, $l_n$ is an accepting location
  and $l_{i+1} \in \delta(l_i,\alpha_i) $ for 
$0\leq i\leq n-1$.
The automaton $\A$ \emph{outputs} or \emph{generates} 
the timed word $w$ if there exists a 
generating run
$l_0 \stackrel{\alpha_0}{\longrightarrow} l_1  
\stackrel{\alpha_1}{\longrightarrow}\dots l_n$ 
such that
$\oword(\alpha_0 \dots \alpha_n) = w$ where, 
informally,  the $\oword()$ function timestamps the output symbols according
to their generation  and  delay block times,
and arranges them in the proper timestamp order.
 A delay block $\block{j}$ 
delays the output symbol by $j$ time units.
At time $t\in \nat$ in a run, a  delay block $\block{j}$ can be considered to be feeding 
symbols to a 
queue $\q_{t+j}$ which will output the stored symbols  at time $t+j$ (there
is only one queue corresponding to an output time $t$).
A $\tick$ transition explicitly advances time by one time unit.
We also have that once the automaton stops at a final state, 
time automatically advances with symbols stored in the queues being output at the
appropriate times.
We note that time advances \emph{only} at $\tick$ transitions, or when the automaton
comes to rest at a final state.

Formally, $\oword(\alpha_0 \dots \alpha_n) $ is the unique timed word
$w$  belonging to $(\Sigma\times\nat)^*$ defined as follows.
For $\overline{\alpha} = \tuple{\alpha_0 \dots \alpha_n}$, and 
$\tuple{\alpha_i^{\sigma}, \block{{\alpha_i^t}} } \in \set{\alpha_0,\dots, \alpha_n}$,
let
$\wtime(\tuple{\alpha_i^{\sigma}, \block{\alpha_i^t}}, \overline{\alpha}) = 
\alpha_i^t + t_i$, where
$t_i$ denotes the number of occurrences of $\tick$ in
$\alpha_0,\dots, \alpha_{i-1}$.
Intuitively, the $\sigma$-element $\alpha_i^{\sigma}$  of each  
$\tuple{\alpha_i^{\sigma}, \block{\alpha_i^t}} \in \set{\alpha_0,\dots, \alpha_n}$,
appears exactly once in $w$,
with $\wtime(\tuple{\alpha_i^{\sigma}, \block{\alpha_i^t}}, \overline{\alpha}) $
denoting its timestamp.
Formally,  $\oword(\alpha_0 \dots  \alpha_n) $ is the unique timed word
$w$ such that
\begin{compactitem}
\item 
 $|w|$ is equal to the number of times symbols from $\Sigma\times \nat$ 
 appear in the string $\alpha_0 \alpha_1 \dots \alpha_n$.
\item 
For all $i < |w|$, we have
  $w[i]=
\tuple{\alpha_j^{\sigma}, \wtime(\tuple{\alpha_j^{\sigma}, \block{\alpha_j^t}}, \overline{\alpha}) }$
where $\tuple{\alpha_j^{\sigma}, \block{\alpha_j^t}} = \alpha[j]$ is such that
 for  all $k$ and $\alpha[k]  = \tuple{\alpha_k^{\sigma}, \block{\alpha_k^t}}$,
the following conditions hold.
\begin{compactenum}
\item 
If either 
\begin{compactitem}

\item 
$k < j$ and $\wtime(\alpha[k] , \overline{\alpha}) \leq  \wtime(\alpha[j] , \overline{\alpha}) $;
or
\item $k > j$ and  $\wtime(\alpha[k] , \overline{\alpha}) < \wtime(\alpha[j] , \overline{\alpha}) $,
\end{compactitem}
then for some $i' < i$, we have 
$w[i'] =  \tuple{\alpha_k^{\sigma}, \wtime(\alpha[k], \overline{\alpha}) }$.
\item
If either
\begin{compactitem}

\item 
$ k < j$ and $\wtime(\alpha[k] , \overline{\alpha}) > \wtime(\alpha[j] , \overline{\alpha}) $;
or 
\item $k > j$ and  $\wtime(\alpha[k] , \overline{\alpha}) \geq  \wtime(\alpha[j] , \overline{\alpha}) $,
\end{compactitem}
then for some $i' >  i$, we have 
$w[i'] =  \tuple{\alpha_k^{\sigma}, \wtime(\alpha[k], \overline{\alpha}) }$.
\end{compactenum}
Thus, the placement of the $\sigma$-element of each $\alpha[j]$ is in 
increasing order of the timestamps  $\wtime(\alpha[j] , \overline{\alpha})$,
and if  $\alpha[j]$ and $\alpha[k]$ result in the same timestamp,
then the relative ordering is dictated by the relative ordering
between $j$ and $k$.
\end{compactitem}

An equivalent alternative algorithmic definition of the function  $\oword()$ is
given  in Function~\ref{function:Oword} with $\stablesorttime$ being a stable
sorting function which sorts based on the second element of 
tuples \footnote{Stable sorting algorithms maintain the original 
relative ordering of elements
 with equal key values.}.

%
%
%
\begin{function}
     \SetKwInOut{Input}{Input}
    \SetKwInOut{Output}{Output}
    \Input{A string $\alpha$ from 
      $\left(\left(\Sigma\times D\right)\, \cup\, \set{\epsilon,\tick}\right)^*$}
    \Output{A timed word $w$ in $(\Sigma\times \nat)^*$}
    $w=\epsilon$\;
    $\currtime=i=j=0$\;
    \While{$i < |\alpha|$}
    {
      \Switch{$\alpha[i]$}
      {
        \uCase{$\epsilon$}
        {
          $i:= i+1$\;
        }
        \uCase{$\tick$}
        {
          $i:= i+1$\;
          $\currtime := \currtime+1$\;
        }
        \uCase{$\tuple{\sigma,\block{m}}$}
        {
          $w[j]=\tuple{\sigma, \currtime+m}$\;
          $i:= i+1$\;
          $j:= j+1$\;
        }
      }
    }
    \Return{$\stablesorttime(w)$}\;
  \caption{
$\oword$($\alpha$)}
  \label{function:Oword}
\end{function}
%
%
%


\smallskip\noindent\textbf{Output languages of ADBs.}
The timed output \emph{language} of  $\A$ is denoted by
$\lan(\A)$ where
$\lan(\A) = \set{ w \mid w\text{ is a timed word generated by } \A}$.
For a  timed language $\lan$, we let $\untime(\lan) =
\set{ \untime(w) \mid w\in \lan}$.
We also let $\ulan(\A) $ denote the untimed language 
$\untime(\lan(\A))$.
We have already illustrated languages of ADBs with two 
examples in the introduction. 
Below we present another illustrating example.

\begin{example}
  \begin{figure}[t]
    \strut\centerline{\setlength{\unitlength}{0.00043745in}
\begingroup\makeatletter\ifx\SetFigFontNFSS\undefined%
\gdef\SetFigFontNFSS#1#2#3#4#5{%
  \reset@font\fontsize{#1}{#2pt}%
  \fontfamily{#3}\fontseries{#4}\fontshape{#5}%
  \selectfont}%
\fi\endgroup%
{\renewcommand{\dashlinestretch}{30}
\begin{picture}(6372,2149)(0,-10)
\put(3372,1041){\ellipse{944}{674}}
\put(3372,1041){\ellipse{854}{584}}
\put(435,1072){\ellipse{854}{584}}
\put(5937,996){\ellipse{854}{584}}
\path(1595,1873)(2090,1873)(2090,1558)
	(1595,1558)(1595,1873)
\path(4385,613)(4880,613)(4880,298)
	(4385,298)(4385,613)
\path(1595,613)(2090,613)(2090,298)
	(1595,298)(1595,613)
\path(4385,1873)(4880,1873)(4880,1558)
	(4385,1558)(4385,1873)
\path(2945,1153)(2943,1156)(2940,1163)
	(2934,1174)(2924,1191)(2911,1214)
	(2896,1242)(2878,1273)(2859,1306)
	(2839,1340)(2818,1373)(2798,1405)
	(2779,1435)(2760,1462)(2742,1487)
	(2724,1510)(2707,1530)(2690,1548)
	(2674,1564)(2657,1578)(2640,1591)
	(2623,1603)(2604,1614)(2585,1623)
	(2565,1632)(2544,1640)(2521,1647)
	(2496,1654)(2469,1659)(2439,1665)
	(2407,1669)(2372,1674)(2336,1677)
	(2298,1681)(2260,1684)(2223,1686)
	(2188,1688)(2158,1690)(2132,1691)
	(2113,1692)(2100,1693)(2093,1693)(2090,1693)
\path(1595,1693)(1592,1693)(1585,1693)
	(1574,1692)(1556,1692)(1533,1691)
	(1505,1689)(1473,1688)(1438,1685)
	(1403,1683)(1369,1680)(1335,1677)
	(1304,1673)(1274,1669)(1247,1665)
	(1222,1660)(1199,1655)(1178,1649)
	(1159,1642)(1141,1635)(1124,1627)
	(1108,1618)(1090,1607)(1074,1595)
	(1058,1581)(1041,1566)(1025,1549)
	(1009,1529)(992,1507)(974,1482)
	(956,1455)(937,1426)(918,1396)
	(899,1365)(882,1335)(866,1307)
	(853,1284)(830,1243)
\path(865.737,1429.341)(830.000,1243.000)(970.394,1370.631)
\path(830,973)(832,970)(837,964)
	(846,954)(860,938)(878,917)
	(899,892)(924,863)(951,833)
	(978,802)(1005,771)(1032,742)
	(1057,715)(1081,690)(1103,667)
	(1124,646)(1143,628)(1161,611)
	(1179,596)(1195,583)(1211,571)
	(1228,560)(1247,549)(1267,538)
	(1287,529)(1308,521)(1330,513)
	(1354,507)(1380,502)(1407,497)
	(1437,492)(1467,488)(1497,485)
	(1525,483)(1550,481)(1570,479)
	(1583,479)(1591,478)(1595,478)
\path(5555,883)(5553,880)(5548,873)
	(5540,860)(5529,842)(5514,819)
	(5496,791)(5476,761)(5455,730)
	(5433,700)(5413,671)(5393,644)
	(5374,619)(5356,597)(5340,577)
	(5324,560)(5308,545)(5293,531)
	(5278,519)(5263,508)(5245,497)
	(5227,488)(5208,479)(5187,472)
	(5165,465)(5140,459)(5113,454)
	(5084,450)(5053,446)(5020,443)
	(4987,440)(4957,437)(4929,436)
	(4908,434)(4893,434)(4884,433)
	(4881,433)(4880,433)
\path(3963.528,824.841)(3845.000,973.000)(3850.926,783.356)
\path(3845,973)(3859,935)(3867,912)
	(3878,884)(3889,853)(3902,820)
	(3915,786)(3929,753)(3942,721)
	(3955,691)(3967,664)(3979,639)
	(3990,616)(4001,596)(4012,578)
	(4022,562)(4033,548)(4044,535)
	(4055,523)(4069,510)(4084,498)
	(4100,488)(4117,479)(4136,472)
	(4157,465)(4180,459)(4206,453)
	(4233,449)(4262,444)(4290,441)
	(4318,438)(4341,436)(4360,435)
	(4374,434)(4381,433)(4385,433)
\path(2874.781,751.735)(2945.000,928.000)(2783.066,829.120)
\path(2945,928)(2918,896)(2902,877)
	(2882,854)(2859,828)(2835,801)
	(2810,773)(2784,745)(2759,718)
	(2736,693)(2713,670)(2692,650)
	(2671,631)(2652,614)(2633,599)
	(2615,586)(2598,574)(2580,563)
	(2563,553)(2543,543)(2522,534)
	(2501,527)(2479,520)(2455,513)
	(2429,508)(2400,503)(2370,498)
	(2337,494)(2302,491)(2266,488)
	(2230,485)(2195,483)(2163,481)
	(2136,480)(2116,479)(2102,478)
	(2094,478)(2090,478)
\path(3800,1198)(3801,1202)(3804,1209)
	(3810,1223)(3818,1243)(3829,1269)
	(3841,1299)(3855,1332)(3870,1366)
	(3885,1400)(3900,1432)(3914,1461)
	(3928,1488)(3941,1513)(3954,1534)
	(3967,1553)(3979,1570)(3992,1585)
	(4004,1599)(4018,1610)(4033,1622)
	(4049,1633)(4066,1642)(4085,1650)
	(4106,1658)(4130,1664)(4156,1669)
	(4184,1674)(4215,1679)(4247,1683)
	(4279,1686)(4309,1688)(4336,1690)
	(4357,1692)(4372,1692)(4381,1693)
	(4384,1693)(4385,1693)
\path(4880,1693)(4881,1693)(4884,1693)
	(4893,1692)(4908,1692)(4930,1690)
	(4957,1688)(4988,1686)(5021,1683)
	(5054,1679)(5086,1674)(5116,1669)
	(5143,1664)(5168,1658)(5191,1650)
	(5213,1642)(5232,1633)(5252,1622)
	(5270,1610)(5285,1600)(5300,1588)
	(5315,1575)(5330,1560)(5346,1543)
	(5363,1525)(5380,1504)(5398,1481)
	(5418,1456)(5439,1429)(5460,1400)
	(5482,1369)(5504,1338)(5525,1308)
	(5545,1279)(5562,1254)(5576,1233)(5600,1198)
\path(5448.721,1312.519)(5600.000,1198.000)(5547.689,1380.383)
\path(3395,1828)(3395,1378)
\path(3335.000,1558.000)(3395.000,1378.000)(3455.000,1558.000)
\put(5780,928){\makebox(0,0)[lb]{\smash{{\SetFigFontNFSS{8}{9.6}{\familydefault}{\mddefault}{\updefault}$l_2$}}}}
\put(3215,973){\makebox(0,0)[lb]{\smash{{\SetFigFontNFSS{8}{9.6}{\familydefault}{\mddefault}{\updefault}$l_0$}}}}
\put(380,973){\makebox(0,0)[lb]{\smash{{\SetFigFontNFSS{8}{9.6}{\familydefault}{\mddefault}{\updefault}$l_1$}}}}
\put(1775,1603){\makebox(0,0)[lb]{\smash{{\SetFigFontNFSS{8}{9.6}{\familydefault}{\mddefault}{\updefault}0}}}}
\put(4565,1648){\makebox(0,0)[lb]{\smash{{\SetFigFontNFSS{8}{9.6}{\familydefault}{\mddefault}{\updefault}0}}}}
\put(4565,388){\makebox(0,0)[lb]{\smash{{\SetFigFontNFSS{8}{9.6}{\familydefault}{\mddefault}{\updefault}1}}}}
\put(1775,388){\makebox(0,0)[lb]{\smash{{\SetFigFontNFSS{8}{9.6}{\familydefault}{\mddefault}{\updefault}2}}}}
\put(4430,73){\makebox(0,0)[lb]{\smash{{\SetFigFontNFSS{8}{9.6}{\familydefault}{\mddefault}{\updefault}$c$}}}}
\put(4430,1963){\makebox(0,0)[lb]{\smash{{\SetFigFontNFSS{8}{9.6}{\familydefault}{\mddefault}{\updefault}$a$}}}}
\put(1595,1963){\makebox(0,0)[lb]{\smash{{\SetFigFontNFSS{8}{9.6}{\familydefault}{\mddefault}{\updefault}$b$}}}}
\put(1685,73){\makebox(0,0)[lb]{\smash{{\SetFigFontNFSS{8}{9.6}{\familydefault}{\mddefault}{\updefault}$d$}}}}
\end{picture}
}}
    \caption{Automaton $\A_3$ with delay blocks.}
    \label{figure:example-three}
  \end{figure}
  Consider the ADB $\A_3$  
  in Figure~\ref{figure:example-three} with the initial state
  $l_0$ which is also the only final state.
There are no explicit time advancing $\tick$ transitions in $\A_3$, 
thus, time only advances once the generating run stops at
an accepting state.
Any generating run thus occurs at time 0,
\emph{i.e.} all the transitions are taken at time 0.
Consider a generating run specified by the
transition sequence $r= a\, c\, a\,  c\,  b\,  d\,  a\,  c\,  b\,  d$.
The symbols output at time 0 are the ones corresponding to
delay blocks $\block{0}$, and these are (in order) $aabab$.
The symbols output at time 1 are the ones corresponding to
delay blocks $\block{1}$, and these are (in order) $ccc$.
The symbols output at time 2 are the ones corresponding to
delay blocks $\block{2}$, and these are (in order) $dd$.
Thus, the output timestamped string corresponding to
the transition sequence $r$ is
$\tuple{a, 0}^2 \tuple{b, 0} \tuple{a, 0}  \tuple{b, 0}\, 
 \tuple{c, 1}^3 \, 
  \tuple{d, 2}^2$. 
It can be observed that the 
  the timed language $\lan(\A_3)$ is 
  \[
  \lan(\A_3)=
\left \{ w \left\vert\ \ 
\begin{array}{l}
  w\in (\tuple{a,0} + \tuple{b,0})^* \tuple{c,1}^* \tuple{d,2}^* \text{ and }\\
    |w|_{\tuple{a,0}} = |w|_{\tuple{c,1}} \text{ and }\\
    |w|_{\tuple{b,0}} = |w|_{\tuple{d,2}}
\end{array}
\right.
\right\}
  \]
  where $|w|_{\tuple{\sigma,i}}$ denotes the number of occurrences of
  $\tuple{\sigma,i}$ in $w$.
  The untimed language of $\A_3$ is obtained by projecting the timed words in
the timed language $\lan(\A_3)$  onto the output alphabet.
We have
  \[
  \ulan(\A_3)=
  \set{ u \mid
    u\in (a + b)^* c^* d^* \text{ and }
    |u|_{a} = |u|_{c} \text{ and  }
    |u|_{b} = |u|_{d}};
  \]
  where $|u|_{\sigma}$  denotes the number of occurrences of
  $\sigma$ in $u$.
  \qed
\end{example}

\smallskip\noindent\textbf{ADB Languages obtained from  the $\oword$ operation on Regular Languages.}
 Given an ADB $\A= (L,D,\Sigma, \delta, l_s, L_f)$ with the output symbol set $\Sigma$ and $M_d$ as the 
largest index for a delay block,
there exists a corresponding regular finite automaton $\reg(\A) =
(L,\Sigma^{\circledcirc}, \delta^{\circledcirc}, l_s, L_f)$ over
 the delay-stamped symbol set $\Sigma^{\circledcirc}= \Sigma\times \set{0, \dots, M_d}\ \cup \set{\tick,\epsilon}$
such that
\begin{compactenum}
\item
$\delta^{\circledcirc}(l, \tuple{\sigma, t}) =  \delta(l,\sigma, \block{t}) $
\item 
$\delta^{\circledcirc}(l,\epsilon) = \delta(l,\epsilon)$.
\item
$\delta^{\circledcirc}(l,\tick) = \delta(l,\tick)$
\end{compactenum}
Intuitively, $\reg(\A)$ is just the ADB $\A$ ``interpreted'' as a regular automaton.
The regular language of  $\reg(\A)$ is denoted by $\rlan(\A)$.
We define  $\oword(\rlan(\A))$ to be the timed word language
$\set{\oword(w) \mid w \in \rlan(\A)}$.

\begin{proposition}
\label{prop:RegularConnection}
Let $\A$ be an ADB, and let  $\reg(\A)$ be the corresponding regular finite automaton with
the corresponding regular language  $\rlan(\A)$.
We have $\lan(\A) =\oword(\rlan(\A))$.
\end{proposition}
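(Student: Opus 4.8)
The plan is to observe that $\reg(\A)$ is, by construction, nothing more than $\A$ with its transition labels reinterpreted: the two automata share the same location set $L$, the same start location $l_s$, the same accepting set $L_f$, and the same underlying transition graph. Consequently the generating runs of $\A$ and the accepting paths of $\reg(\A)$ coincide, and the proposition will follow by pushing the common $\oword$ operation through this correspondence. Since $\lan(\A)$ is defined as the set of $\oword$-images of label sequences of generating runs, and $\oword(\rlan(\A))$ is the set of $\oword$-images of words accepted by $\reg(\A)$, it suffices to match these two underlying sets of strings.

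First I would fix the identification between the ADB transition alphabet and $\Sigma^{\circledcirc}$: a label $\tuple{\sigma,\block{t}}$ of $\A$ corresponds to the delay-stamped symbol $\tuple{\sigma,t}\in\Sigma^{\circledcirc}$, while $\epsilon$ and $\tick$ are common to both. Under this identification the three defining equations for $\delta^{\circledcirc}$ state precisely that $l'\in\delta(l,\sigma,\block{t})$ iff $l'\in\delta^{\circledcirc}(l,\tuple{\sigma,t})$, together with the analogous equivalences for $\epsilon$- and $\tick$-transitions. The next step is then immediate: a sequence $l_0 \stackrel{\alpha_0}{\longrightarrow} l_1 \dots l_n$ is a generating run of $\A$ (so $l_0=l_s$, $l_n\in L_f$, and $l_{i+1}\in\delta(l_i,\alpha_i)$ for every $i$) if and only if that same state sequence witnesses acceptance of the string $\alpha_0\dots\alpha_n$ by $\reg(\A)$. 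Reading this at the level of label strings (which sidesteps the nondeterminism, since distinct runs may carry the same labels), the set of label sequences realized by generating runs of $\A$ is exactly $\rlan(\A)$. Combining this with the definition of the generated language gives $\lan(\A)=\set{\oword(\alpha_0\dots\alpha_n)\mid l_0\stackrel{\alpha_0}{\longrightarrow}\dots l_n \text{ is a generating run of }\A}=\set{\oword(w)\mid w\in\rlan(\A)}=\oword(\rlan(\A))$.

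The only point requiring genuine care — and the step I expect to be the main, if still routine, obstacle — is that the $\oword$ operation appearing in the definition of $\lan(\A)$ (which reads ADB labels $\tuple{\sigma,\block{t}}$) and the $\oword$ operation in $\oword(\rlan(\A))$ (which reads words over $\Sigma^{\circledcirc}$) really do produce the same timed word on corresponding inputs. Inspecting Function~\ref{function:Oword}, the output timed word depends only on each output symbol $\sigma$, its delay value read from $\block{m}$ (equivalently from the stamp $t$), and the running count $\currtime$ of preceding $\tick$ symbols; all three data are preserved verbatim by the identification, and the terminating $\stablesorttime$ is likewise determined solely by these. Hence $\oword$ commutes with the relabeling, the two string sets feed identical inputs into identical computations, and the set equality $\lan(\A)=\oword(\rlan(\A))$ is established.
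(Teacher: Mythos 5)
Your proof is correct and takes essentially the same route as the paper, whose entire proof is the one-liner ``by definition'': you have simply unfolded that remark by making explicit the run--path correspondence between $\A$ and $\reg(\A)$ and the fact that $\oword$ acts identically on the two label representations. Nothing in your argument diverges from the intended reasoning; it is a careful expansion of it.
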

\begin{proof}
By definition.
\qed
\end{proof}

\section{Expressiveness of Untimed Languages of ADBs}

In this section we compare the expressive power untimed languages of ADBs
against regular and context free languages.
Given a regular or pushdown automaton $\A$ without timed delay blocks, we let 
$\ulan(\A)$ be the
language of $\A$.
First we show that ADBs can be considered to be a generalization of regular automata.

\begin{proposition}[\tnote{Generalization of regular automata}]
\label{prop:GenReg}
Let $\A$ be a regular automaton without timed delay blocks.
Consider the ADB $\A'$ obtained from $\A$ such that 
(1)~$\A'$ has the same set of
locations,  set of accepting locations and starting
location 
as $\A$; and
(2)~the transition function $\delta^{\A'}$ is such that
$\delta^{\A'}(l,\tuple{\sigma, 0}) = \delta^{\A}(l,\sigma)$ and
$\delta^{\A'}(l,\epsilon) = \delta^{\A}(l,\epsilon)$.
Then, $\ulan(\A) = \ulan(\A')$.
\end{proposition}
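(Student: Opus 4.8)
The plan is to reduce the statement to Proposition~\ref{prop:RegularConnection} together with a label-preserving bijection between the accepting runs of $\A$ and the generating runs of $\A'$. First I would observe that $\A'$ contains no $\tick$ transitions: every symbol edge of $\A$ becomes a $\block{0}$ edge and every $\epsilon$ edge stays an $\epsilon$ edge. Hence every transition label $\alpha_i$ of a generating run of $\A'$ lies in $\set{\epsilon}\cup(\Sigma\times\set{\block{0}})$. By the definition of $\delta^{\A'}$, a sequence $l_0 \stackrel{\alpha_0}{\longrightarrow} l_1 \stackrel{\alpha_1}{\longrightarrow}\dots l_n$ is a generating run of $\A'$ if and only if the sequence obtained by replacing each label $\tuple{\sigma,\block{0}}$ by $\sigma$ (and keeping each $\epsilon$) is an accepting run of $\A$; this is immediate from $\delta^{\A'}(l,\tuple{\sigma,0}) = \delta^{\A}(l,\sigma)$ and $\delta^{\A'}(l,\epsilon) = \delta^{\A}(l,\epsilon)$. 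This yields the desired bijection.

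Second, I would compute $\oword$ along such a run of $\A'$. Since there is no $\tick$ transition, the variable $\currtime$ in Function~\ref{function:Oword} never increments and stays $0$; each delay block is $\block{0}$, so each output symbol $\sigma$ is placed at timestamp $0+0=0$. Thus, before sorting, $w$ equals $\kappa_0(\sigma_0\dots\sigma_m)$, where $\sigma_0\dots\sigma_m$ is the sequence of output symbols read off the run in generation order. Because all timestamps are equal, the stable sort $\stablesorttime$ leaves $w$ unchanged, so $\oword(\alpha_0\dots\alpha_{n-1}) = \kappa_0(\sigma_0\dots\sigma_m)$ and therefore $\untime(\oword(\alpha_0\dots\alpha_{n-1})) = \sigma_0\dots\sigma_m$, which is exactly the word accepted by the corresponding run of $\A$.

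Combining the two steps with Proposition~\ref{prop:RegularConnection}, I would write $\ulan(\A') = \untime(\lan(\A')) = \untime(\oword(\rlan(\A')))$. By the run bijection, as the run ranges over all generating runs of $\A'$ the untimed word $\sigma_0\dots\sigma_m$ ranges exactly over the words accepted by $\A$, i.e. over $\ulan(\A)$. Hence $\ulan(\A') = \ulan(\A)$ follows by mutual inclusion.

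The argument involves no genuine obstacle; it is a direct unfolding of the definitions. The only point requiring care — and the closest thing to a hard part — is the explicit verification that $\stablesorttime$ preserves the generation order when every timestamp is $0$, so that $\oword$ collapses to the timestamping map $\kappa_0$. This is precisely the property that makes a delay-$0$ ADB behave as an ordinary regular automaton.
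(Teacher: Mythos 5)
Your proposal is correct and follows essentially the same route as the paper's proof: both rest on the run correspondence between $\A$ and $\A'$ together with the observation that, absent $\tick$ transitions and with only $\block{0}$ blocks, the output symbols appear exactly in generation order. Your version merely unfolds this in more detail, by explicitly tracing Function~\ref{function:Oword} (noting $\currtime$ stays $0$ and the stable sort is the identity on equal keys) and routing the conclusion through Proposition~\ref{prop:RegularConnection}.
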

\begin{proof}
The ADB $\A'$ has no tick transitions.
Thus since $\A'$ only has delay blocks of duration 0, given a run $r$ of $\A'$, 
the symbols from $\Sigma$ are output in the order in which they are encountered in
the run $r$.
By construction, there is a one to one correspondence between the runs of $\A'$ and 
$\A$ such that the output symbol sequence in a run $r$ of $\A'$ is the same as the
output symbol sequence in the corresponding run of $\A$.
Hence, $\ulan(\A) = \ulan(\A')$.
\qed
\end{proof}

We next show that the expressive power of untimed languages of ADBs is 
incomparable to that of context free languages.

\begin{proposition}
\label{proposition:NestedCFGLan}
Let $\ulan^{\dagger}$ be the untimed language 
\[
\ulan^{\dagger} = \left\{
\begin{array}{c}
a^{n_1} \# a^{n_2} \# \dots \# a^{n_m} \# b^{n_m} \#
 b^{n_{m-1}} \# \dots \#  b^{n_2} \# b^{n_1}\\
 \text{such that}\\
 m \geq 0 \text{ and }
n_i \geq 1 \text{ for } 1\leq i\leq m
\end{array}
\right\}.
\]
There is no ADB $\A$ such that $\ulan(\A) = \ulan^{\dagger}$
\end{proposition}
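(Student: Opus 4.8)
The plan is to assume, for contradiction, that some ADB $\A$ with largest delay index $M_d$ and location set $L$ satisfies $\ulan(\A)=\ulan^{\dagger}$, and to show that $\A$ can only generate words whose ``nesting depth'' is bounded in terms of $M_d$. By Proposition~\ref{prop:RegularConnection} every word of $\ulan^{\dagger}$ equals $\untime(\oword(w))$ for some $w$ in the regular language $\rlan(\A)$, so I can reason about the finite automaton $\reg(\A)$ together with the effect of $\oword$. The first step is to record the shape of any output: since $\oword$ stably sorts the generated symbols by timestamp, every output word is a concatenation of consecutive \emph{time-layers} $L_0 L_1 \cdots L_K$, where $L_t$ collects exactly the symbols with timestamp $t$, and within each $L_t$ the symbols appear in run order. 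A symbol emitted by a block $\block{d}$ after exactly $c$ preceding $\tick$ transitions lands in layer $L_{c+d}$.

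Next I would pick a hard instance: a word $u\in\ulan^{\dagger}$ whose $m$ exceeds the bound I intend to prove and whose counts $n_1,\dots,n_m$ are all huge, and fix a run $w\in\rlan(\A)$ with $\untime(\oword(w))=u$. Because $\rlan(\A)$ is regular over the fixed number $|L|$ of states, a word in which each $n_i$ is large must be produced using cycles of $\reg(\A)$; and because the counts are mutually independent in $\ulan^{\dagger}$, I would extract (by iterated pumping together with a Ramsey-type argument over runs) one \emph{independent} pumpable cycle $C_i$ per matched pair $(A_i,B_i)=(a^{n_i},b^{n_i})$, so that pumping $C_i$ changes only $n_i$ and keeps the word in $\ulan^{\dagger}$. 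Pumping $C_i$ must add equal numbers of $a$'s to the front block $A_i$ and $b$'s to the rear block $B_i$, so $C_i$ emits, each iteration, symbols destined for both blocks. I would then argue that $C_i$ can contain no $\tick$: if it did, the repeated $a$'s (resp.\ $b$'s) would receive strictly increasing timestamps across iterations and would therefore either be interleaved in the output with their own partners (destroying the block shape $a^{n_i}\cdots b^{n_i}$) or be forced into a common layer only for boundedly many iterations (bounding $n_i$) --- either way contradicting unboundedness. Hence each $C_i$ is tick-free, all of $A_i$ lands in a single layer and all of $B_i$ in a single layer, with the $A_i$-layer strictly below the $B_i$-layer, the gap being the delay difference $d^{(i)}_b-d^{(i)}_a\le M_d$.

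Finally I would convert the nesting into a bound on $m$. In the output $A_1<\cdots<A_m<B_m<\cdots<B_1$, so the layer intervals $I_i=[\mathrm{layer}(A_i),\mathrm{layer}(B_i)]$ satisfy $I_{i+1}\subseteq I_i$. If two consecutive intervals coincided, pairs $i$ and $i{+}1$ would share both their $a$-layer and their $b$-layer; since order within a layer equals run order, this would force the run to emit $A_i$ before $A_{i+1}$ yet $B_{i+1}$ before $B_i$ with $C_i,C_{i+1}$ independent, which is impossible for parallel (non-nested) cycles. Thus the intervals are strictly nested and their widths strictly decrease; as each width is a positive integer at most $M_d$, we get $m\le M_d$, contradicting $m>M_d$. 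This rules out $\ulan(\A)=\ulan^{\dagger}$.

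The main obstacle is the presence of $\tick$ transitions: a priori they create unboundedly many time-layers, which could seem to permit arbitrary nesting depth. The heart of the argument is therefore the lemma that an unboundedly pumpable matched pair cannot be produced by a cycle containing a $\tick$, so that the bounded delay $M_d$ --- not the number of layers --- governs the achievable nesting depth. Making ``one independent pumpable cycle per pair'' precise, so that the counts stay mutually free and the cycles are parallel rather than count-tying nested cycles, is the other delicate point, handled by a pigeonhole argument over runs of the fixed finite automaton $\reg(\A)$.
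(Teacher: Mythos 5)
Your proposal is correct modulo the same level of informality as the paper's own proof, and it shares the paper's skeleton: argue by contradiction, pass to the underlying finite automaton $\reg(\A)$ (Proposition~\ref{prop:RegularConnection}), show that any cycle responsible for an unbounded matched pair must be $\tick$-free, and then let the bounded delays cap the nesting depth. Where you genuinely differ is in the final counting lemma. The paper works SCC-by-SCC: it decomposes an accepting path into non-overlapping cycles, attaches to each cycle the signature $(\Delta_a,\Delta_b)$ of its maximum $a$-delay and minimum $b$-delay, rules out two cycles sharing a signature (they would emit the ``parallel'' word $a^{n_{k_1}}\#a^{n_{k_2}}\#b^{n_{k_1}}\#b^{n_{k_2}}$ instead of the nested one), and concludes there are at most $M^2$ cycles per SCC, hence nesting depth at most $M^2+K+1$ over $K$ SCCs. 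You instead reason globally on the time-layer structure of $\oword$: layer intervals satisfy $I_{i+1}\subseteq I_i$ by output order, inclusion is strict by your run-order-within-a-layer argument (which is the same combinatorial core as the paper's signature-distinctness step), and strictly nested integer intervals of positive width at most $M_d$ force $m\le M_d$. Your route buys a sharper, linear bound, because you exploit the nesting order rather than mere pairwise distinctness of signatures, and it dispenses with the SCC bookkeeping; the paper's route is coarser but demands slightly less structure about where every symbol of a block lands. On that last point, one claim of yours should be weakened: $\tick$-freeness of $C_i$ does not imply that \emph{all} of $A_i$ lies in a single layer --- a $\tick$-free cycle may emit $a$'s of the same block at several distinct delays, so a block can straddle adjacent layers. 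The fix is to define $I_i$ from the \emph{pumped} symbols only, namely as $[\,c_i+\max_a d,\; c_i+\min_b d'\,]$, where $c_i$ is the tick count at which $C_i$ executes and the extrema range over the delays of $C_i$'s transitions feeding blocks $A_i$ and $B_i$; these intervals are still nested, still have positive width at most $M_d$, and your strictness argument still applies to them, so the bound $m\le M_d$ survives. The genuinely delicate step --- extracting, for every block with huge $n_i$, a pumpable cycle whose pumping changes only that pair --- is left as a pigeonhole sketch in your plan, but the paper is no more rigorous at the corresponding point.
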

\begin{proof}
Intuitively, the proof below shows that the maximum delay present in an ADB limits
the ``depth'' of the nestings in the generated word. 

We prove by contradiction.
Let $\A= (L,D,\Sigma, \delta, l_s, L_f)$
 be any ADB with $\set{a,b,\#}$ as the set of output symbols such that
 $\ulan(\A) = \ulan^{\dagger}$.
The automaton $\A$ has a natural graph representation with the nodes in the graph
corresponding to locations and the edges corresponding to $\delta$.
Let $G = \set{S_1, \dots, S_p}$ denote the set of strongly connected 
components (SCCs) of $\A$ which are reachable from $l_s$,
 can reach a final location, and which contain at least one 
$a$-edge.
Observe that every SCC $S_i$ in $G$:
\begin{compactenum}
\item 
  Must have a $b$-edge, and
\item 
  Cannot have a $\tick$-edge (for then an $a$ can be made to appear after a
  $b$).
\item 
  Every $b$-delay in the SCC must be greater than every $a$-delay
  (otherwise an $a$ can be made to appear after a
  $b$ since there are no $\tick$ edges).
\end{compactenum}

Consider an SCC $S$ from $G$.
Let $p = l_0 l_1 \dots l_x $ be any path from $l_s$
to an accepting location which passes through $S$.
For $0\leq i\leq j\leq x$, let $p[i]$ denote the state $l_i$ and
$p[i..j]$ the sub-path $l_i\dots l_j$.
 Let
$p[\alpha..\beta]$ be a (maximal) sub- path which lies entirely within $S$, \emph{i.e.}
such that (1)~$\alpha=0$ or  $p[\alpha-1]\notin S$, and
(2)~$\beta=x$ or $p[\beta+1]\notin S$.
Let $C_{\alpha}^1,\dots, C_{\alpha}^q$ be all the non-overlapping cycles (each
$C_{\alpha}^k$ has only one cycle)  in
$ p[\alpha..\beta]$ such that the $C_{\alpha}^j$ occurs after 
 $C_{\alpha}^i$ for $ j >i$.
The sub-path $p[\alpha..\beta]$ can then be written as
\[
\begin{array}{r}
p[\alpha]\, p[\alpha+1] \dots p[e_1] \, {(C_{\alpha}^1)}^{h_1}\,  p[f_1]\dots p[e_2]\,
{(C_{\alpha}^2)}^{h_2} \dots \\
{(C_{\alpha}^q)}^{h_q} \, p[f_q] \dots p[\beta]
\end{array}
\]
for some $e_1, f_1,h_1 \dots e_q, f_q,h_q $ such that the following subpaths
of $ p[\alpha..\beta]$ are all cycle free:
(1)~the path $p[\alpha]\, p[\alpha+1] \dots p[e_1]$;
(2)~the paths $p[f_{k-1}] \dots p[e_k]$  for all $2\leq k\leq q$; and
(3)~the path $p[f_q] \dots p[\beta]$.
 \begin{figure}[t]
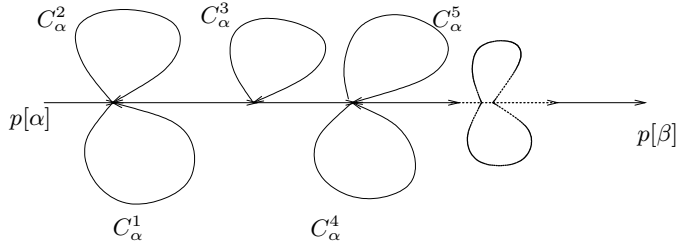

\hspace*{-7mm}
    \strut\centerline{\input Figures/delaycfg.eepic}
    \caption{Structure of sub-path $p[\alpha..\beta]$.}
    \label{figure:DelayCFG}
  \end{figure}
The structure of the subpath $p[\alpha..\beta]$ is illustrated in
Figure~\ref{figure:DelayCFG}.
Thus, $C_{\alpha}^1$  is the first cycle in the $p[\alpha..\beta]$ sub-path,
$C_{\alpha}^2$ is the next cycle 
and so on.
Observe that we may have $C_{\alpha}^i = C_{\alpha}^j$ for $i\neq j$.

Observe that each cycle $C_{\alpha}^i $ must have an $a$-transition
(otherwise two consecutive $\#$ symbols can be made to appear in the output).
Note  that for any path $p[\alpha .. \beta]$ and any such subcycle of the path,
we must have that the number of $a$-edges in the subcycle equals the
number of $b$-edges
(otherwise we can ``pump'' the cycle to get $a$'s that are unmatched by $b$'s;
as the matching must occur inside the the same cycle).
Thus, each $C_{\alpha}^i$  can generate some $a^k .. b^k$ pair.
Consider any $ C_{\alpha}^i $.
Let the maximum $a$-delay be $\Delta_a^i$ in $ C_{\alpha}^i $, and let the minimum
$b$-delay be $\Delta_b^i$.
For any $j\neq i$, we must have one of the following to hold:
\begin{compactenum}
\item 
  Either  $\Delta_a^i = \Delta_a^j $ and $\Delta_b^i \neq  \Delta_b^j $; 
\item 
   Or $\Delta_b^i = \Delta_b^j $ and $\Delta_a^i \neq  \Delta_a^j $; or
\item $\Delta_b^i \neq  \Delta_b^j $ and $\Delta_a^i \neq  \Delta_a^j $.
\end{compactenum}
For otherwise, 
if $\Delta_a^i = \Delta_a^j $ and $\Delta_b^i = \Delta_b^j $ 
the two cycles will generate an untimed subpart
$a^{n_{k_1}} \# a^{n_{k_2}} \# b^{n_{k_1}} \# b^{n_{k_2}}$ (with enough
pumping),  when they should
be generating the string with the $b$'s switched (\emph{i.e.} the string
$a^{n_{i_1}} \# a^{n_{i_2}} \# b^{n_{i_2}} \# b^{n_{i_1}}$).
Thus, in any accepting path $p[\alpha..\beta]$ via the SCC $S$,
we can only have at most  $M^2$ subcycles  $C_{\alpha}^1,\dots, C_{\alpha}^q$,
\emph{i.e.} $q \leq M^2$ (as there are only at most $M^2$ distinct values of
$\tuple{\Delta_a,  \Delta_b}$ tuples.

Since the maximum delay is finite (say $M-1$), the cycles in $S$ can generate at most
$M^2$ pairs of unbounded numbers of $a$'s and $b$'s.
That is, the SCC $S$ cannot generate the untimed language
\[
\left\{
\begin{array}{c}
a^{n_1} \# a^{n_2} \# \dots \# a^{n_{M^2+1}} \# b^{n_{M^2+1}} \#
 b^{n_{M^2}} \# \dots \#  b^{n_2} \# b^{n_1} \\
\text{such that}\\
n_i \geq 1 \text{ for } 1\leq i\leq M^2+1
\end{array}
\right\}.
\]
Hence, if there are $K$ SCCs in $\A$, then $\A$ cannot generate the untimed
language
\[
\left\{
\begin{array}{c}
a^{n_1} \# a^{n_2} \# \dots \# a^{n_{M^2+K+1}} \# b^{n_{M^2+K+1}} \#
 b^{n_{M^2+K}} \# \dots \#  b^{n_2} \# b^{n_1} \\
\text{such that}\\
n_i \geq 1 \text{ for } 1\leq i\leq M^2+K+1 
\end{array}
\right\}.
\]
Hence, it follows that there does not exist an ADB $\A$ such that 
$\ulan(\A)=\ulan^{\dagger}$.
\qed
\end{proof}

\begin{proposition}[\tnote{Incomparability with Pushdown Automata}]
\label{prop:Incomparable}
The following assertions hold.
\begin{compactenum}
\item 
There exists an ADB $\A$ such that $\ulan(\A)$ is not context free.

\item 
There exists a visibly pushdown automata $\A$ such that
there is no ADB $\A'$ with $\ulan(\A)= \ulan(\A')$. 
\end{compactenum}
\end{proposition}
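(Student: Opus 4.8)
The plan is to prove the two parts separately, reusing the running examples and the preceding inexpressibility result.

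For part (1), I would simply appeal to the ADB of Figure~\ref{figure:example-one}, whose untimed language is $\set{a^n b^n c^n \mid n \geq 0}$. It is a textbook fact that this language is not context free: if it were, applying the pumping lemma for context-free languages to $a^N b^N c^N$ (with $N$ the pumping length) yields a decomposition $uvwxy$ with $|vwx|\leq N$, so that $vwx$ touches at most two of the three letter blocks; pumping $v$ and $x$ then destroys the equality between the count of the untouched letter and the others, a contradiction. This immediately exhibits an ADB whose untimed language is not context free.

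For part (2), the idea is to exhibit a \emph{visibly} pushdown language that is not an ADB language. The natural candidate is $\ulan^{\dagger}$ of Proposition~\ref{proposition:NestedCFGLan}, which is already known not to be expressible by any ADB, so it remains only to show that $\ulan^{\dagger}$ is visibly pushdown. I would fix the partition of the alphabet into calls $\set{a}$, returns $\set{b}$, and internal symbols $\set{\#}$, and build a VPA that exploits the fact that in $\ulan^{\dagger}$ the blocks $a^{n_1},\dots,a^{n_m}$ are matched against $b^{n_m},\dots,b^{n_1}$ in perfectly nested (last-in-first-out) order --- exactly the discipline that a visible stack enforces. Each $a$ pushes a symbol and each $b$ pops one; the first $a$ of each block (detected in the finite control, since the preceding symbol is either the start or a $\#$) is pushed as a distinguished block-start marker, and the very first $a$ of the whole word pushes a distinguished bottom marker. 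On the return side, a $b$-block is forced to pop plain symbols until exactly one block-start marker is popped, at which point an internal $\#$ (or the end of input) must follow; the automaton accepts only if the last pop of the entire run removes the bottom marker, certifying that the stack is empty and hence that every $a$-block has been matched. The verification I would then carry out is that this VPA accepts exactly $\ulan^{\dagger}$: the nesting discipline of the visible stack guarantees that the $i$-th $b$-block pops precisely the matching $a$-block, enforcing the reversed matching of counts; the block-start markers force the $\#$ separators to occur at the correct stack heights; and the bottom marker enforces global count balance, ruling out spurious words such as $a^2\#a^2\#b^2$ (which terminate with a non-empty stack in a non-accepting control state). Combining this with Proposition~\ref{proposition:NestedCFGLan} yields a visibly pushdown language with no equivalent ADB.

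The main obstacle I anticipate is precisely the empty-stack-at-end requirement: a visibly pushdown automaton accepts by control state, not by empty stack, so I cannot directly test that all opened calls were eventually matched. The device of a distinguished bottom marker --- pushed only for the first $a$ and popped only by the last $b$ --- is what lets the finite control detect the moment the stack becomes empty and accept there, and getting this marker bookkeeping correct (together with the block-start markers) is the delicate part of the construction. A minor secondary point is distinguishing the ``middle'' $\#$ from the $a$-side and $b$-side separators; this is resolved either by a nondeterministic guess of the switch, or deterministically by encoding a one-symbol lookahead in the control state (a $\#$ is an $a$-side separator if followed by a call and the middle separator if followed by a return).
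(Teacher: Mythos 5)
Your proof is correct and follows essentially the same route as the paper's: part (1) via the ADB $\A_1$ of Figure~\ref{figure:example-one}, whose untimed language $\set{a^n b^n c^n \mid n \geq 0}$ is not context free, and part (2) by combining Proposition~\ref{proposition:NestedCFGLan} with the observation that $\ulan^{\dagger}$ is a visibly pushdown language. The paper merely asserts these two facts without detail, so your pumping-lemma argument and your explicit VPA construction (block-start and bottom markers to detect matched blocks and stack emptiness in the finite control) simply supply the routine verifications the paper leaves implicit.
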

\begin{proof}
For the first part of the proposition, consider the ADB $\A_1$ of 
Figure~\ref{figure:example-one}.
The untimed language of $\A_1$ is $\set{a^n b^n c^n \mid n\geq 0}$ which
is not context free.
\\
The second part of the theorem follows from Proposition~\ref{proposition:NestedCFGLan},
noting that there exists a visibly pushdown automaton which generates the
language $\ulan^{\dagger}$.
\qed
\end{proof}

Proposition~\ref{prop:Incomparable} shows that there is a tradeoff between the
expressive power of ADBs and pushdown automata. 
On one hand, ADBs are not restricted to matching only once (i.e., they can generate
$a^n b^n c^n$), but on the other they lose the infinite nesting
capability of pushdown automata (e.g., in the language $\ulan^{\dagger}$ of
Proposition~\ref{proposition:NestedCFGLan}).

\begin{theorem}[\tnote{Expressive power of ADBs}]
The following assertions hold: 
(1)~The class of untimed languages of ADBs strictly subsumes the class of 
regular languages.
(2)~The class of untimed languages of ADBs is incomparable in expressive power
as compared to the class of context-free languages.
\end{theorem}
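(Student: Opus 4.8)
The plan is to derive both assertions as direct corollaries of the three propositions already established in this section, so the theorem statement itself requires only assembling the pieces; no fresh construction or pumping argument is needed at this stage.

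First I would handle assertion~(1). The subsumption direction is immediate from Proposition~\ref{prop:GenReg}: given any regular language $L$, take a regular automaton $\A$ with $\ulan(\A)=L$ and form the ADB $\A'$ that replaces each transition by a zero-delay block and keeps the $\epsilon$-transitions; the proposition gives $\ulan(\A')=\ulan(\A)=L$, so every regular language is an untimed ADB language. For strictness I would exhibit a single ADB language that is not regular. Proposition~\ref{prop:Incomparable}(1) already supplies one, namely $\ulan(\A_1)=\set{a^n b^n c^n \mid n\geq 0}$, which is not even context-free (by the standard pumping lemma for context-free languages) and hence not regular. Thus the inclusion of the regular languages in the untimed ADB languages is proper.

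Next I would prove assertion~(2) by showing that each class contains a language absent from the other. For the direction ``some ADB language is not context-free'', I would again invoke Proposition~\ref{prop:Incomparable}(1): the language $\set{a^n b^n c^n \mid n\geq 0}$ is an ADB language that is not context-free. For the reverse direction ``some context-free language is not an ADB language'', I would use Proposition~\ref{proposition:NestedCFGLan}, which establishes that the nested language $\ulan^{\dagger}$ is not expressible by any ADB, together with the observation recorded in Proposition~\ref{prop:Incomparable}(2) that $\ulan^{\dagger}$ is generated by a visibly pushdown automaton and is therefore context-free. Since neither class is contained in the other, the two classes are incomparable.

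The theorem therefore follows purely by combining the cited results, and there is no genuine obstacle at this final step. The real difficulty has already been absorbed into Proposition~\ref{proposition:NestedCFGLan}: the delicate part is arguing that a finite maximum delay $M$ bounds the achievable nesting depth, so that any fixed ADB can correctly match at most $M^2$ pairs $a^{n_i}\dots b^{n_i}$ per SCC before the delay budget is exhausted and some $a$ is forced to appear after a $b$. Once that inexpressibility result is in hand, the present theorem is a one-line consequence, and I would simply cite the three propositions in the order above.
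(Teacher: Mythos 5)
Your proposal is correct and follows exactly the paper's own route: the paper's proof of this theorem is just the one-line citation of Proposition~\ref{prop:GenReg} (regular languages are ADB languages via zero-delay blocks) and Proposition~\ref{prop:Incomparable} (whose two parts give $a^n b^n c^n$ as a non-context-free ADB language and $\ulan^{\dagger}$ as a context-free language that is not an ADB language, the latter resting on Proposition~\ref{proposition:NestedCFGLan}). Your assembly of the pieces, including the observation that the real work lives in the inexpressibility result for $\ulan^{\dagger}$, matches the paper's argument.
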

\begin{proof}
The results follow from Propositions~\ref{prop:GenReg} and~\ref{prop:Incomparable}.
\qed
\end{proof}




\section{Closure Properties} 
In this section we will study the closure properties of 
timed and untimed languages of ADBs with respect to 
operations like union, intersection, complement, concatenation
and Kleene star.

\begin{proposition}[\tnote{Closure under union}]
\label{prop:UnionClosure}
Let $\A_1$ and $\A_2$  be  ADBs.
There exists an ADB $\A$ such that $\lan(\A) = \lan(\A_1) \cup \lan(A_2)$.
\end{proposition}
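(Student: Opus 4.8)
The plan is to show closure under union by a standard disjoint-union construction, adapted to the ADB setting. Given $\A_1 = (L_1, D_1, \Sigma, \delta_1, l_s^1, L_f^1)$ and $\A_2 = (L_2, D_2, \Sigma, \delta_2, l_s^2, L_f^2)$, I would first assume without loss of generality that $L_1 \cap L_2 = \emptyset$ (otherwise rename locations). Then I would build $\A = (L, D, \Sigma, \delta, l_s, L_f)$ with a fresh starting location $l_s \notin L_1 \cup L_2$, taking $L = L_1 \cup L_2 \cup \set{l_s}$, $D = D_1 \cup D_2$, and accepting set $L_f = L_f^1 \cup L_f^2$. The transition relation $\delta$ is the union of $\delta_1$ and $\delta_2$ on the old locations, augmented with two $\epsilon$-transitions from the new start: $\delta(l_s, \epsilon) = \set{l_s^1, l_s^2}$.

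The key correctness claim is that $\lan(\A) = \lan(\A_1) \cup \lan(\A_2)$, and I would argue it by establishing a bijection between generating runs. Every generating run of $\A$ must begin $l_s \stackrel{\epsilon}{\longrightarrow} l_s^i$ for some $i \in \set{1,2}$ (these are the only transitions out of $l_s$), after which it stays entirely within $L_i$, since the location sets are disjoint and $\delta$ restricted to $L_i$ coincides with $\delta_i$. Thus each run of $\A$ corresponds to a leading $\epsilon$-step followed by a generating run of $\A_i$, and conversely. The leading $\epsilon$-transition contributes no symbol in $\Sigma \times D$ and no $\tick$, so by the definition of $\oword$ it does not affect any timestamp (the count $t_i$ of $\tick$ occurrences is unchanged, and $\epsilon$-elements are skipped in Function~\ref{function:Oword}). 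Hence the timed word generated by the run of $\A$ equals exactly the timed word generated by the corresponding run of $\A_i$, giving $\oword$-level equality and therefore language equality.

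The one place requiring a little care is confirming that prepending the $\epsilon$-step is genuinely harmless at the level of the $\oword$ function, since timestamps depend on the positions of $\tick$ symbols relative to each output symbol. Because the inserted $\alpha_0 = \epsilon$ precedes all original actions and carries no $\tick$, every $\wtime(\cdot)$ value is preserved, and the stable-sort ordering on equal timestamps is also preserved as the relative order of the genuine output symbols is untouched. I do not expect this to be the main obstacle so much as the only nontrivial verification; the combinatorial construction itself is routine. Once the timed equality $\lan(\A) = \lan(\A_1) \cup \lan(\A_2)$ is established, the untimed version follows immediately by applying $\untime$ to both sides.
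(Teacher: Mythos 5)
Your construction is exactly the one the paper uses: a fresh initial location with $\epsilon$-transitions into disjoint copies of $\A_1$ and $\A_2$, so the proposal is correct and takes essentially the same approach. The paper states this in a single line, whereas you additionally verify that the leading $\epsilon$-step leaves $\oword$ (timestamps and stable-sort order) unchanged --- a worthwhile detail, but the same proof.
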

\begin{proof}
The ADB $\A$ has a special initial states, and two $\epsilon$ transitions from this
initial state to copies of $\A_1$ and $\A_2$.
\qed
\end{proof}

\begin{proposition}[\tnote{Closure under intersection with regular languages}]
\label{prop:IntReg}
Given an untimed ADB language $\ulan$ and a regular language $\rlan$, the 
language $\ulan \cap \rlan$ is an untimed ADB language.
\end{proposition}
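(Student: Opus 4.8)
The plan is to reduce the statement to a closure property of ordinary regular languages over the delay-stamped alphabet, using the bridge supplied by Proposition~\ref{prop:RegularConnection}. Fix an ADB $\A$ with $\ulan(\A)=\ulan$, a deterministic finite automaton $B=(Q,\Sigma,\delta_B,\iota,F)$ recognizing the given regular language $\rlan$, and let $M$ be the largest delay-block index occurring in $\A$. Since $\lan(\A)=\oword(\rlan(\A))$ and hence $\ulan=\untime(\oword(\rlan(\A)))$, it suffices to produce a \emph{regular} language $R'\subseteq(\Sigma^{\circledcirc})^*$ with $\untime(\oword(R'))=\ulan\cap\rlan$. I will take $R'=\set{w\in\rlan(\A)\mid \untime(\oword(w))\in\rlan}$; a routine check gives $\untime(\oword(R'))=\untime(\oword(\rlan(\A)))\cap\rlan=\ulan\cap\rlan$. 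Once $R'$ is known to be regular, I reinterpret it as an ADB exactly as the inverse of the $\reg(\cdot)$ construction: build an NFA for $R'$ and read each letter $\tuple{\sigma,t}$ as a $\sigma$-transition carrying the delay block $\block{t}$, each $\tick$ as a tick-transition, and each $\epsilon$ as an $\epsilon$-transition, obtaining an ADB $\A'$ with $\rlan(\A')=R'$ and therefore $\ulan(\A')=\untime(\oword(R'))=\ulan\cap\rlan$.

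The core of the argument, and the main obstacle, is showing that $L_C:=\set{w\in(\Sigma^{\circledcirc})^*\mid\untime(\oword(w))\in\rlan}$ is regular; intersecting the resulting automaton with $\reg(\A)$ then yields $R'$. The difficulty is that $\oword$ \emph{reorders} the letters of $w$: the untimed output is the concatenation of time-slices $\mathrm{slice}(0)\,\mathrm{slice}(1)\cdots$, where $\mathrm{slice}(t)$ lists, in generation order, the symbols whose timestamp is $t$. A letter $\tuple{\sigma,m}$ read while the current time is $c$ (i.e.\ after $c$ previous $\tick$s) contributes to $\mathrm{slice}(c+m)$, so at any moment the $M+1$ slices $\mathrm{slice}(c),\dots,\mathrm{slice}(c+M)$ are simultaneously ``open'' and their symbols are interleaved in $w$, while $B$ must read each slice \emph{completely} before the next. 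Crucially an open slice can still grow without bound, so the automaton cannot store slice contents; instead I use a guess-and-verify scheme on the state of $B$.

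Concretely, I build an NFA $C$ whose state is a window of $M+1$ slots indexed $0,\dots,M$ (slot $j$ tracking $\mathrm{slice}(c+j)$), each slot holding a pair $(p_j,q_j)\in Q\times Q$: here $p_j$ is a \emph{guessed} state at which $B$ will begin reading that slice, and $q_j$ is the running state of $B$ after the slice-symbols of $w$ seen so far (initialized $q_j=p_j$). Slot $0$ is initialized with the true entry state, $p_0=q_0=\iota$. On reading $\tuple{\sigma,m}$, $C$ advances only slot $m$ by $q_m\leftarrow\delta_B(q_m,\sigma)$, using that within a slice the generation order equals the order in which $B$ reads it. On a $\tick$, $\mathrm{slice}(c)$ is complete, so $C$ \emph{verifies} $q_0=p_1$ (the exit state of slice $c$ must match the guessed entry of slice $c+1$), discards slot $0$, shifts the remaining slots down, and opens a fresh slot $M$ with a nondeterministically guessed entry. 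When $w$ ends, $C$ performs a final flush, checking $q_0=p_1,\,q_1=p_2,\dots,q_{M-1}=p_M$ and accepting iff $q_M\in F$ (trailing empty slices leave $B$'s state unchanged, so $q_M$ is the true final state). Because $B$ is deterministic, for each $w$ the only internally consistent sequence of guesses is the genuine $B$-run on $\untime(\oword(w))$, whence $\lan(C)=L_C$; and $C$ has at most $|Q|^{2(M+1)}$ states, so $L_C$ is regular, completing the reduction.
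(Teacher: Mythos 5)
Your proof is correct and is essentially the paper's own construction (the one given in Proposition~\ref{prop:IntRegEmpty}, to which the paper defers): the same window of $M+1$ concurrent simulations of the specification automaton, one per open time slice, with nondeterministically guessed entry states that are verified at each $\tick$ and again in the acceptance condition. The only difference is packaging --- you factor the argument through regular languages over the delay-stamped alphabet $\Sigma^{\circledcirc}$ (build the slice-tracking automaton $C$, intersect with $\reg(\A)$, and reinterpret the result as an ADB via Proposition~\ref{prop:RegularConnection}), whereas the paper builds the product ADB $\A^{\dagger}$ directly; the resulting automata coincide up to bookkeeping.
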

\begin{proof}
Given an ADB $\A_1$ with $\ulan$ and a finite-state automata $\A_2$ 
for a regular language $\rlan$, we will present an explicit construction 
of an ADB with untimed language $\ulan \cap \rlan$ 
in Proposition~\ref{prop:IntRegEmpty}.
The desired result will follow from the construction.
\qed
\end{proof}

\smallskip\noindent{\bf Concatenation and Kleene star.}
We will now consider closure under concatenation and Kleene star.
Given untimed languages $\ulan$, $\ulan_1$ and $\ulan_2$, we define their
concatenation $\ulan_1 \circ \ulan_2$ and Kleene star $\ulan^*$ as follows:
\begin{align*}
\ulan_1 \circ \ulan_2 & \triangleq 
\set{ w_1\circ w_2 \mid w_1 \in \ulan_1 \text{ and } w_2 \in \ulan_2}\\
\ulan^* & \triangleq 
\set{ \epsilon } \cup \set{ w_1 \circ w_2 \circ \ldots \circ w_i \mid i \in \mathbb{N}, 1 \leq j \leq i. 
w_j \in \ulan}
\end{align*}

\begin{proposition}[\tnote{Closure  under concatenation}]
\label{prop:Concat}
Let $\ulan_1$ and $\ulan_2$ be untimed ADB languages.
Then $\ulan_1 \circ \ulan_2$ is an untimed ADB language.
\end{proposition}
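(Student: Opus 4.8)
The plan is to build an ADB $\A$ that first runs $\A_1$ (the automaton for $\ulan_1$) to an accepting location, then runs $\A_2$ (for $\ulan_2$), wiring the two together with $\epsilon$-transitions. The naive version of this — simply adding an $\epsilon$-edge from each accepting location of $\A_1$ to the start location of $\A_2$ — is exactly where the main obstacle lies, so I would first explain why it fails. When $\A_1$ reaches an accepting location at global tick-count $T$, it may still have output symbols sitting in its delay queues, i.e. symbols whose timestamps lie strictly above $T$. If $\A_2$ begins generating immediately, its delay-$0$ symbols receive timestamp $T$, which is \emph{smaller} than the pending delayed symbols of $\A_1$. Since the output word is the stable-sorted-by-timestamp sequence (Function~\ref{function:Oword}), the $\A_2$ symbols would be pulled in front of the leftover $\A_1$ symbols, interleaving the two words rather than concatenating them. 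Thus the delay queues of $\A_1$ must be \emph{flushed} before $\A_2$ is allowed to emit anything.

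The fix is to insert a block of $\tick$ transitions between the two automata. Let $M_1$ be the largest delay index occurring in $\A_1$. I would construct $\A$ by taking disjoint copies of $\A_1$ and $\A_2$, declaring $\A_1$'s start location to be the start location of $\A$ and $\A_2$'s accepting locations to be the accepting locations of $\A$, and, from each accepting location of $\A_1$, adding a fresh chain of $M_1$ $\tick$-transitions followed by an $\epsilon$-transition into the start location of $\A_2$. Intuitively this lets $\A$ commit to ending the $\A_1$-phase at an accepting location, drain all of $\A_1$'s queues, and only then hand control to $\A_2$.

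For correctness I would argue via Proposition~\ref{prop:RegularConnection}, reasoning about $\rlan(\A)$ and $\oword$. Every accepting run of $\A$ factors uniquely as an accepting run of $\A_1$ (ending at tick-count $T$), the $M_1$ inserted $\tick$-transitions, and an accepting run of $\A_2$. In this run every $\A_1$-symbol has timestamp at most $T+M_1$, while every $\A_2$-symbol, whose global clock starts at $T+M_1$, has timestamp at least $T+M_1$; at the single boundary value $T+M_1$ the stable sort keeps the earlier-generated $\A_1$-symbols ahead of the $\A_2$-symbols (and inserting one extra $\tick$ removes the overlap entirely if one prefers a strict separation). Hence $\oword$ places the whole $\A_1$-block before the whole $\A_2$-block, and the untimed projection is $\untime(w_1)\circ\untime(w_2)$. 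Since untiming is invariant under a uniform time shift, the $\A_2$-portion started at clock $T+M_1$ still realizes exactly $\ulan_2$, giving the inclusion $\ulan(\A)\subseteq \ulan_1\circ\ulan_2$. Conversely, given $w_1\in\ulan_1$ and $w_2\in\ulan_2$ with witnessing runs, chaining them through the flush block yields an accepting run of $\A$ whose untimed output is $w_1\circ w_2$, so $\ulan_1\circ\ulan_2\subseteq\ulan(\A)$. The main subtlety throughout is precisely the bookkeeping of timestamps around the flush; once the $M_1$-tick barrier guarantees temporal separation of the two phases, both inclusions follow routinely.
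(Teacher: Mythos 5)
Your proposal is correct and follows essentially the same construction as the paper's proof: a chain of $M$ $\tick$-transitions from each accepting location of $\A_1$ into the starting location of $\A_2$, which flushes $\A_1$'s delay queues before $\A_2$ generates, with both inclusions argued by composing/decomposing accepting runs. If anything, your handling of the boundary case --- $\A_1$-symbols with timestamp exactly $T+M$ tying with $\A_2$'s earliest symbols and being kept in front by the stable sort --- is slightly more careful than the paper's, which simply asserts that all of $\A_1$'s outputs appear before the $\A_2$ part of the run begins.
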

\begin{proof}
Let $\ulan_1 = \ulan(\A_1)$ and $\ulan_2 = \ulan(\A_2)$.
Let the largest delay in the delay blocks of $\A_1$ be $M$.
Consider the ADB $\A$ in Figure~\ref{figure:Concat}.
The ADB $\A$ has $M$ occurrences of $\tick$ transitions from
every final location of $\A_1$ to the starting location of $\A_2$.
%
%
%
\begin{figure}[!h]
  \hspace*{-5mm}
    \strut\centerline{\setlength{\unitlength}{0.00043745in}
\begingroup\makeatletter\ifx\SetFigFont\undefined%
\gdef\SetFigFont#1#2#3#4#5{%
  \reset@font\fontsize{#1}{#2pt}%
  \fontfamily{#3}\fontseries{#4}\fontshape{#5}%
  \selectfont}%
\fi\endgroup%
{\renewcommand{\dashlinestretch}{30}
\begin{picture}(8844,2064)(0,-10)
\put(3319,575){\ellipse{854}{584}}
\put(3319,575){\ellipse{944}{674}}
\put(3319,1430){\ellipse{854}{584}}
\put(3319,1430){\ellipse{944}{674}}
\put(7132,1023){\ellipse{854}{584}}
\put(809,1009){\ellipse{854}{584}}
\dashline{60.000}(8832,12)(8832,2037)(6717,2037)
	(6717,12)(8832,12)
\path(3792,1452)(4422,1317)
\path(4233.424,1296.047)(4422.000,1317.000)(4258.567,1413.383)
\path(4422,1317)(5097,1182)
\path(4908.729,1158.466)(5097.000,1182.000)(4932.262,1276.136)
\path(5052,1182)(5682,1047)
\path(5493.424,1026.047)(5682.000,1047.000)(5518.567,1143.383)
\path(5682,1047)(6132,1047)
\path(5952.000,987.000)(6132.000,1047.000)(5952.000,1107.000)
\path(6132,1047)(6717,1047)
\path(6537.000,987.000)(6717.000,1047.000)(6537.000,1107.000)
\path(3792,552)(4377,642)
\path(4208.217,555.327)(4377.000,642.000)(4189.970,673.932)
\path(4332,642)(4962,777)
\path(4798.567,680.617)(4962.000,777.000)(4773.424,797.953)
\path(4962,777)(5412,912)
\path(5256.832,802.808)(5412.000,912.000)(5222.350,917.747)
\path(5412,912)(5862,912)
\path(5682.000,852.000)(5862.000,912.000)(5682.000,972.000)
\path(5817,912)(6222,912)
\path(6042.000,852.000)(6222.000,912.000)(6042.000,972.000)
\path(6222,912)(6717,912)
\path(6537.000,852.000)(6717.000,912.000)(6537.000,972.000)
\dashline{60.000}(3747,12)(372,12)(372,2037)
	(3747,2037)(3747,12)
\path(12,1002)(372,1002)
\path(192.000,942.000)(372.000,1002.000)(192.000,1062.000)
\put(597,1632){\makebox(0,0)[lb]{\smash{{\SetFigFont{10}{12.0}{\familydefault}{\mddefault}{\updefault}$\A_1$}}}}
\put(8202,1677){\makebox(0,0)[lb]{\smash{{\SetFigFont{10}{12.0}{\familydefault}{\mddefault}{\updefault}$\A_2$}}}}
\put(7077,912){\makebox(0,0)[lb]{\smash{{\SetFigFont{9}{10.8}{\familydefault}{\mddefault}{\updefault}$l^{\A_2}_s$}}}}
\put(687,912){\makebox(0,0)[lb]{\smash{{\SetFigFont{9}{10.8}{\familydefault}{\mddefault}{\updefault}$l^{\A_1}_s$}}}}
\put(3927,1497){\makebox(0,0)[lb]{\smash{{\SetFigFont{9}{10.8}{\familydefault}{\mddefault}{\updefault}$M \text{ occurences of } \tick$}}}}
\put(3837,237){\makebox(0,0)[lb]{\smash{{\SetFigFont{9}{10.8}{\familydefault}{\mddefault}{\updefault}$M \text{ occurences of } \tick$}}}}
\end{picture}
}}
    \caption{Automatons $\A_1$ and $\A_2$ concatenated to get   $\ulan_1 \circ \ulan_2$.}
    \label{figure:Concat}
  \end{figure}
  Let $r_1$ and $r_2$ be accepting runs in $\A_1$ and $\A_2$
which result in the untimed  output words $w_1$ and $w_2$
respectively.
Let $r_1 = l_0 \stackrel{\alpha_0}{\longrightarrow} l_1  
\stackrel{\alpha_1}{\longrightarrow}\dots l_n$.
Then, $r_1^{\dagger} = l_0 \stackrel{\alpha_0}{\longrightarrow} l_1  
\stackrel{\alpha_1}{\longrightarrow}\dots l_n 
\underset{ M \text{ occurences of } \tick}
{\underbrace{\stackrel{\tick}\longrightarrow l_{n_1}
 \stackrel{\tick}\longrightarrow l_{n_2} \dots l_{n_M}}}$ 
forms a part of an accepting run in $\A$,
namely the run $r_1^{\dagger}  \circ \stackrel{\epsilon}{\rightarrow} \circ r_2$.
Moreover, the run $r_1^{\dagger}$ generates the untimed output word
$w_1 \circ w_2$, as the maximum delay of a delay block in $r_1$ is $M$, thus
all the output symbols have been output before  the $r_2$ part in
$r_1^{\dagger}$ starts.

In the other direction, given an accepting  run $r_1^{\dagger}$ of $\A$,
it can be decomposed  into $r_1 \circ \underset{ M \text{ occurences of } \tick}
{\underbrace{\stackrel{\tick}\longrightarrow l_{n_1}
 \stackrel{\tick}\longrightarrow l_{n_2} \dots l_{n_M}}}\,  \circ\, r_2$ such that
$r_1$ and $r_2$ are accepting runs in $\A_1$ and $\A_2$ respectively.
The result follows from noting that $ r_1^{\dagger}$ generates the
untimed output word $w_1 \circ w_2$ where
$w_1$ and $w_2$ are untimed output words generated by the runs
$r_1$ and $r_2$ respectively.
\qed
\end{proof}

\begin{proposition}[\tnote{Closure under Kleene star}]
\label{prop:Kleene}
Let $\ulan$ be an untimed ADB language.
Then $\ulan^*$ is an untimed ADB language.
\end{proposition}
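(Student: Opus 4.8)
The plan is to reuse the construction of Proposition~\ref{prop:Concat}, replacing its forward $\tick$-bridge by a loop. Let $\ulan = \ulan(\A)$ for an ADB $\A = (L,D,\Sigma,\delta,l_s,L_f)$ whose largest delay block has index $M$. I would construct $\A_*$ by adding a single fresh location $q_0$, declared to be both the start location and the \emph{only} accepting location of $\A_*$, together with (i)~an $\epsilon$-transition from $q_0$ to $l_s$, and (ii)~from every accepting location $l_f \in L_f$ a chain of exactly $M$ consecutive $\tick$-transitions (routed through fresh intermediate locations) that leads back to $q_0$. All original transitions of $\A$ are retained. Intuitively, each segment $q_0 \stackrel{\epsilon}{\rightarrow} l_s \rightarrow^* l_f$ replays one accepting run of $\A$, the trailing block $\tick^M$ flushes every still-queued output symbol, and arriving back at the accepting location $q_0$ lets the run either stop or begin a fresh iteration; the empty run at $q_0$ accounts for $\epsilon \in \ulan^*$.

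I would then establish $\ulan(\A_*) = \ulan^*$ by two inclusions. For $\ulan^* \subseteq \ulan(\A_*)$, given $w = w_1 \circ \cdots \circ w_k$ with each $w_i \in \ulan$ witnessed by an accepting run $r_i$ of $\A$ ending in some $l_f^i$, I splice these into the single $\A_*$-run consisting of $q_0 \stackrel{\epsilon}{\rightarrow} l_s$, then $r_1$, then $\tick^M$ back to $q_0$, and so on up to $r_k$ followed by $\tick^M$ back to $q_0$; this is accepting since it terminates at $q_0$. For the converse, the key structural observation is that the only edge leaving $q_0$ is the $\epsilon$-edge and the only edges entering $q_0$ are the terminal ticks of the chains; hence any accepting run of $\A_*$ (which must both start and end at $q_0$) decomposes at its successive visits to $q_0$ into blocks of exactly the above shape, whose inner $l_s \rightarrow^* l_f$ portions are accepting runs of $\A$ and thus each generate some $w_i \in \ulan$.

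In both directions the real content is the timing claim that the untimed projection of the whole run equals $w_1 \circ \cdots \circ w_k$, and this is precisely where the $M$ ticks are needed, as in Proposition~\ref{prop:Concat}. Writing $s_i$ and $f_i$ for the current time when iteration $i$ begins and when it reaches $l_f^i$, every symbol produced in iteration $i$ is emitted at a timestamp in the range from $s_i$ to $f_i + M$, since no delay exceeds $M$; meanwhile the $M$ intervening ticks force $s_{i+1} = f_i + M$, so successive iterations can overlap only through a tie at the single timestamp $f_i + M$. Because $\oword$ breaks timestamp ties via $\stablesorttime$, i.e.\ in order of generation, the symbols of iteration $i$ are placed before those of iteration $i+1$ at that shared time, so the iterations contribute disjoint, correctly ordered blocks and untiming yields exactly $w_1 \circ \cdots \circ w_k$. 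I expect this tie-breaking step --- verifying that appending $M$ ticks (and not fewer) is exactly what prevents adjacent blocks from interleaving in the untimed output --- to be the only delicate point, and it is already settled by the stable-sort semantics fixed for $\oword$.
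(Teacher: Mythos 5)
Your proposal is correct and is essentially the paper's own construction: a fresh start/accepting location, an $\epsilon$-edge into $\A$'s start location, and a chain of $M$ $\tick$-transitions from each accepting location of $\A$ back to the fresh location, with correctness argued by run decomposition exactly as in Proposition~\ref{prop:Concat}. If anything, your handling of the possible timestamp tie at $f_i + M$ via the stable-sort semantics of $\oword$ is more careful than the paper's sketch, which simply asserts that all pending symbols are flushed before the next iteration begins.
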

\begin{proof}
Let $\ulan = \ulan(A)$.
Consider the ADB $\A_*$ in Figure~\ref{figure:Kleene}.
\begin{figure}[!h]
  \hspace*{4mm}
    \strut\centerline{\setlength{\unitlength}{0.00043745in}
\begingroup\makeatletter\ifx\SetFigFont\undefined%
\gdef\SetFigFont#1#2#3#4#5{%
  \reset@font\fontsize{#1}{#2pt}%
  \fontfamily{#3}\fontseries{#4}\fontshape{#5}%
  \selectfont}%
\fi\endgroup%
{\renewcommand{\dashlinestretch}{30}
\begin{picture}(9190,2958)(0,-10)
\put(3620,1469){\ellipse{854}{584}}
\put(3620,1469){\ellipse{944}{674}}
\put(3620,2324){\ellipse{854}{584}}
\put(3620,2324){\ellipse{944}{674}}
\put(7445,1874){\ellipse{854}{584}}
\put(7445,1874){\ellipse{944}{674}}
\put(1110,1903){\ellipse{854}{584}}
\path(4093,2346)(4723,2211)
\path(4534.424,2190.047)(4723.000,2211.000)(4559.567,2307.383)
\path(4723,2211)(5398,2076)
\path(5209.729,2052.466)(5398.000,2076.000)(5233.262,2170.136)
\path(5353,2076)(5983,1941)
\path(5794.424,1920.047)(5983.000,1941.000)(5819.567,2037.383)
\path(5983,1941)(6433,1941)
\path(6253.000,1881.000)(6433.000,1941.000)(6253.000,2001.000)
\path(6433,1941)(7018,1941)
\path(6838.000,1881.000)(7018.000,1941.000)(6838.000,2001.000)
\path(4093,1446)(4678,1536)
\path(4509.217,1449.327)(4678.000,1536.000)(4490.970,1567.932)
\path(4633,1536)(5263,1671)
\path(5099.567,1574.617)(5263.000,1671.000)(5074.424,1691.953)
\path(5263,1671)(5713,1806)
\path(5557.832,1696.808)(5713.000,1806.000)(5523.350,1811.747)
\path(5713,1806)(6163,1806)
\path(5983.000,1746.000)(6163.000,1806.000)(5983.000,1866.000)
\path(6118,1806)(6523,1806)
\path(6343.000,1746.000)(6523.000,1806.000)(6343.000,1866.000)
\path(6523,1806)(7018,1806)
\path(6838.000,1746.000)(7018.000,1806.000)(6838.000,1866.000)
\path(9178,951)(7828,1671)
\path(8015.059,1639.235)(7828.000,1671.000)(7958.588,1533.353)
\path(808,2616)(1033,2211)
\path(893.135,2339.210)(1033.000,2211.000)(998.034,2397.487)
\dashline{60.000}(4048,906)(673,906)(673,2931)
	(4048,2931)(4048,906)
\path(7423,1536)(7422,1534)(7420,1529)
	(7417,1521)(7412,1508)(7404,1490)
	(7394,1467)(7382,1439)(7368,1406)
	(7352,1370)(7335,1330)(7316,1289)
	(7296,1247)(7275,1205)(7255,1163)
	(7234,1123)(7213,1084)(7192,1047)
	(7171,1012)(7151,979)(7130,949)
	(7110,920)(7089,894)(7068,869)
	(7046,846)(7024,824)(7002,804)
	(6978,784)(6954,766)(6928,748)
	(6908,736)(6887,723)(6866,711)
	(6844,700)(6821,688)(6796,677)
	(6771,666)(6745,655)(6717,644)
	(6688,634)(6658,624)(6626,614)
	(6594,604)(6559,595)(6524,586)
	(6487,577)(6449,568)(6409,560)
	(6368,552)(6326,544)(6283,537)
	(6239,530)(6193,523)(6147,516)
	(6099,510)(6050,504)(6001,498)
	(5951,493)(5900,488)(5848,483)
	(5795,479)(5742,475)(5688,471)
	(5633,467)(5578,463)(5521,460)
	(5464,457)(5406,454)(5346,451)
	(5286,448)(5240,447)(5193,445)
	(5145,443)(5097,441)(5047,440)
	(4997,438)(4946,437)(4893,435)
	(4840,434)(4785,432)(4730,431)
	(4673,430)(4615,429)(4557,428)
	(4498,426)(4437,425)(4376,424)
	(4314,423)(4252,423)(4188,422)
	(4124,421)(4060,420)(3995,419)
	(3930,419)(3864,418)(3799,418)
	(3733,417)(3667,417)(3602,416)
	(3536,416)(3471,416)(3406,416)
	(3342,415)(3278,415)(3214,415)
	(3152,415)(3090,415)(3029,415)
	(2968,415)(2909,415)(2851,415)
	(2793,415)(2736,415)(2681,416)
	(2626,416)(2573,416)(2520,416)
	(2469,416)(2419,417)(2369,417)
	(2321,417)(2273,418)(2226,418)
	(2181,418)(2120,419)(2060,420)
	(2002,420)(1945,421)(1888,422)
	(1833,423)(1778,424)(1724,425)
	(1670,426)(1618,428)(1566,430)
	(1515,432)(1464,434)(1414,436)
	(1366,439)(1318,442)(1271,445)
	(1225,449)(1180,453)(1137,457)
	(1094,462)(1053,467)(1013,472)
	(974,478)(936,484)(900,491)
	(865,497)(831,505)(799,512)
	(768,520)(738,528)(709,537)
	(682,546)(655,555)(630,565)
	(605,576)(582,586)(559,597)
	(537,609)(516,621)(488,638)
	(461,656)(435,675)(409,695)
	(384,717)(360,739)(336,763)
	(313,788)(290,814)(268,841)
	(247,869)(227,898)(207,927)
	(188,957)(170,988)(154,1019)
	(138,1050)(123,1082)(109,1113)
	(96,1144)(84,1175)(73,1206)
	(64,1236)(55,1265)(47,1294)
	(40,1323)(34,1351)(29,1378)
	(24,1405)(21,1431)(17,1461)
	(14,1491)(13,1520)(12,1549)
	(12,1578)(12,1607)(14,1636)
	(17,1664)(20,1692)(25,1720)
	(30,1747)(36,1773)(43,1798)
	(51,1823)(60,1846)(69,1868)
	(79,1888)(89,1907)(100,1925)
	(111,1941)(122,1956)(134,1969)
	(147,1981)(159,1992)(172,2001)
	(186,2008)(202,2016)(219,2022)
	(237,2025)(256,2027)(277,2027)
	(300,2025)(325,2021)(352,2015)
	(382,2007)(413,1998)(447,1987)
	(481,1974)(516,1961)(551,1948)
	(583,1935)(611,1923)(634,1913)(673,1896)
\path(484.020,1912.924)(673.000,1896.000)(531.970,2022.927)
\put(4228,2391){\makebox(0,0)[lb]{\smash{{\SetFigFont{9}{10.8}{\familydefault}{\mddefault}{\updefault}$M \text{ occurences of } \tick$}}}}
\put(4138,1131){\makebox(0,0)[lb]{\smash{{\SetFigFont{9}{10.8}{\familydefault}{\mddefault}{\updefault}$M \text{ occurences of } \tick$}}}}
\put(7378,1806){\makebox(0,0)[lb]{\smash{{\SetFigFont{9}{10.8}{\familydefault}{\mddefault}{\updefault}$l^{\A_*}_s$}}}}
\put(898,2526){\makebox(0,0)[lb]{\smash{{\SetFigFont{10}{12.0}{\familydefault}{\mddefault}{\updefault}$\A$}}}}
\put(988,1806){\makebox(0,0)[lb]{\smash{{\SetFigFont{9}{10.8}{\familydefault}{\mddefault}{\updefault}$l^{\A}_s$}}}}
\put(2923,96){\makebox(0,0)[lb]{\smash{{\SetFigFont{9}{10.8}{\familydefault}{\mddefault}{\updefault}$\epsilon$}}}}
\end{picture}
}}
    \caption{Automatons $\A_*$ using $\A$ to get   $\ulan^*$.}
    \label{figure:Kleene}
\end{figure}
$l_s^{\A}$ is the starting location of $\A$, and $l_s^{\A_*}$ is the
starting location of $\A_*$.
The ADB $\A_*$ has $M$ occurrences of $\tick$ transitions from
every final location of $\A$ to the starting location of $\A_*$.
There is also an $\epsilon$ transition from the starting location of $\A_*$
to the starting location of $\A$.
%
%
It can be seen that  the untimed output language of $\A_*$ is $\ulan^*$.
The proof follows along similar lines to the  proof of
Proposition~\ref{prop:Concat}.
\qed
\end{proof}

We will now show the ADB languages are not closed under some Boolean 
operations, and towards this goal we first prove a pumping lemma.

\begin{proposition}[\tnote{Pumping Lemma for ADB runs}]
\label{prop:PumpingLemmaRuns}
Let $\A$ be an ADB and let $L$ be the set of locations of $\A$.
Let $w \in \lan(\A)$  with $|w| > |L|$ be the output timed word corresponding to an
accepting run $r= l_0 \stackrel{\alpha_0}{\longrightarrow} l_1  
\stackrel{\alpha_1}{\longrightarrow}\dots l_n$.
Consider any subrun  $r_s$ of $r$, \emph{i.e.}
$r= r_0\circ r_s\circ r_1$, such that $r_s$ contains at least 
$|L|$ transitions.
Then, there exists a subrun $r_p$ of $r_s$, \emph{i.e.}
$r_s= r_{s_0} \circ r_p \circ  r_{s_1}$
 with $r_p$ containing
at most $|L|$ transitions such that for all $i\geq 0$
the runs $r_0 \circ  r_{s_0} \circ  \rep_i(r_p) \circ r_{s_1} \circ  r_1$
are also accepting runs of $\A$.

\end{proposition}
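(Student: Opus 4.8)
The plan is to reduce the claim to the classical pumping argument for finite automata, exploiting the fact that the delay blocks play no role in determining which runs are valid. By Proposition~\ref{prop:RegularConnection} together with the definition of $\reg(\A)$, a sequence $l_0 \to l_1 \to \dots \to l_n$ is an accepting run of $\A$ if and only if it is an accepting path of the finite automaton $\reg(\A)$ over the same location set $L$: whether a run is accepting depends only on the start location $l_s$, the requirement $l_n \in L_f$, and the fact that each consecutive pair of locations respects the transition relation. The delay indices affect only the timestamps assigned by $\oword(\cdot)$, not the legality of the transitions. Hence it suffices to find a pumpable cycle in the underlying directed graph on $|L|$ vertices, and I may ignore the output word entirely, since the statement asks only that the pumped runs be accepting, not that they generate $w$.

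First I would apply the pigeonhole principle to the subrun $r_s$. Viewing $r_s$ as a sequence of locations $m_0 \to m_1 \to \cdots$, consider its first $|L|$ transitions (which exist, as $r_s$ has at least $|L|$ transitions); these visit the $|L|+1$ locations $m_0, m_1, \dots, m_{|L|}$. Since $\A$ has only $|L|$ distinct locations, these cannot all be distinct, so there are indices $a < b$ with $b \leq |L|$ and $m_a = m_b$. I then define $r_{s_0}$ to be the prefix of $r_s$ ending at $m_a$, let $r_p$ be the subrun from $m_a$ to $m_b$, and let $r_{s_1}$ be the remaining suffix, so that $r_s = r_{s_0}\circ r_p \circ r_{s_1}$. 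By construction $r_p$ contains $b-a \leq |L|$ transitions, meeting the length bound, and it is a cycle, starting and ending at the common location $m_a = m_b$.

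The final step is to verify that pumping this cycle preserves validity of the run. Because $r_p$ begins and ends at the same location $m_a$, all junction points of $r_0 \circ r_{s_0} \circ \rep_i(r_p) \circ r_{s_1} \circ r_1$ are consistent for every $i \geq 0$: the location reached at the end of $r_{s_0}$ is $m_a$, each copy of $r_p$ departs from and returns to $m_a$, and $r_{s_1}$ resumes from $m_b = m_a$. The blocks $r_0$ and $r_1$ are untouched, so the run still starts at $l_s$ and ends at the same accepting location $l_n \in L_f$. For $i=0$ the expression collapses to $r_0 \circ r_{s_0} \circ r_{s_1} \circ r_1$, which is valid for the same reason, since the endpoint of $r_{s_0}$ matches the start of $r_{s_1}$. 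Thus each pumped sequence is an accepting run of $\A$, as required.

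There is no serious obstacle here; the only point requiring a little care is arranging that $r_p$ has at most $|L|$ transitions, which I handle by locating the repeated location within the first $|L|$ transitions of $r_s$ rather than anywhere in $r_s$. The conceptual content is simply the observation that run-validity in an ADB is insensitive to the delay annotations, so the standard regular-language pumping argument transfers verbatim to the run level.
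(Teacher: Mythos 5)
Your proposal is correct and follows essentially the same route as the paper, which proves the claim by combining Proposition~\ref{prop:RegularConnection} (run validity in an ADB is exactly path validity in the regular automaton $\reg(\A)$, independent of delay annotations) with the pumping argument for finite automata. The only difference is that you inline the pigeonhole argument on the first $|L|$ transitions of $r_s$ rather than citing the regular pumping lemma as a black box, which if anything makes explicit why the cycle can be located inside the \emph{designated} subrun $r_s$.
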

\begin{proof}
The proof follows from the pumping lemma for regular finite state automata, and
from Proposition~\ref{prop:RegularConnection}.
\qed
\end{proof}

\begin{remark}
There are difficulties in obtaining a pumping lemma for timed words.
We give an example.
Let $r, r_0, r_{s_0}, r_p, r_{s_1}, r_1$ be as in 
Proposition~\ref{prop:PumpingLemmaRuns}.
Let $w$ be the timed word corresponding to the run $r$.
Let $\tuple{\sigma_{\alpha}, t_{\alpha}}$ and  $\tuple{\sigma_{\beta}, t_{\beta}}$
be the timestamped
symbols generated by some transition in $ r_0 \circ r_{s_0}$, and 
by some transition in  $r_{s_1}\circ  r_1$ respectively.
Let us denote these two transitions as $\tran_{\alpha}$ and
 $\tran_{\beta}$.
We may have 
$t_{\alpha} >  t_{\beta}$, \emph{i.e.} $\tuple{\sigma_{\alpha}, t_{\alpha}}$ 
appears after $\tuple{\sigma_{\beta}, t_{\beta}}$ in the timed word $w$,
even though the transition which generates $\tuple{\sigma_{\alpha}, t_{\alpha}}$ 
occurs before the transition which generates $\tuple{\sigma_{\beta}, t_{\beta}}$.
Let the number of $\tick$ transitions in $r_p$ be $\Delta_p$.
Each ``pump'' of $r_p$ introduces an additional delay of $\Delta_p$between  when
the transitions  $\tran_{\alpha}$ and  $\tran_{\beta}$ occur.
Eventually, after enough pumps, the delay will large enough
that the timestamped output symbol corresponding to   $\tran_{\beta}$ 
will appear after the timestamped output symbol corresponding to
$\tran_{\alpha}$.
Thus, when we pump an accepting run, the resulting timed word,
with each pump,
may undergo
 a \emph{reordering} of the output symbols corresponding
to the unpumped run parts.
There is also a reordering corresponding to the pumped run part.
\qed
\end{remark}

\begin{proposition}[\tnote{Non-closure under intersection}]
\label{prop:NonClosureInt}
There exist ADBs $\A_1$ and $\A_2$ such that 
(1)~$\lan(\A_1) \cap \lan(A_2)$ is
not an ADB language, and
(2)~$\ulan(\A_1) \cap \ulan(A_2)$ is
not an untimed ADB language. 
\end{proposition}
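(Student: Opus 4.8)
The plan is to refute closure under intersection by writing a provably non-ADB ``all blocks equal'' language as the intersection of two ADB languages, in direct analogy with the classical proof that context-free languages are not closed under intersection. Over the alphabet $\set{a,\#}$, consider words of the shape $a^{n_1}\#a^{n_2}\#\cdots\#a^{n_k}$ with $k\geq 1$, and define the target
\[
\ulan_0 = \set{ a^{n_1}\#a^{n_2}\#\cdots\#a^{n_k} \mid k\geq 1 \text{ and } n_1=n_2=\cdots=n_k\geq 1 }.
\]
I would build $\A_1$ generating $\ulan(\A_1)=\ulan_1$, the language enforcing only the equalities of \emph{odd} adjacent pairs ($n_{2i-1}=n_{2i}$, with a free trailing block when $k$ is odd), and $\A_2$ generating $\ulan(\A_2)=\ulan_2$, enforcing only the \emph{even} adjacent pairs ($n_{2i}=n_{2i+1}$, with a free leading block). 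Since $\ulan_1$ forces $n_1=n_2,\ n_3=n_4,\dots$ and $\ulan_2$ forces $n_2=n_3,\ n_4=n_5,\dots$, by transitivity $\ulan_1\cap\ulan_2=\ulan_0$.

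Each component is a genuine ADB because \emph{adjacent} blocks lie only one time unit apart, so a single delay block of index $1$ suffices to match a pair. Concretely, $\A_1$ processes the word pair by pair: for each pair it runs a loop that, per iteration, emits $\tuple{a,\block{0}}$ and $\tuple{a,\block{1}}$ (forcing the two blocks of the pair to have equal length), emits the two separating $\#$'s with matching delays, and then takes two $\tick$ transitions before the next pair. This makes block $i$ appear at timestamp $i-1$. The key design choice is that $\A_2$ can be built to use the \emph{same} timing discipline (a free first block at time $0$, then pairs $(2,3),(4,5),\dots$ placed at times $1,2,\dots$), so that block $i$ again appears at timestamp $i-1$. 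Consequently $\A_1$ and $\A_2$ produce the \emph{identical} timed word on every input of $\ulan_0$; verifying this uniform-timing construction (correct $\#$ placement and non-interference of distinct pairs, guaranteed by the intervening $\tick$'s) is routine but must be checked carefully.

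The main obstacle is showing $\ulan_0\notin$ ADB; I would prove this by an SCC/pumping argument paralleling Proposition~\ref{proposition:NestedCFGLan}. Suppose an ADB $\A$ with maximum delay $M$ satisfies $\ulan(\A)=\ulan_0$, and pick a word with both $k$ and the common value $n$ large. By Proposition~\ref{prop:PumpingLemmaRuns} its accepting run has a pumpable subrun $r_p$ containing at least one output symbol; since $n$ is unbounded, some such $r_p$ must emit an $a$. The crucial observation (via Proposition~\ref{prop:RegularConnection} and the $\oword$ semantics) is that the outputs of a fixed cycle fall within a bounded window of timestamps (of width at most $M$ plus the number of $\tick$'s in the cycle), hence touch only \emph{boundedly many} blocks. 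Pumping therefore grows those few blocks while leaving the remaining blocks unchanged, producing a word with unequal blocks, i.e.\ a word in $\lan(\A)$ but not in $\ulan_0$ --- a contradiction. This also explains conceptually why a single ADB gets only $\ulan_1$ \emph{or} $\ulan_2$: delay-$1$ chaining can link block $i$ to $i+1$, but then block $i+1$ cannot simultaneously be the source fixing block $i+2$, so only alternate neighbors can be pinned.

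Finally, the timed case follows from the untimed one using the uniform timing. Because $\A_1$ and $\A_2$ assign identical timestamps on common words, $\lan(\A_1)\cap\lan(\A_2)$ is exactly the set of canonical timed versions of the words of $\ulan_0$, so $\untime(\lan(\A_1)\cap\lan(\A_2))=\ulan_0$. If this timed language were $\lan(\A_3)$ for some ADB $\A_3$, then by definition $\ulan(\A_3)=\untime(\lan(\A_3))=\ulan_0$ would be an untimed ADB language, contradicting the previous paragraph. Hence neither $\lan(\A_1)\cap\lan(\A_2)$ nor $\ulan(\A_1)\cap\ulan(\A_2)$ is an ADB language, establishing both parts of the proposition.
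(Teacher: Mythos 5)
Your overall architecture is the same as the paper's: both proofs express an ``all blocks equal'' language as the intersection of an odd-pair-matching and an even-pair-matching ADB language (delay-$1$ blocks tying adjacent blocks together, with $\tick$s separating the pairs), and both rest on the pumping lemma for runs (Proposition~\ref{prop:PumpingLemmaRuns}). Your constructions of $\A_1$ and $\A_2$, the transitivity argument for the intersection, and the reduction of the timed claim to the untimed one via uniform timing are all sound. The genuine gap is in the step you yourself flag as the main obstacle: the proof that $\ulan_0$ is not an untimed ADB language. Two of its inferences fail. First, ``outputs of a fixed cycle fall within a bounded window of timestamps, hence touch only boundedly many blocks'' conflates the canonical timing of your constructed automata with the timing of the hypothetical ADB $\A$ with $\ulan(\A)=\ulan_0$: the hypothesis constrains only the untimed projection, and nothing prevents $\A$ from placing unboundedly many blocks at a single timestamp (a tick-free cycle can emit arbitrarily many $a$'s and $\#$'s that all share one timestamp). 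Second, and more seriously, ``pumping grows those few blocks while leaving the remaining blocks unchanged'' is false in general when the pumped cycle contains $\tick$ transitions: each extra iteration shifts the timestamps of everything generated after the cycle, and delayed symbols generated before the cycle can then interleave differently with them --- this is exactly the reordering phenomenon the paper warns about in the Remark following Proposition~\ref{prop:PumpingLemmaRuns}. Your sketch also never rules out that the pumped cycle emits a $\#$, in which case the block count changes and ``grows those few blocks'' is not even the right description of what happens.

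The good news is that your unary block alphabet makes this step repairable by a counting argument that is immune to reordering, which in spirit brings you back to the paper's own argument (the paper pumps \emph{down} and reasons about how many symbols disappear). Take the word $a^n\#a^n\#\dots\#a^n$ with $k$ blocks, where both $n$ and $k$ exceed the number $K$ of locations of $\A$. In the generating run, the $k-1$ $\#$-emissions cut the generation order into $k$ stretches, so some stretch contains at least $n\geq K+1$ $a$-emissions; among the locations reached just after each of the first $K+1$ of these $a$-emissions, two coincide, yielding a pumpable cycle that emits at least one $a$ and no $\#$. Pumping it once gives an accepting run whose output word contains exactly $k-1$ occurrences of $\#$ but $kn+c_a$ occurrences of $a$, where $1\leq c_a\leq K<k$; these counts are determined by the multiset of transitions and are unaffected by whatever reordering $\oword$ performs. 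Any member of $\ulan_0$ with $k-1$ separators has $k$ equal nonempty blocks, hence a number of $a$'s divisible by $k$, while $k$ does not divide $kn+c_a$. So the pumped word lies in $\ulan(\A)$ but not in $\ulan_0$, the desired contradiction. This counting step is the idea your structural-preservation claim was missing; with it in place, the rest of your proposal goes through.
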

\begin{proof}
(Sketch.)
  Consider the language
  \[
  \lan^{\dagger} = \left\{\kappa_0(w \#) \kappa_1(w \#)
    \kappa_2(w \#)  \dots \kappa_n(w \#) \left\vert 
     \begin{array}{l} 
      w\in\set{a,b}^* \text{ and }\\
      n\geq 0
      \end{array}
      \right. 
 \right\}
  \]
 where $\kappa_i()$ is the function defined in Section~\ref{section:Model}.
  We show $\lan^{\dagger}$ is not an ADB language, and that
  there exist ADBs $\A_1$ and $\A_2$ such that
  $\lan^{\dagger} = \lan(\A_1) \cap   \lan(A_2)$.
To show the first claim, let 
  $ \lan^{\dagger}$ be the output
  language of an ADB $\A^{\dagger}$ containing $K$ locations.
  Consider a timed word $w_{\dagger} =
  \kappa_0(w \#) \kappa_1(w \#)
  \kappa_2(w \#)  \dots \kappa_{K+2}(w \#) $ with
  $|w| > K$.
  Let $r_{\dagger}$ be the generating run for $w_{\dagger}$.
  Using the pumping lemma,  we can show there exists  a subrun $r_p$ of
  $r_{\dagger}$ such that 
  (1)~the subrun $r_p$ contains at least one output symbol
  transition, and
  (2)~the subrun contains at most $K$ output symbol transitions; and
  (3)~for $r_{\dagger} = r_0\circ r_p \circ r_1$, we have that
  $r_0 \circ r_1$ is also a generating run for $\A^{\dagger}$ (\emph{i.e.},
  we pump down $r_p$.
  Let $w_{01}$ be the output word corresponding to
  the generating run  $r_0 \circ r_1$.
  Because of the constraints on $r_p$, we have that
  $w_{01}$ contains at least one, and at most $K$ output symbols
  less than $w$.
  It can be checked that this means that $w_{01}$ is not a member of $\lan^{\dagger} $,
  a contradiction. 
  To show that $\lan^{\dagger}$ is the intersection of two ADB languages,
  we consider two ADBs, the first ADB checks that the word with
  timestamp $2j$ matches the word with time $2j+1$ for all $j$;
  the second ADB checks that the word with
  timestamp $2j+1$ matches the word with time $2j+2$ for all $j$.
  It can checked that such ADBs exist and that the intersection of the languages
  is $\lan^{\dagger}$.
   \qed
\end{proof}

\begin{proposition}[\tnote{Non-closure under complementation}]
\label{prop:NonClosureComp}
Given and ADB $\A$, let $\overline{\lan}(\A)$ denote the  complement language of
$\lan(\A)$.
There exists an ADB  $\A$ such that for all ADBs $\A'$ we have $\overline{\lan}(\A)  \neq
\lan(\A')$, and $\overline{\ulan}(\A)  \neq \ulan(\A')$.
\qed
\end{proposition}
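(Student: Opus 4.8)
The plan is to avoid any direct combinatorial analysis of the complement language, and instead derive non-closure under complementation from the two closure facts already established: closure under union (Proposition~\ref{prop:UnionClosure}) and \emph{non}-closure under intersection (Proposition~\ref{prop:NonClosureInt}). The key observation is the De Morgan identity $\lan(\A_1)\cap\lan(\A_2) = \overline{\,\overline{\lan}(\A_1)\cup\overline{\lan}(\A_2)\,}$: if the class of ADB languages were closed under both union and complementation, it would automatically be closed under intersection, contradicting Proposition~\ref{prop:NonClosureInt}. So I would prove the statement by contradiction, assuming closure under complementation for the sake of argument.

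\textbf{Main argument (timed case).} First I would suppose, towards a contradiction, that for \emph{every} ADB $\A'$ the complement $\overline{\lan}(\A')$ is again an ADB language. Take the two specific ADBs $\A_1$ and $\A_2$ furnished by Proposition~\ref{prop:NonClosureInt}, for which $\lan(\A_1)\cap\lan(\A_2)$ is \emph{not} an ADB language. Under the contradiction hypothesis, both $\overline{\lan}(\A_1)$ and $\overline{\lan}(\A_2)$ are ADB languages; by closure under union (Proposition~\ref{prop:UnionClosure}) so is $\overline{\lan}(\A_1)\cup\overline{\lan}(\A_2)$, say equal to $\lan(\A_3)$ for some ADB $\A_3$. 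Applying the contradiction hypothesis once more to $\A_3$, the language $\overline{\lan}(\A_3)$ is an ADB language; but by De Morgan $\overline{\lan}(\A_3)=\lan(\A_1)\cap\lan(\A_2)$, so the intersection would be an ADB language after all, contradicting Proposition~\ref{prop:NonClosureInt}. Hence the hypothesis fails, and there exists an ADB $\A$ with $\overline{\lan}(\A)\neq\lan(\A')$ for all ADBs $\A'$.

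\textbf{Untimed case.} The untimed statement is handled by the identical duality argument, now invoking the untimed half of Proposition~\ref{prop:NonClosureInt} (non-closure of untimed ADB languages under intersection) together with closure of untimed ADB languages under union, which follows from the same disjoint-initial-state construction as in Proposition~\ref{prop:UnionClosure}. I would simply note this union closure explicitly and replay the De Morgan computation with $\ulan$ and $\overline{\ulan}$ in place of $\lan$ and $\overline{\lan}$.

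\textbf{Expected obstacle.} This proof is essentially a Boolean-algebra bootstrapping argument, so there is no hard combinatorial core; the only real work was done upstream in establishing non-closure under intersection (Proposition~\ref{prop:NonClosureInt}). The one point requiring a little care is that the argument yields only \emph{existence} of a bad ADB (witnessed by the complement-of-union automaton $\A_3$) rather than an explicit self-contained witness, which matches exactly what the proposition asserts; I would make sure the quantifiers are stated so that closure under complementation is negated in the correct ``there exists an $\A$ such that for all $\A'$'' form.
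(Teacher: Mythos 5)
Your proposal is correct and is essentially identical to the paper's own proof: the paper also derives non-closure under complementation from the De Morgan identity $\lan(\A_1)\cap\lan(\A_2)=\overline{\,\overline{\lan(\A_1)}\cup\overline{\lan(\A_2)}\,}$, combining closure under union (Proposition~\ref{prop:UnionClosure}) with non-closure under intersection (Proposition~\ref{prop:NonClosureInt}), and notes that the same argument applies to untimed languages. Your version merely spells out the contradiction and the quantifier bookkeeping more explicitly than the paper does.
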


We summarize our results in the following theorem.
\begin{theorem}[\tnote{Closure properties}]
The class of untimed languages of ADBs are closed
under union, concatenation, Kleene star, and intersection with 
regular languages, but not closed under intersection and complementation.
\qed
\end{theorem}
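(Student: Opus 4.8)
The plan is to assemble this theorem directly from the closure and non-closure propositions established earlier in the section, taking care that each of them is invoked at the correct level (timed versus untimed) and, where necessary, transferring a timed result to the untimed setting via the $\untime$ projection. Since all the real work has already been done, the proof is essentially bookkeeping, and the body of the argument is simply a sequence of citations.

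First I would dispatch the four positive closure claims. Closure under intersection with regular languages is immediate from Proposition~\ref{prop:IntReg}, which is already phrased for untimed ADB languages. Closure under concatenation and under Kleene star follow directly from Propositions~\ref{prop:Concat} and~\ref{prop:Kleene}, both of which are proved for untimed ADB languages via the $\tick$-padding constructions of Figures~\ref{figure:Concat} and~\ref{figure:Kleene}. The one claim that needs a small extra step is closure under union: Proposition~\ref{prop:UnionClosure} is stated for the \emph{timed} languages, producing an ADB $\A$ with $\lan(\A)=\lan(\A_1)\cup\lan(\A_2)$. To conclude the untimed statement I would observe that $\untime$ distributes over union, so that $\ulan(\A)=\untime(\lan(\A))=\untime(\lan(\A_1))\cup\untime(\lan(\A_2))=\ulan(\A_1)\cup\ulan(\A_2)$, which exhibits $\A$ as an ADB whose untimed language is $\ulan(\A_1)\cup\ulan(\A_2)$.

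Next I would dispatch the two negative claims. Non-closure under intersection is exactly the second part of Proposition~\ref{prop:NonClosureInt}, which exhibits ADBs $\A_1,\A_2$ whose untimed intersection $\ulan(\A_1)\cap\ulan(\A_2)$ is not an untimed ADB language. Non-closure under complementation is the content of Proposition~\ref{prop:NonClosureComp}, whose statement already includes the untimed assertion $\overline{\ulan}(\A)\neq\ulan(\A')$ for all $\A'$. Both of these negative results rest on the pumping machinery of Proposition~\ref{prop:PumpingLemmaRuns} (and, in the complementation case, on the fact that the complement of an untimed ADB language cannot itself be pumped in the required way), so no new argument is introduced at this stage.

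I do not expect a genuine obstacle here, since every component is a black box proved earlier; the only point demanding any care is the union case, where one must be explicit that commuting $\untime$ with union is legitimate so that the timed construction of Proposition~\ref{prop:UnionClosure} transfers to the untimed setting. All the mathematical weight lives in the individual propositions already cited, in particular the delay-depth argument of Proposition~\ref{proposition:NestedCFGLan} and the pumping argument of Proposition~\ref{prop:PumpingLemmaRuns} underlying the non-closure results.
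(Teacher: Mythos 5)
Your proposal is correct and takes essentially the same route as the paper, whose own ``proof'' of this theorem is exactly the assembly of Propositions~\ref{prop:UnionClosure}, \ref{prop:IntReg}, \ref{prop:Concat}, \ref{prop:Kleene}, \ref{prop:NonClosureInt} and~\ref{prop:NonClosureComp}; your explicit remark that $\untime$ commutes with union, so that the timed statement of Proposition~\ref{prop:UnionClosure} transfers to the untimed level, is a harmless and correct refinement. One side remark is inaccurate but does not affect validity: the paper proves Proposition~\ref{prop:NonClosureComp} not by pumping complements but by De Morgan (closure under complementation together with closure under union would force closure under intersection, contradicting Proposition~\ref{prop:NonClosureInt}), though since you invoke that proposition as a black box your argument stands as written.
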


\section{Decision Problems and Model Checking}
In this section we first study the decision problems such as
emptiness, universality for ADBs, and then study the model 
checking problem.
In the model checking problem we consider an ADB as the model
to generate words, and a specification given as regular language.
Our goal is to check the containment of the untimed language
of the ADB in the regular language.

\subsection{Decision Problems}

\begin{proposition}[\tnote{Emptiness checking}] 
\label{proposition:Emptiness}
Given an ADB $\A$, it can be checked in linear time whether $\lan(\A) =\emptyset$.
\end{proposition}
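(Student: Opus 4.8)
The plan is to reduce emptiness of the timed language $\lan(\A)$ to a pure reachability question on the underlying graph of $\A$, and to observe that the timing information is irrelevant to emptiness. By Proposition~\ref{prop:RegularConnection}, we have $\lan(\A) = \oword(\rlan(\A))$, where $\rlan(\A)$ is the regular language of the finite automaton $\reg(\A)$. Since the function $\oword(\cdot)$ is a total function defined on every string in $\left(\left(\Sigma\times D\right)\cup\set{\epsilon,\tick}\right)^*$, it maps each word to a (unique) timed word and never to ``undefined.'' Consequently $\oword(\rlan(\A))$ is empty if and only if $\rlan(\A)$ is empty. Thus $\lan(\A) = \emptyset$ if and only if $\rlan(\A) = \emptyset$.

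The first step I would carry out is to make the above equivalence precise: $\lan(\A) = \emptyset \iff \rlan(\A) = \emptyset$, justified by the fact that $\oword$ is total (each generating run produces exactly one timed word). The second step is to recall that $\reg(\A)$ is an ordinary nondeterministic finite automaton over the alphabet $\Sigma^{\circledcirc}$ with the same location set $L$, the same start location $l_s$, and the same accepting set $L_f$ as $\A$. Emptiness of a nondeterministic finite automaton is the classical graph-reachability problem: $\rlan(\A) \neq \emptyset$ if and only if there exists a path in the transition graph of $\reg(\A)$ from $l_s$ to some location in $L_f$. This reachability check can be performed by a single depth-first or breadth-first search over the graph whose vertices are $L$ and whose edges are given by $\delta^{\circledcirc}$ (equivalently, by $\delta$ together with the $\epsilon$ and $\tick$ transitions), which runs in time linear in the size of $\A$, i.e.\ $O(|L| + |\delta|)$.

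Putting these together: I would conclude that checking $\lan(\A) = \emptyset$ amounts to checking whether any accepting location is reachable from $l_s$ in the transition graph of $\A$, ignoring the delay labels entirely, and this search takes linear time. The only point that requires a small argument, rather than being entirely routine, is the observation that the delay blocks and the presence of $\tick$ transitions cannot cause a generating run to ``fail'' to produce a timed word --- that is, that totality of $\oword$ holds so that nonemptiness of the run set coincides with nonemptiness of the output language. This is immediate from the definition of $\oword$ (Function~\ref{function:Oword} always returns a well-defined timed word), so there is no genuine obstacle; the result follows directly from Proposition~\ref{prop:RegularConnection} combined with linear-time reachability on finite automata.
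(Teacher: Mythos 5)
Your proposal is correct and follows essentially the same route as the paper: the paper's proof is the one-line observation that $\lan(\A)$ is non-empty iff an accepting location is reachable from the initial location, which is exactly your reduction to graph reachability. Your additional justification via Proposition~\ref{prop:RegularConnection} and the totality of $\oword$ simply spells out why the delay labels and $\tick$ transitions can be ignored, a point the paper leaves implicit.
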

\begin{proof}
The proposition follows from the fact that $\lan(\A)$ is non-empty iff there is a
path from the initial location to an accepting location.
\qed
\end{proof}

\begin{proposition}[\tnote{Membership checking of timed words}]
\label{prop:Membership}
Given an ADB $\A$ with $n_\A$ locations and $m_\A$ edges, and a timed word $w$, checking
whether $w \in \lan(\A)$ can be checked in time $O\left(M \cdot \left(n_{\A} + m_{\A} + |w| +  \tend\right)\right)$,
where 
$\tend$ is the largest timestamp in $w$, and $M$ is the largest delay of a delay block
in $\A$
(thus, if $M$ is a constant and  $\tend = O( n_{\A} + m_{\A} + |w| )$, 
then we have a linear  time algorithm).
\end{proposition}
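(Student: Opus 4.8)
The plan is to reduce membership to the existence of an accepting run of the ``interpreted'' automaton $\reg(\A)$ whose $\oword$-image equals $w$, and then to decide this by a time-swept reachability computation. By Proposition~\ref{prop:RegularConnection} we have $w\in\lan(\A)$ iff there is a word $r\in\rlan(\A)$ with $\oword(r)=w$, i.e.\ an accepting run of $\A$ generating $w$. The first step is to read off from the definition of $\oword$ (equivalently, from Function~\ref{function:Oword}) what such a run must look like. Since $\stablesorttime$ orders symbols by timestamp and preserves run order among equal timestamps, for each $T$ the block $W_T$ of $w$ consisting of the symbols with timestamp $T$ lists their $\sigma$-components in exactly the order in which their generating edges occur along the run, and $w=W_0\,W_1\cdots W_{\tend}$. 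These blocks are extracted in $O(|w|)$ preprocessing because $w$ is already time-sorted. Hence a run generates $w$ iff, scanning it left to right while tracking the current time $\tau$ (the number of $\tick$s seen so far), every output edge $\tuple{\sigma,\block{m}}$ taken matches the next still-unconsumed symbol of block $W_{\tau+m}$, and every block is consumed in full.

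First I would turn this characterisation into a reachability problem. A configuration is a tuple $(\ell,\tau,\rho)$ where $\ell$ is a location, $\tau$ the current time, and $\rho=(\rho_0,\dots,\rho_M)$ records how many symbols of the $M+1$ live blocks $W_\tau,\dots,W_{\tau+M}$ have been consumed. The moves are: an $\epsilon$-edge of $\reg(\A)$ changes only $\ell$; an output edge $\tuple{\sigma,\block{m}}$ is enabled only when $W_{\tau+m}[\rho_m]=\sigma$ and then advances $\rho_m$ and $\ell$; a $\tick$ is enabled only when $\rho_0=|W_\tau|$ (the oldest live block is exactly full), and it advances $\tau$, discards $\rho_0$, shifts the window, and opens the fresh block $W_{\tau+M}$ with pointer $0$. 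The initial configuration is $(l_s,0,(0,\dots,0))$, and a configuration is accepting when $\ell\in L_f$ and all $|w|$ symbols of $w$ have been consumed (equivalently every live block is full); the queue-draining the model performs when the run halts outputs only already-scheduled symbols and needs no extra moves. I would compute the reachable configurations by sweeping $\tau$ from $0$ to $\tend$: within a fixed $\tau$ only $\epsilon$- and output-edges act, so this is ordinary graph reachability over the $\epsilon$/output sub-graph of $\reg(\A)$, and at each $\tick$ the entire frontier is filtered and shifted.

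The main obstacle, and the place where the stated complexity really has to be earned, is bounding the number of reachable pointer-vectors $\rho$. A priori two runs can reach the same $(\ell,\tau)$ with different $\rho$, because block $W_{\tau+j}$ may be filled by different amounts of delay-$j$ output produced during slice $\tau$; so the frontier is not obviously small. The key is to show these configurations collapse: I would prove a dominance/determinism statement to the effect that, at each $(\ell,\tau)$, it suffices to keep a single (e.g.\ componentwise-maximal) pointer-vector, using that each block is a fixed string consumed strictly left to right and that the completion-at-$\tick$ constraint pins down the oldest block exactly. Granting this, each location carries $O(1)$ live pointer information, so one time-slice is processed in $O(M\cdot(n_\A+m_\A))$ (reachability over the sub-graph, across the $M+1$ live blocks), the $\tick$ filtering and window shift cost $O(M)$ per slice over $\tend$ slices, and each output-edge match charges $O(1)$ to the symbol it consumes for $O(|w|)$ total; combining these gives $O(M\,(n_\A+m_\A+|w|+\tend))$. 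I expect this collapsing argument --- rather than the construction itself --- to be the crux, since it is exactly where the reordering performed by $\oword$ must be tamed.
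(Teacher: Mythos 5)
Your construction is, in substance, the same as the paper's: your configurations $(\ell,\tau,\rho)$ are exactly the states of the product ADB $\A^{\ddagger}$ that the paper builds from $\A$ and a single-path automaton $\A_w$ for $w$ --- one component for the location of $\A$, plus $M+1$ pointers into the live timestamp blocks of $w$, advanced by output edges, left-shifted at $\tick$s, with $\tick$ guarded by completion of the oldest block. That characterisation of membership is sound. The genuine gap is at precisely the step you flag as the crux, and the repair you propose is false: it is \emph{not} sufficient to keep one pointer-vector per $(\ell,\tau)$. Acceptance from $(\ell,\tau,\rho)$ requires the continuation to output \emph{exactly} the unconsumed suffixes of the live blocks, so different vectors at the same $(\ell,\tau)$ impose incomparable obligations and neither dominates the other. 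Concretely, with all delays $0$ and $w=\tuple{a,0}\tuple{b,0}$, take edges $\ell_0 \xrightarrow{\epsilon} \ell$, $\ell_0 \xrightarrow{(a,\block{0})} \ell$, $\ell \xrightarrow{(a,\block{0})} \ell' \xrightarrow{(b,\block{0})} \ell_f$ with $\ell_f$ accepting: both $(\ell,0,0)$ and $(\ell,0,1)$ are reachable, the former leads to acceptance and the latter is stuck, so keeping the componentwise-maximal vector wrongly rejects $w\in\lan(\A)$. Replacing the tail by the single edge $\ell\xrightarrow{(b,\block{0})}\ell_f$ makes the \emph{minimal} vector the wrong one to keep; hence no uniform single-vector rule is sound. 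Worse, the frontier genuinely need not be small: a single accepting location with self-loops $(a,\block{0})$ and $(a,\block{1})$ and $w = \tuple{a,0}^k \tuple{a,1}^k$ reaches all $(k+1)^2$ vectors $(p_0,p_1)$ at time $0$, so the plain time-swept reachability costs $\Omega(|w|^2)$ already for $M=1$, not $O\left(M \cdot \left(n_{\A} + m_{\A} + |w| +  \tend\right)\right)$.

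You should also know that the paper's own proof does not close this hole: after constructing $\A^{\ddagger}$ it asserts, without argument, that an on-the-fly search from the initial state ``will encounter at most $\tend + |w| + n_{\A}$ locations'', and the two-self-loop example above contradicts that count. So you have correctly isolated the one point on which the claimed running time actually rests --- something the paper glosses over --- but the dominance lemma you propose cannot be the missing argument. As it stands, your proposal (like the paper's proof) establishes decidability of membership via the product construction, and polynomial-time membership for fixed $M$ by bounding the configuration space by $n_{\A}\cdot(|w|+\tend)^{M+1}$, but not the stated linear-in-everything bound.
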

\begin{proof}
(Sketch.)
Let $w = \tuple{w_0^{\sigma}, w_0^t} \tuple{w_1^{\sigma}, w_1^t} \dots
\tuple{w_n^{\sigma}, w_n^t}$.
Let the end timestamp of $w$ be $\tend$ (\emph{i.e.}  $   w_n^t = \tend$).
We first construct a finite state deterministic  regular automaton $\A_w$ with just one path 
(corresponding to $w$) over the
alphabet 
$\Sigma_w = \set{\tick, \tuple{w_0^{\sigma}, w_0^t},  \tuple{w_1^{\sigma}, w_1^t}
\dots \tuple{w_n^{\sigma}, w_n^t}}$.
We then construct a (non-deterministic)  ADB  $\A^{\ddagger}$ based on
$\A$ and $\A_w$ such that
$\A^{\ddagger}$ has an accepting  path  iff
$\A$ outputs the timed word $w$.
Let $M$ be the largest delay of a delay block in $\A$.
The automaton $\A^{\ddagger}$ will simulate the executions of $\A$; and of $\A_w$ 
simultaneously for
the current time, and for time upto $M$ time units in the future.
That is, the automaton $\A^{\ddagger}$ is able to verify that $\A_w$ 
 first generates the output symbols corresponding to the current time outputs in $\A$,
then generate output symbols corresponding to current time plus one in $\A$, and so on.
The details of the construction are omitted for lack of space.
\qed
\end{proof}


\begin{proposition}[\tnote{Universality}]
\label{proposition:Universality}
Let $\A$ be an ADB.
It is undecidable to check whether $\ulan(\A) = \Sigma^*$.
\end{proposition}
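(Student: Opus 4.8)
The plan is to reduce an undecidable problem to \emph{non}-universality, so that deciding $\ulan(\A) = \Sigma^*$ would decide that problem. I would reduce from the halting problem for deterministic two-counter (Minsky) machines --- equivalently, from Turing-machine computation histories, as in the classical proof of undecidability of universality for pushdown automata. Given a machine $M$, the goal is to build an ADB $\A_M$ over a fixed alphabet $\Sigma$ such that $\ulan(\A_M) = \Sigma^*$ if and only if $M$ does \emph{not} halt. The encoding represents a halting run of $M$ as a word $\# C_0 \# C_1 \# \cdots \# C_\ell \#$, where each $C_j$ is a configuration (finite-control state together with the two counter values written in unary), $C_0$ is the initial configuration, $C_\ell$ is halting, and each $C_{j+1}$ is the unique successor of $C_j$. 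Then $M$ halts iff at least one such valid history exists.

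The ADB $\A_M$ is built to accept exactly the words that are \emph{not} valid halting histories; using closure under union (Proposition~\ref{prop:UnionClosure}) it is the union of several ``flaw detectors,'' one for each way a word can fail. The first three flaws --- a malformed block structure, a wrong initial configuration $C_0$, or a final configuration that is not halting --- describe regular sublanguages, and since ADBs subsume regular automata (Proposition~\ref{prop:GenReg}) each is realizable. The crux is the fourth detector: an ADB accepting a word whenever some adjacent pair $C_j, C_{j+1}$ violates the transition relation of $M$. Here I would exploit the delay-block matching primitive exactly as in the adjacency-checking automata of Proposition~\ref{prop:NonClosureInt}: the automaton nondeterministically guesses an index $j$ and a position $p$, uses a single delay block to align position $p$ of $C_j$ with position $p$ of $C_{j+1}$ (matching the unary prefix-counts the way the $\A_1$ of Figure~\ref{figure:example-one} matches $a^n$ with $b^n$), and carries the bounded window of $C_j$ around position $p$ forward in time via the same delay so that, upon reaching position $p$ of $C_{j+1}$, the finite control can verify a \emph{local} inconsistency with the prescribed counter update. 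Everything outside the single guessed pair is left unconstrained.

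The step I expect to be the main obstacle is establishing that this single-step inconsistency really is expressible as an ADB language, despite the inexpressibility result of Proposition~\ref{proposition:NestedCFGLan}. The bounded-nesting limitation is not violated here, because the detector performs only \emph{one} adjacent comparison rather than an unbounded stack of nested matchings; a single delay block whose count is unbounded-valued suffices, just as one block matches $a^n b^n$ for all $n$. The delicate part is coaxing the delay mechanism to simultaneously (i)~align the two positions and (ii)~transport the finite local window of $C_j$ to position $p$ of $C_{j+1}$, while guaranteeing that the global timing structure does not force spurious matchings among the other, unconstrained blocks. Once the four detectors are combined by union, correctness follows: $\ulan(\A_M)$ omits a word iff that word is a valid halting history of $M$, so $\A_M$ is non-universal iff $M$ halts, and undecidability of the halting problem yields undecidability of ADB universality.
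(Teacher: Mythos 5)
Your proposal is correct and follows essentially the same route as the paper's proof: the paper also reduces from the existence of valid Turing-machine computation histories, builds the ADB as a union of flaw detectors (regular ones for malformed structure, wrong initial, and non-accepting final configurations, realized via Propositions~\ref{prop:GenReg} and~\ref{prop:UnionClosure}), and handles the crucial ``bad step'' detector exactly by your mechanism --- generating $w_i$ with delay~$0$ and $w_{i+1}$ with delay~$1$ in interleaved fashion so the finite control keeps a bounded local window and forces a single mismatch, leaving everything before and after unconstrained. The step you flagged as the main obstacle is resolved in the paper precisely this way, so your sketch is sound.
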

\begin{proof}
Let $\T$ be a Turing machine with $\Sigma$ as the tape alphabet 
Let a valid computation of $\T$ be denoted by an untimed string
$w= w_0 \# w_1 \#\dots w_n$ such that $w_0$ represents the initial tape
configuration, $w_{i+1}$ is a tape configuration that follows from $w_i$ for
$i\geq 0$, and $\#$ is a special delimiter symbol.
We first show that there exists an ADB $\A$ such that
$\ulan(\A) $ is the set of strings denoting the invalid computations
of $\T$.

If a string $w$ represents an invalid computation, then one
of the following conditions must hold.
\begin{compactenum}
\item
  The string $w$ is not of the form  $w_0 \# w_1 \#\dots w_n$, where each $w_i$
  denotes a tape configuration.
\item
  $w_0$ is not an initial tape configuration.
\item
  $w_n$ is not an accepting  tape configuration.
\item
  $w_{i+1}$ does not follow from $w_i$ for some $i$.
  
\end{compactenum}

The set of strings satisfying conditions 1,2 or 3 is regular.
There exists an ADB $\A'$ such that $\ulan(\A')$ is the set of strings
satisfying the last condition.
The automaton $\A'$ first generates strings from $(\Sigma\cup\set{\#})^*$ at
time 0.
It then non-deterministically moves to a location from which it generates
$w_i \# w_{i+1}$ such that 
(a)~$w_i$ is a configuration (\emph{i.e.}, a string from $\Sigma^* Q \Sigma^*$ where
$Q$ is the set of locations of $\T$,
(b)~the configuration $w_i$  is generated at time 0 and $w_{i+1}$ generated
at time 1, and
(c)~$w_{i+1}$ is not a configuration that follows from $w_i$ 
(this can be done by ``knowing'' some  future two symbols of $w_i$ at time 0, and
accordingly generating a symbol at time 1 for $w_{i+1}$ such that 
$w_{i+1}$ cannot be a configuration following $w_i$.
Once such $w_i \# w_{i+1}$ is generated, $\A'$ then generates strings
from $(\Sigma\cup\set{\#})^*$ at time 2.
Since ADBs are closed under union, we can take the union of ADBs generating untimed
strings satisfying either of the four conditions.
The ADBs for conditions 1,2 and 3 ``operate'' at times
3, 4 and 5 respectively (\emph{i.e.} they are regular automatons with delay blocks
of 3,4 and 5 respectively at every transition).
This union ADB $\A$ will then generate all untimed strings denoting  invalid computations
of $\T$.

Now, if were decidable to check whether $\ulan(\A) = \Sigma^*$, then it would mean
we can check whether the language of $\T$ is non-empty, as the language of $\T$ is
non-empty iff the there exists a valid computation of $\T$, and 
a valid computation of $\T$ exists iff $\ulan(\A) \neq \Sigma^*$,
Thus, if $\A$ is an ADB, it is undecidable in general to check whether
$\ulan(\A) = \Sigma^*$.
\qed
\end{proof}

\begin{corollary}[\tnote{Equivalence to a regular language}]
\label{cor:EquivReg}
Let $\A$ be an ADB and let $\R$ be a regular language.
It is undecidable to check whether $\ulan(\A) = \R$.
\qed
\end{corollary}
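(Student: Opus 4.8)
The plan is to observe that this corollary is an immediate specialization of the undecidability of universality already established in Proposition~\ref{proposition:Universality}, and so the entire argument is a trivial reduction requiring no new construction. First I would note that the full-alphabet language $\Sigma^*$ is itself a regular language: it is recognized, for instance, by a single-state automaton with a self-loop on every symbol. Hence the instance ``is $\ulan(\A) = \Sigma^*$?'' is a legitimate special case of the more general question ``is $\ulan(\A) = \R$?'' with $\R = \Sigma^*$.

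The reduction then goes as follows. Suppose, for the sake of contradiction, that there were a decision procedure $P$ that, given any ADB $\A$ and any (finite-state representation of a) regular language $\R$, decides whether $\ulan(\A) = \R$. I would then use $P$ to decide universality: given an arbitrary ADB $\A$, construct the trivial one-state automaton recognizing $\Sigma^*$ and feed $\A$ together with this automaton to $P$. The output of $P$ would tell us precisely whether $\ulan(\A) = \Sigma^*$, i.e.\ whether $\A$ is universal. But Proposition~\ref{proposition:Universality} shows universality is undecidable, a contradiction. Therefore no such procedure $P$ exists, and checking $\ulan(\A) = \R$ is undecidable.

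There is essentially no obstacle here: the only point worth stating explicitly is that $\Sigma^*$ qualifies as a regular language, which lets universality sit inside the equivalence problem as a special case. The undecidability of equivalence is thus inherited directly from the Turing-machine encoding used in the proof of Proposition~\ref{proposition:Universality}, and no separate or more delicate reduction is needed.
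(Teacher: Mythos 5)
Your proposal is correct and matches the paper's intended argument exactly: the paper states this as an immediate corollary of Proposition~\ref{proposition:Universality}, precisely because $\Sigma^*$ is regular, so universality is the special case $\R = \Sigma^*$ of the equivalence problem. No further comment is needed.
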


\begin{corollary}[\tnote{Containment in another delay model}]
\label{cor:Contain}
Let $\A$ and $\A'$ be  ADBs.
It is undecidable to check whether $\ulan(\A) \subseteq \ulan(\A')$.
\end{corollary}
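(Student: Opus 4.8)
The plan is to reduce the universality problem for ADBs, shown undecidable in Proposition~\ref{proposition:Universality}, to the containment problem. The key observation is that universality is merely a special case of containment: since $\ulan(\A) \subseteq \Sigma^*$ holds for every ADB $\A$, we have that $\ulan(\A) = \Sigma^*$ if and only if $\Sigma^* \subseteq \ulan(\A)$. Thus if I can name an ADB whose untimed language is exactly $\Sigma^*$ and use it as the \emph{subset} argument of a containment query, the query collapses to a universality test.

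First I would exhibit an ADB $\A''$ with $\ulan(\A'') = \Sigma^*$. Such an ADB exists because $\Sigma^*$ is regular, and by Proposition~\ref{prop:GenReg} every regular language is realized as the untimed language of some ADB; concretely, one may take a single accepting location with a self-loop transition $\delta(l, \sigma, \block{0})$ for each $\sigma \in \Sigma$, so that all output symbols carry delay $0$ and are emitted in generation order at time $0$. This $\A''$ depends only on the alphabet $\Sigma$ and can be constructed uniformly and effectively.

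Next I would observe that, for an arbitrary input ADB $\A$ over $\Sigma$, the containment query $\ulan(\A'') \subseteq \ulan(\A)$ is literally the query $\Sigma^* \subseteq \ulan(\A)$, which by the opening observation is equivalent to $\ulan(\A) = \Sigma^*$. Hence any decision procedure for the containment problem, applied to the instance $\tuple{\A'', \A}$, would decide universality for $\A$. Since the map $\A \mapsto \tuple{\A'', \A}$ is computable, this is a valid many-one reduction, and it contradicts Proposition~\ref{proposition:Universality}.

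I do not anticipate a genuine obstacle here, as the argument is a direct reduction rather than a construction with subtle correctness conditions; the only points requiring verification are that $\A''$ is a legitimate ADB with $\ulan(\A'') = \Sigma^*$ and that the reduction is effective, both of which are immediate from Proposition~\ref{prop:GenReg} and the fact that $\A''$ is fixed once $\Sigma$ is given.
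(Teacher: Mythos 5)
Your proposal is correct and follows essentially the same route as the paper: both fix an ADB whose untimed language is $\Sigma^*$ (which exists by Proposition~\ref{prop:GenReg}), use it as the left-hand side of the containment query, and observe that $\Sigma^* \subseteq \ulan(\A)$ is equivalent to universality of $\A$, which is undecidable by Proposition~\ref{proposition:Universality}. Your write-up is slightly more explicit than the paper's, spelling out the construction of the universal ADB and the effectiveness of the reduction, but the underlying argument is identical.
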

\begin{proof}
We reduce universality of untimed languages of ADBs to this problem.
Let $\A$ be an ADB such that $\ulan(\A) = \Sigma^*$.
Then, given any ADB $\A'$, we have $\ulan(\A) \subseteq \ulan(\A')$ iff
$\ulan(\A') = \Sigma^*$.
\qed
\end{proof}

\begin{theorem}[\tnote{Decision problems}]
The following assertions hold for timed and untimed ADB languages:
\begin{compactenum}
\item The emptiness checking can be achieved in linear time, and 
the membership checking can also be achieved in linear time if the largest
delay of the delay blocks is constant, and the largest timestamp in the word 
is of the order of the length of the word plus the automaton size.
\item The universality, containment in other untimed ADB languages, and
equivalence to regular languages are undecidable.
\end{compactenum}
\end{theorem}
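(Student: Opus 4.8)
The plan is to observe that this theorem is purely a consolidation of results already established in the section, so the proof will assemble the earlier propositions and corollaries rather than prove anything new. I would split the argument along the two clauses of the statement and cite the relevant prior result for each claim.

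For clause (1), the two assertions come directly from the two membership/emptiness results. The linear-time emptiness bound is exactly Proposition~\ref{proposition:Emptiness}, which reduces nonemptiness of $\lan(\A)$ to reachability of an accepting location from the start location in the underlying graph of $\A$; this is a single linear-time graph search. The membership bound is Proposition~\ref{prop:Membership}: the quoted running time $O(M\cdot(n_\A+m_\A+|w|+\tend))$ collapses to linear time precisely under the two side conditions stated here, namely that the largest delay $M$ is constant and that the end timestamp $\tend$ is of order $n_\A+m_\A+|w|$. So the only thing to do in the write-up is to record that the general bound specializes to linear time under these hypotheses.

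For clause (2), each of the three undecidability claims is already a named earlier result. Universality ($\ulan(\A)=\Sigma^*$) is undecidable by Proposition~\ref{proposition:Universality}, via the encoding of invalid Turing-machine computations (where the delay blocks are used to generate a configuration $w_i$ at one time step and a faulty successor $w_{i+1}$ one time unit later). Undecidability of equivalence to a regular language follows as Corollary~\ref{cor:EquivReg}, since $\Sigma^*$ is regular, and undecidability of containment $\ulan(\A)\subseteq\ulan(\A')$ follows as Corollary~\ref{cor:Contain}, by reducing universality to containment using a fixed ADB whose untimed language is $\Sigma^*$.

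Since every ingredient is already proven, there is no genuine technical obstacle here; the entire content of the final theorem is bookkeeping. The only thing I would take care with is to state explicitly, for clause (1), that the stated hypotheses ($M$ constant and $\tend=O(n_\A+m_\A+|w|)$) are exactly what turn the general Proposition~\ref{prop:Membership} bound into a linear one, so that the theorem's phrasing matches the cited result. Concretely, the proof reduces to the single sentence that the assertions follow from Propositions~\ref{proposition:Emptiness},~\ref{prop:Membership},~\ref{proposition:Universality} and Corollaries~\ref{cor:EquivReg},~\ref{cor:Contain}.
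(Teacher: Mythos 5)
Your proposal is correct and matches the paper's own proof exactly: clause (1) is cited to Propositions~\ref{proposition:Emptiness} and~\ref{prop:Membership}, and clause (2) to Proposition~\ref{proposition:Universality} together with Corollaries~\ref{cor:EquivReg} and~\ref{cor:Contain}. The theorem is indeed pure consolidation, and your added remark about the side conditions collapsing the membership bound to linear time is a faithful reading of the cited result.
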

\begin{proof}
The first item follows from Proposition~\ref{proposition:Emptiness} and Proposition~\ref{prop:Membership}.
The second item follows from Proposition~\ref{proposition:Universality}, 
Corollaries~\ref{cor:EquivReg} and~\ref{cor:Contain}.
\qed
\end{proof}

\subsection{Model Checking}
In this section we study the containment of a given untimed language of
an ADB within a given regular language.

\begin{proposition}[\tnote{Checking emptiness of  intersection with a regular language}]
\label{prop:IntRegEmpty}
Let $\A$ be an ADB, and let $\A_r$ be a regular finite state automaton. 
It is decidable to check emptiness of $\ulan(\A)\cap \ulan(\A_r)$ in time
$O(n_{\A}\cdot n_{\A_r}^{2M+1} + 
m_{\A} \cdot m_{\A_r}^{M+1} +  m_{\A}\cdot n_{\A_r})$ where
$n_{\A}, n_{\A_r} $ are the numbers of locations in $\A, \A_r$ and
$m_{\A}, m_{\A_r} $  the numbers of edges in $\A, \A_r$ respectively, and 
 $M$ is the largest delay of a delay block in $\A$.
\end{proposition}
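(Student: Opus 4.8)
The plan is to reduce emptiness of $\ulan(\A)\cap\ulan(\A_r)$ to emptiness of a single ADB, by constructing a product ADB $\A'$ that simulates a generating run of $\A$ while threading a run of $\A_r$ through the \emph{reordered} output word. I would take the locations of $\A'$ to be tuples $(\ell,\vec q,\vec c)$, where $\ell$ is a location of $\A$, $\vec q=(q_0,\dots,q_M)$ is a vector of $M+1$ states of $\A_r$ and $\vec c=(c_0,\dots,c_{M-1})$ a vector of $M$ states of $\A_r$; this gives the $n_{\A}\cdot n_{\A_r}^{2M+1}$ location count. The edges of $\A'$ carry exactly the same output symbols and delay blocks as those of $\A$, so the untimed output words of $\A'$ form a subset of those of $\A$, but they additionally update the $\A_r$-component, and the accepting locations of $\A'$ will force the threaded run of $\A_r$ to be accepting. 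The target is $\ulan(\A')=\ulan(\A)\cap\ulan(\A_r)$, so that emptiness of the intersection is emptiness of $\lan(\A')$, decidable in linear time in $|\A'|$ by Proposition~\ref{proposition:Emptiness}, with the stated bound coming from the size of $\A'$.

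The key structural observation driving the construction is how the output word $\oword(r)=O_0O_1\cdots$ decomposes, where $O_T$ is the block of symbols with timestamp $T$ that $\A_r$ reads in order. Because the largest delay is at most $M$ and time advances only at $\tick$ transitions, every symbol with timestamp $T$ is generated during one of the $\le M+1$ \emph{time blocks} at current times $T-M,\dots,T$, and the stable sort places the chunk generated at the earlier current time first. Thus $O_T=C_{T-M,T}\,C_{T-M+1,T}\cdots C_{T,T}$, where $C_{\tau,T}$ is the generation-ordered subsequence of symbols produced in the time block at current time $\tau$ with delay $T-\tau$. Reading $\A_r$ across $O_T$ factors through intermediate states, and crucially the exit state of $\A_r$ on $C_{\tau,T}$ equals its entry state on the next chunk $C_{\tau+1,T}$. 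Processing the run block by block, $q_j$ holds the boundary state of $\A_r$ entering $O_{\tau'+j}$ and $c_j$ the running cursor inside the partially read $O_{\tau'+j}$; a symbol edge of delay $j$ advances the single cursor $c_j$ (or $q_M$ when $j=M$) by $\A_r$, and a $\tick$ shifts both vectors down, guesses a fresh boundary into $q_M$ for the newly reachable output time, and verifies that the just-completed read $c_0$ matches the guessed boundary $q_1$. This is precisely where the delay bound $M$ confines the window, and hence the state, to $2M+1$ copies of $\A_r$.

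For correctness I would argue both inclusions. For $\ulan(\A')\subseteq\ulan(\A)\cap\ulan(\A_r)$, any accepting run of $\A'$ projects (forgetting the $\A_r$-component) to an accepting run of $\A$ with the same delay-stamped edge sequence, hence by Proposition~\ref{prop:RegularConnection} to the same output word $w$; the threaded boundary and cursor states, together with the verifications at every $\tick$ and a flush-consistency check at acceptance, exhibit an accepting run of $\A_r$ on $w$, so $w\in\ulan(\A_r)$. For the converse, given $w\in\ulan(\A)\cap\ulan(\A_r)$, I fix a generating run $r$ of $\A$ with $\untime(\oword(r))=w$ and an accepting run $\rho$ of $\A_r$ on $w$; the intermediate states of $\rho$ at the chunk boundaries determined by the decomposition above supply the values of $\vec q$ and $\vec c$ to be guessed along $r$, yielding an accepting run of $\A'$. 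The endpoints are handled by initializing $q_0$ to the initial state of $\A_r$ (padding the time blocks before $0$ as empty), and by taking the accepting locations of $\A'$ to be those with $\ell\in L_f$ satisfying a local condition that the still-pending partial reads complete consistently and end in a final state of $\A_r$.

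The main obstacle, and the part I would write most carefully, is proving that this bounded bookkeeping faithfully reconstructs the run of $\A_r$ on the reordered word: one needs the chunk-chaining lemma that the shift-and-advance update at a $\tick$ correctly identifies the exit state of $C_{\tau,T}$ with the entry state of $C_{\tau+1,T}$ for every affected $T$, so that after at most $M+1$ blocks each $O_T$ has been read consecutively and exactly once. The indexing is delicate because a single time block simultaneously contributes the first chunk of the output time that is $M$ steps in the future and the last chunk of the current output time, so aligning the shift, the fresh guess, and the verification is the crux. Once this is established, the complexity claim is routine: the location count is $n_{\A}\cdot n_{\A_r}^{2M+1}$, a direct count of the edges of $\A'$ contributes the remaining summands $m_{\A}\cdot m_{\A_r}^{M+1}$ and $m_{\A}\cdot n_{\A_r}$, and emptiness of $\lan(\A')$ is decided by linear-time reachability from the initial location to an accepting one.
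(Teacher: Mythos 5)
Your overall strategy is the same as the paper's: build a product ADB whose locations carry one location of $\A$ together with $2M+1$ states of $\A_r$, thread a run of $\A_r$ through the reordered output word by a guess--shift--verify discipline at $\tick$ transitions, and reduce to linear-time ADB emptiness. Your chunk decomposition $O_T=C_{T-M,T}\cdots C_{T,T}$ and the verification idea are exactly right. However, your register allocation contains a genuine error: you keep only $M$ cursors $c_0,\dots,c_{M-1}$ and $M+1$ boundary guesses $q_0,\dots,q_M$, letting $q_M$ double as the running cursor for the output time $M$ steps ahead. This merge destroys information that is needed later. The guessed entry state of block $O_{\tau+M}$ must survive \emph{unchanged} from the moment it is guessed (when that block enters the window at offset $M$) until it is compared against the exit state of $O_{\tau+M-1}$, which only becomes known at the tick ending current time $\tau+M-1$; but in your scheme every delay-$M$ symbol generated during current time $\tau$ advances $q_M$ and overwrites the guess. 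After the shifts, this corrupted value is what sits in $q_1$ when the check $c_0=q_1$ finally fires, so you end up comparing the exit of the previous block against ``entry advanced over the first chunk'' rather than against the entry itself.

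A concrete failure with $M=1$: let $\A$ generate $a$ with delay $1$, then $b$ with delay $0$, then $\tick$, then $c$ with delay $0$, and accept; its only output word is $bac$. Let $\A_r$ be the deterministic automaton $s_0\stackrel{b}{\to}s_1\stackrel{a}{\to}s_2\stackrel{c}{\to}s_3$ with $s_3$ accepting and all other transitions going to a dead state, so the intersection is $\set{bac}\neq\emptyset$. In your product, the guess $q_1$ is advanced by the delay-$1$ symbol $a$ to $\delta_r(q_1,a)\in\set{s_2,\mathit{dead}}$ before the tick, while $c_0$ reaches $s_1$; the verification $c_0=q_1$ can therefore never succeed, your product automaton is empty, and your algorithm reports the intersection empty --- the wrong answer. (With a nondeterministic $\A_r$ the same confusion can also make the product accept words that $\A_r$ does not accept, so the construction is unsound in both directions.) The fix is precisely the paper's allocation: keep $M+1$ running cursors, one for each offset $0,\dots,M$ (in particular a genuine separate cursor at offset $M$), and $M$ \emph{immutable} boundary registers for offsets $1,\dots,M$; your $q_0$ is redundant since the entry of the offset-$0$ block has already been verified. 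At each tick a fresh guess initializes both the new offset-$M$ cursor and the new offset-$M$ boundary register, delay-$j$ symbols advance only the cursor at offset $j$, and boundary registers are never written between creation and verification. With this reallocation (same $2M+1$ count) your correctness argument and complexity analysis go through and coincide with the paper's proof.
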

\begin{proof}
We construct a non-deterministic  ADB  $\A^{\dagger}$ such that
$\ulan(\A^{\dagger}) = \ulan(\A)\cap \ulan(\A_r)$
and then apply Proposition~\ref{proposition:Emptiness}.

Let $M$ be the largest delay of a delay block in $\A$.
The automaton $\A^{\dagger}$ will simulate the executions of $\A$; and of $\A_r$ 
simultaneously for
the current time, and for time upto $M$ time units in the future.
That is, the automaton $\A^{\dagger}$ is able to verify that some  execution of $\A_r$
is such that
 $\A_r$ first generates the output symbols corresponding to the current time outputs in $\A$,
then generates symbols corresponding to current time plus one in $\A$, and so on.
To concurrently simulate executions of $\A_r$ corresponding to $M+1$ time
points, the automaton will maintain a tuple of locations.
The tuple  will have $2M+2$ components:
\begin{compactenum}
\item
The first component will correspond to a location of $\A$, and is used  to simulate executions
of $\A$.
\item 
The next $M+1$ components will correspond to locations of $\A_r$ used for
concurrently simulating $\A_r$ corresponding to the 
current time point, and the next $M$
time points.
\item 
The final $M$ components correspond to the ``guesses'' on the 
locations of $\A_r$ for the final locations
of the initial and following $M-1$ timepoints.
Whenever time elapses in $\A$ via an explicit $\tick$ transition, we verify that our ``guesses'' for 
the ending locations
of $\A_r$ are correct.
\end{compactenum}
%
Formally,
let $\A = (L^{\A}, D^{\A}, \Sigma, \delta^{\A}, l_s^{\A}, L_f^{\A})$ and
  $\A_r = (L^{\A_r}, \Sigma, \delta^{\A_r}, l_s^{\A_r}, L_f^{\A_r})$. 
The ADB  $\A^{\dagger}$ is as follows:
\begin{itemize}
\item
  The location set  $L^{\A^{\dagger}}$ is 
  $L^{\A}  \times \left(L^{\A_r}\right)^{2M+1}\, \cup\, \set{l_s^{\A^{\dagger}}}$ where
  $M$ is the largest delay of a delay block in $\A$, and $l_s^{\A^{\dagger}}$ is a new
  location.
\item
  The initial location is $l_s^{\A^{\dagger}}$.
\item
  The transition function $\delta^{\A^{\dagger}}$ is as follows:
  \begin{itemize}
    \item
      $\delta^{\A^{\dagger}}(l_s^{\A^{\dagger}}, \epsilon) = 
      \left\{
        \tuple{l_s^{\A}, l_s^{\A_r}, l_1^{\A_r},\dots, l_M^{\A_r}, l_1^{\A_r},
          \dots, l_M^{\A_r}} \ \left|\ 
        \begin{array}{l}
        l_s^{\A} \text{ is the initial location of }
        \A,\\
        \l_s^{\A_r} \text{ is the initial location of }
        \A_r \\
        \text{and }l_j^{\A_r} \in L \text{ for } M\geq j \geq 1
      \end{array}
      \right.
      \right\}$\\
      The locations $l_s^{\A_r}, l_1^{\A_r},\dots, l_M^{\A_r}$ for $\A_r$ correspond
      the starting location for the current time, and for the following
      $M$ time points.
      The guessed  locations $l_1^{\A_r},\dots, l_M^{\A_r}$ are explicitly stored
      as the last $M$ components (to be verified later).
      There is an $\epsilon$-transition from  $l_s^{\A^{\dagger}}$ to various locations
      corresponding to all the possible guesses for the starting locations for
      times $1$ through $M$.
  
    \item
        $\delta^{\A^{\dagger}}
      (\tuple{l_0^{\A}, l_0^{\A_r}, l_1^{\A_r},\dots, l_M^{\A_r}, 
        \tilde{l}_1^{\A_r},
        \dots, \tilde{l}_M^{\A_r}}, \sigma, \block{t}) =$\\
      \[
      \left\{
        \tuple{{l_0'}^{\A}, {l_0'}^{\A_r}, {l_1'}^{\A_r},
          \dots, {l_M'}^{\A_r},
          \tilde{l}_1^{\A_r},
          \dots, \tilde{l}_M^{\A_r}}\ \left|\
          \begin{array}{l} 
            {l_0'}^{\A} \in 
            \delta^{\A}(l_0^{\A}, \tuple{\sigma, \block{t}}),\\
            {l_t'}^{\A_r} \in  \delta^{\A_r}(l_t^{\A_r}, \sigma), 
            \text{ and }\\
            {l_j'}^{\A_r} = l_j^{\A_r} \text{ for } j\neq t
          \end{array}
        \right.
      \right\}
      \]
      This transition corresponds to the case when $\A$ transitions on
     $\tuple{\sigma,\block{t}}$ from location $l_0^{\A}$.
      In the location $\tuple{l_0^{\A}, l_0^{\A_r}, l_1^{\A_r},\dots, l_M^{\A_r}, 
        \tilde{l}_1^{\A_r},
        \dots, \tilde{l}_M^{\A_r}}$ of $\A_r'$,  the component corresponding
      to $t$ time units in the future is updated.
      The location component of $\A$ is also updated.
      The rest of the components remain the same.

    \item
      $\delta^{\A^{\dagger}}
      (\tuple{l_0^{\A}, l_0^{\A_r}, l_1^{\A_r},\dots, l_M^{\A_r}, 
        \tilde{l}_1^{\A_r},
        \dots, \tilde{l}_M^{\A_r}}, \epsilon) =$\\
      \[
        \left\{
          \tuple{{l_0'}^{\A}, {l_0'}^{\A_r}, {l_1'}^{\A_r},
            \dots, {l_M'}^{\A_r},
            \tilde{l}_1^{\A_r},
            \dots, \tilde{l}_M^{\A_r}}\ \left|\
            \begin{array}{l} 
              {l_0'}^{\A} \in 
              \delta^{\A}(l_0^{\A}, \epsilon)\,\cup\,\set{l_0^{\A}}
            \text{ and }\\
            {l_j'}^{\A_r} \in  \delta^{\A_r}(l_j^{\A_r}, \epsilon)\, \cup\, 
            \set{l_j^{\A_r}}
              \text{ for }   M\geq j \geq 0
            \end{array}
          \right.
        \right\} 
     \]
     The $\epsilon$-transitions of $\A^{\dagger}$ correspond to the case when either
    $\A$ or $\A_r$ make $\epsilon$-transitions.
    The automaton $\A_r$ can make $\epsilon$-transitions either at the current time,
    or be supposed to make them in the future.

  \item 
          $\delta^{\A^{\dagger}}
      (\tuple{l_0^{\A}, l_0^{\A_r}, l_1^{\A_r},\dots, l_M^{\A_r}, 
        \tilde{l}_1^{\A_r},
        \dots, \tilde{l}_M^{\A_r}}, \tick) =$\\
      \begin{compactitem}
      \item  If  $ l_0^{\A_r}= \tilde{l}_{1}^{\A_r}$ then
        \[
        \left\{
          \tuple{{l_0'}^{\A}, {l_0'}^{\A_r}, {l_1'}^{\A_r},
            \dots, {l_M'}^{\A_r},
            \tilde{l'}_1^{\A_r},
            \dots, \tilde{l'}_M^{\A_r}}\ \left|\
            \begin{array}{l} 
              {l_0'}^{\A} \in 
              \delta^{\A}(l_0^{\A}, \tick), \\
              {l_j'}^{\A_r} = l_{j+1}^{\A_r} 
              \text{ and } \tilde{l'}_j^{\A_r} = \tilde{l}_{j+1}^{\A_r} 
              \text{ for } 0\leq j\leq M-1,\\
              {l_M'}^{\A_r}\in L^{\A_r} \text{ and } 
              \tilde{l'}_M^{\A_r}  \in L^{\A_r} \text{ with } 
              {l_M'}^{\A_r} =  \tilde{l'}_M^{\A_r}  \\
            \end{array}
          \right.
        \right\}
    \]\\
    $\qquad$

\item 
        $\emptyset$ otherwise.
      \end{compactitem}

    When time advances from time $t$ to $t+1$ in $\A^{\dagger}$ (and in $\A$), 
    we need to verify that the initial guess $\tilde{l}_{1}^{\A_r}$
    was correct
    (recall that $\tilde{l}_{1}^{\A_r}$ was guessed previously in time $t$ to be the location
    in $\A_r$ at the beginning of time $t+1$.
    Also, since one time unit has passed, the guess  $\tilde{l}_{j+1}^{\A_r}$ for the
    earlier $j+1$-th future time unit now becomes the guess corresponding to
    $j$-th future time unit.
    Similarly, the location component $l_{j+1}^{\A_r}$ corresponding to the 
    $j+1$-th future time unit now becomes the location component corresponding to
    $j$-th future time unit.
    A new guess for the starting location for the $M$-th future time unit is also
    chosen.
              
    \end{itemize}

  \item
    The set of final locations is 
    \[
    \left\{
    \tuple{l_0^{\A}, l_0^{\A_r}, l_1^{\A_r},\dots, l_M^{\A_r}, 
      \tilde{l}_1^{\A_r},
      \dots, \tilde{l}_M^{\A_r}} \left|\ 
      \begin{array}{l} 
        l_0^{\A} \in L_f^{\A} \text{ and } 
        l_M^{\A_r} \in L_f^{\A_r} \text{ and } \\
        l_j^{\A_r} = \tilde{l}_{j+1}^{\A_r}  \text{ for } 0\leq j \leq M-1
      \end{array}
    \right.
  \right\}
  \]
  $\A^{\dagger}$ ensures that the automaton $\A_r$ ends up in an accepting location 
  at the end.
  Also, $\A^{\dagger}$ checks that the guesses for the starting locations at each of the
  $M$ future timepoints were correct.
\end{itemize}
The automaton $\A^{\dagger}$ has an accepting run $\pi^{\dagger}$ only if both of the conditions
hold.
\begin{compactenum}
\item
  The run $\pi^{\dagger}$ corresponds to a matching generating run $\pi$ in $\A$.

\item
  The untimed word $\untime(w_{\pi})$
  can be generated from $\A_r$, where
   $w_{\pi}$ denotes the timed word output by $\A$ corresponding to the generating path $\pi$.
   That is, the automaton  $\A_r$ generates the output symbols in $w_{\pi}$ in 
   the \emph{right order} corresponding to $\untime(w_{\pi})$.


\end{compactenum} 

Thus, we have that $\ulan(\A^{\dagger}) = \ulan(\A)\cap \ulan(\A_r)$.,
and that $\ulan(\A)\cap \ulan(\A_r)$ is non-empty iff
$\ulan(\A^{\dagger}) $ is non-empty.
The number of locations in $\A^{\dagger}$ is $1+ n_{\A}\cdot n_{\A_r}^{2M+1}$.
The number of edges is 
$n_{\A_r}^M + m_{\A} \cdot m_{\A_r}^{M+1} +  m_{\A}\cdot n_{\A_r}$.
Thus, emptiness of $\ulan(\A)\cap \ulan(\A_r)$ can be checked in time
$O( n_{\A}\cdot n_{\A_r}^{2M+1} + 
m_{\A} \cdot m_{\A_r}^{M+1} +  m_{\A}\cdot n_{\A_r})$.
\qed
\end{proof}


\begin{theorem}[\tnote{Model Checking}]
Let $\A$ be an ADB and $\A_r$ be a regular non-deterministic 
finite-state automaton, 
and let $n_{\A}, n_{\A_r} $ be the numbers of locations in $\A, \A_r$ and
$m_{\A}, m_{\A_r} $  be the numbers of edges in $\A, \A_r$ respectively, and 
$M$ be the largest delay of a delay block in $\A$.
The following assertions hold:
\begin{compactenum}
\item The problem whether $\ulan(\A)\subseteq \ulan(\A_r)$ can be checked in time
$O(n_{\A}\cdot 2^{n_{\A_r} \cdot (2M+1)} + 
m_{\A} \cdot 2^{m_{\A_r} \cdot (M+1)} +  m_{\A}\cdot 2^{n_{\A_r}})$.

\item If $M$ is constant, then the problem of whether $\ulan(\A)\subseteq \ulan(\A_r)$ 
is PSPACE-complete.
\end{compactenum}
\end{theorem}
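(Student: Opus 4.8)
The plan is to reduce the containment question to an emptiness check against the complement specification and then invoke Proposition~\ref{prop:IntRegEmpty}. Note that $\ulan(\A) \subseteq \ulan(\A_r)$ holds if and only if $\ulan(\A) \cap \overline{\ulan(\A_r)} = \emptyset$, where $\overline{\ulan(\A_r)}$ is the complement of the regular language of $\A_r$. Since $\A_r$ is a finite-state automaton, I would first determinize it by the subset construction and then complement it by flipping the accepting status of each subset state, obtaining a deterministic automaton $\A_r^c$ for $\overline{\ulan(\A_r)}$ with at most $2^{n_{\A_r}}$ locations (a subset $S\subseteq L^{\A_r}$ is accepting in $\A_r^c$ iff $S\cap L_f^{\A_r}=\emptyset$). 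Applying Proposition~\ref{prop:IntRegEmpty} to $\A$ and $\A_r^c$ yields an ADB $\A^{\dagger}$ with $\ulan(\A^{\dagger}) = \ulan(\A)\cap\overline{\ulan(\A_r)}$, on which I run the emptiness test of Proposition~\ref{proposition:Emptiness}.

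For part~(1) I would substitute the size of $\A_r^c$ into the bound of Proposition~\ref{prop:IntRegEmpty}. The complement has at most $2^{n_{\A_r}}$ locations, so the first term contributes $n_{\A}\cdot(2^{n_{\A_r}})^{2M+1} = n_{\A}\cdot 2^{n_{\A_r}(2M+1)}$; crudely bounding the transition count of $\A_r^c$ by $2^{m_{\A_r}}$, the second term contributes $m_{\A}\cdot 2^{m_{\A_r}(M+1)}$; and the last term contributes $m_{\A}\cdot 2^{n_{\A_r}}$. Summing these gives the stated running time.

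For the PSPACE upper bound in part~(2), the key observation is that when $M$ is constant one never needs to build $\A^{\dagger}$ or $\A_r^c$ explicitly. A location of $\A^{\dagger}$ is a tuple consisting of one location of $\A$ together with $2M+1$ locations of $\A_r^c$, each of which is a subset of $L^{\A_r}$; hence a location is describable in $O(\log n_{\A} + (2M+1)\cdot n_{\A_r})$ bits, which is polynomial in the input size for fixed $M$. I would therefore decide nonemptiness of $\ulan(\A^{\dagger})$ by a nondeterministic procedure that guesses an accepting run one transition at a time while storing only the current tuple: the successor relation of $\A^{\dagger}$ is computed on the fly from the transitions of $\A$ and the subset-transitions of the determinized $\A_r$ (both evaluable in polynomial time), and the initial and accepting tests are local checks on the tuple. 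This places the problem in $\mathrm{NPSPACE}$, and Savitch's theorem gives membership in $\mathrm{PSPACE}$.

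For PSPACE-hardness I would reduce from NFA universality, which is PSPACE-complete: given an NFA $N$ over $\Sigma$, decide whether $\ulan(N) = \Sigma^*$. Take $\A$ to be the trivial ADB with a single accepting location carrying a self-loop $\tuple{\sigma,\block{0}}$ for every $\sigma\in\Sigma$, so that $\ulan(\A)=\Sigma^*$ and the largest delay is $M=0$, and set $\A_r = N$. Then $\ulan(\A)\subseteq\ulan(\A_r)$ iff $\Sigma^*\subseteq\ulan(N)$ iff $\ulan(N)=\Sigma^*$, establishing hardness already in the constant-delay case. The main obstacle is the upper-bound argument: one must perform both the complementation and the product construction of Proposition~\ref{prop:IntRegEmpty} lazily, maintaining a single polynomial-size tuple at a time so that the exponential-size automaton is never materialized, and verify that each component of a tuple-transition (including the guesses on the future locations of $\A_r^c$ that are checked when a $\tick$ is taken) is computable within polynomial space.
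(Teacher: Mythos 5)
Your proposal is correct and follows essentially the same route as the paper's proof: reduce containment to emptiness of $\ulan(\A)\cap\overline{\ulan(\A_r)}$ via determinization and complementation of $\A_r$, apply Proposition~\ref{prop:IntRegEmpty}, and for constant $M$ obtain the PSPACE upper bound by space-efficient (on-the-fly) reachability over the exponential-size product, with hardness inherited from regular automata. Your write-up merely makes explicit two steps the paper leaves terse --- the Savitch-based NPSPACE argument and the concrete reduction from NFA universality (the paper instead cites PSPACE-hardness of containment for NFAs, which are special cases of ADBs) --- but these are the same ideas.
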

\begin{proof}
The result is obtained as follows: (i)~we first obtain an automata for the complement 
of the language of $\A_r$ (and the automata can be obtained by first determinizing 
$\A_r$ and then complementing, and thus has at most $2^{n_{\A_r}}$ locations and 
$2^{m_{\A_r}}$ edges); and 
(ii)~then check emptiness of intersection using Proposition~\ref{prop:IntRegEmpty}.
This gives the result for the first item.
For the second item we note that if $M$ is constant, then we obtain an exponential 
size automata and the emptiness check is achieved by checking reachability to 
a final location (in non-deterministic log-space over an exponential graph).
This gives the PSPACE upper bound, and the PSPACE lower bound follows from 
the fact the containment is PSPACE hard for regular non-deterministic finite-state 
automata (which are special cases of ADBs).
The desired result follows.
\qed
\end{proof}

\renewcommand*{\bibfont}{\raggedright}
\printbibliography

\section{Appendix}

\smallskip\noindent
\textbf{Proof of Proposition~\ref{prop:NonClosureInt}.}
\begin{proof}
Consider the language
  \[
  \lan^{\dagger} = \set{\kappa_0(w \#) \kappa_1(w \#)
    \kappa_2(w \#)  \dots \kappa_n(w \#) \mid w\in\set{a,b}^* \text{ and } n\geq 0}
  \]
  We show $\lan^{\dagger}$ is not an ADB language, and that
  there exist ADBs $\A_1$ and $\A_2$ such that
  $\lan^{\dagger} = \lan(\A_1) \cap   \lan(A_2)$.
 \begin{enumerate}
  \item
    ($\lan^{\dagger}$ is not an ADB language.)\\
Suppose the claim is false, \emph{ie.} let $\lan^{\dagger} = \lan(\A)$.
Let the number of locations in $\A$ be $K$.
 Consider a timed word $w_{\dagger} =
  \kappa_0(w \#) \kappa_1(w \#)
  \kappa_2(w \#)  \dots \kappa_{K+2}(w \#) $ with
  $|w| > K$.
  Let $r_{\dagger}$ be the generating run for $w_{\dagger}$
  Using the pumping lemma for ADB  runs (Proposition~\ref{prop:PumpingLemmaRuns}),  
we can show there exists  a subrun $r_p$ of
  $r_{\dagger}$ such that 
  (1)~the subrun $r_p$ contains at least one output symbol
  transition, and
  (2)~the subrun contains at most $K$ output symbol transitions; and
  (3)~for $r_{\dagger} = r_0\circ r_p \circ r_1$, we have that
  $r_0 \circ r_1$ is also a generating run for $\A^{\dagger}$ (\emph{i.e.},
  we pump down $r_p$).
  Let $w_{01}$ be the output word corresponding to
  the generating run  $r_0 \circ r_1$.
  Because of the constraints on $r_p$, we have that
  $w_{01}$ contains at least one, and at most $K$ output symbols
  less than $w$.
  It can be checked that this means that $w_{01}$ is not a member of $\lan^{\dagger} $,
  a contradiction. 
Thus, $\lan^{\dagger}$ is not an ADB language.


\item
 ($\lan^{\dagger}$ is the intersection of two ADB languages).\\
Consider the following two  languages $  \lan^{\dagger}_1  $ and
$  \lan^{\dagger}_2  $:
\begin{equation}
\label{equation:lanone}
  \begin{array}{ll}
    \lan^{\dagger}_1   = & 
    \left\{\left.
        \begin{array}{l}
          \kappa_0(\overline{w}_0 \#)\mysep \kappa_1(\overline{w}_0 \#)\mysep
          \kappa_2(\overline{w}_2 \#)  \mysep \kappa_3(\overline{w}_2 \#)  \dots\\ 
          \kappa_{2n}(\overline{w}_{2n} \#) \mysep
          \kappa_{2n+1}(\overline{w}_{2n} \#) 
        \end{array}
      \right\vert \overline{w}_j \in\set{a,b}^*  \text{ for all } j \text{ and } n\geq 0
    \right\}\\
     & \mspace{180mu}\cup \\
      & \left\{\left.
        \begin{array}{l}
          \kappa_0(\overline{w}_0 \#)\mysep \kappa_1(\overline{w}_0 \#)\mysep
          \kappa_2(\overline{w}_2 \#)  \mysep \kappa_3(\overline{w}_2 \#)  \dots \\
          \kappa_{2n-2}(\overline{w}_{2n-2} \#) \mysep
          \kappa_{2n-1}(\overline{w}_{2n-1} \#) \mysep
          \kappa_{2n}(\overline{w}_{2n} \#) 
        \end{array}
      \right\vert  \overline{w}_j \in\set{a,b}^*  \text{ for all } j \text{ and } n\geq 0
    \right\}
  \end{array}\\
\end{equation}
and 
\begin{equation}
  \begin{array}{ll}
 \lan^{\dagger}_2  = &
    \left\{\left.
        \begin{array}{l}
          \kappa_0(\overline{w}_{-1} \#)\mysep \kappa_1(\overline{w}_1 \#)\mysep\kappa_2(\overline{w}_1 \#)\\
          \kappa_3(\overline{w}_3 \#)\mysep\kappa_4(\overline{w}_3 \#)\dots
          \kappa_{2n-1}(\overline{w}_{2n-1} \#) \mysep
          \kappa_{2n}(\overline{w}_{2n-1} \#) 
        \end{array}
      \right\vert  \overline{w}_j \in\set{a,b}^*  \text{ for all } j \text{ and } n\geq 0
    \right\}\\
   & \mspace{180mu}\cup\\
  & \left\{\left.
      \begin{array}{l}
        \kappa_0(\overline{w}_{-1} \#)\mysep \kappa_1(\overline{w}_1 \#)\mysep\kappa_2(\overline{w}_1 \#)\\
        \kappa_3(\overline{w}_3 \#)\mysep\kappa_4(\overline{w}_3 \#)\dots\\
        \kappa_{2n-1}(\overline{w}_{2n-1} \#) \mysep
        \kappa_{2n}(\overline{w}_{2n-1} \#) \mysep
        \kappa_{2n}(\overline{w}_{2n} \#)
      \end{array}
    \right\vert  \overline{w}_j \in\set{a,b}^*  \text{ for all } j \text{ and } n\geq 0
  \right\}
 \end{array}
%
%
\end{equation}
%
The language $\lan^{\dagger}_1 $ contains all (discrete) times words
of the form $\kappa_0(\overline{w}_0 \#) \kappa_1(\overline{w}_1 \#)
    \kappa_2(\overline{w}_2 \#)  \dots $ such that $\overline{w}_{2j} = \overline{w}_{2j+1}$ for all $j$.
The language $\lan^{\dagger}_2 $ contains all (discrete) times words
of the form $\kappa_0(\overline{w}_0 \#) \kappa_1(\overline{w}_1 \#)
    \kappa_2(\overline{w}_2 \#)  \dots $ such that $\overline{w}_{2j+1} = \overline{w}_{2j+2}$ for all $j$.

    We show both  $\lan^{\dagger}_1 $ and  $\lan^{\dagger}_2 $  to be  ADB languages.
    Consider the ADB $\A_1^{\dagger}$ in Figure~\ref{figure:complementone} which has $l_0$ as the
    initial location and $l_5$ and $l_7$ as the accepting locations.
    It can be seen that $\lan(\A_1^{\dagger}) =   \lan^{\dagger}_1$.
    \begin{figure*}[t]
      \strut\centerline{\input Figures/delaycomplement1.eepic}
      \caption{Automaton $\A_1^{\dagger}$ with delay blocks such that $\lan(\A_1^{\dagger}) =   \lan^{\dagger}_1$.}
      \label{figure:complementone}
    \end{figure*}
Consider the right hand side of the automaton from the starting location $l_0$.
This portion of the ADB results in the output part $\set{\kappa_0(\ol{w}) \mid \ol{w} \in \set{a,b}^*}$
\emph{i.e.} it is used to add the last ``unmatched'' segment in the second part after the union
in Equation~\ref{equation:lanone}.
The left hand side of the automaton is used to generate the segments
$ \kappa_{2i}(\overline{w}_{2i} \#)\mysep \kappa_{2i+1}(\overline{w}_{2i} \#)$.

The ADB  $\A_2^{\dagger}$ in Figure~\ref{figure:complementtwo} has $l_9$ as the (only) starting location,
and includes the locations and transitions of  $\A_1^{\dagger}$ as shown in the Figure.
The $\tick$ transition from $l_{10}$ goes to the location $l_0$.
The accepting locations are $l_5$, $l_7$ and $l_{10}$. 
The automaton just ``shifts'' the checks of $\A_1^{\dagger}$ by one time unit, by
the transitions between $l_9$ and $l_{10}$. 
It can be seen that $\lan(\A_2^{\dagger}) =   \lan^{\dagger}_2$.

 \begin{figure*}[t]
      \strut\centerline{\setlength{\unitlength}{0.00043745in}
\begingroup\makeatletter\ifx\SetFigFont\undefined%
\gdef\SetFigFont#1#2#3#4#5{%
  \reset@font\fontsize{#1}{#2pt}%
  \fontfamily{#3}\fontseries{#4}\fontshape{#5}%
  \selectfont}%
\fi\endgroup%
{\renewcommand{\dashlinestretch}{30}
\begin{picture}(9744,3009)(0,-10)
\path(912,2982)(1407,2982)(1407,2667)
	(912,2667)(912,2982)
\path(912,597)(1407,597)(1407,282)
	(912,282)(912,597)
\put(4084,1520){\ellipse{854}{584}}
\put(4084,1520){\ellipse{944}{674}}
\path(2577,1677)(3072,1677)(3072,1362)
	(2577,1362)(2577,1677)
\put(6405,1545){\ellipse{854}{584}}
\put(1199,1563){\ellipse{854}{584}}
\dashline{60.000}(9732,12)(9732,2982)(5952,2982)
	(5952,12)(9732,12)
\path(4557,1497)(5952,1497)
\blacken\path(5772.000,1437.000)(5952.000,1497.000)(5772.000,1557.000)(5826.000,1497.000)(5772.000,1437.000)
\path(3072,1497)(3612,1497)
\blacken\path(3432.000,1437.000)(3612.000,1497.000)(3432.000,1557.000)(3486.000,1497.000)(3432.000,1437.000)
\path(1407,2802)(1410,2800)(1417,2795)
	(1430,2787)(1448,2776)(1471,2761)
	(1499,2743)(1529,2723)(1560,2702)
	(1590,2680)(1619,2660)(1646,2640)
	(1671,2621)(1693,2603)(1713,2587)
	(1730,2571)(1745,2555)(1759,2540)
	(1771,2525)(1782,2509)(1792,2494)
	(1801,2478)(1809,2461)(1816,2444)
	(1822,2426)(1828,2407)(1832,2388)
	(1835,2368)(1838,2347)(1839,2326)
	(1840,2305)(1839,2285)(1838,2264)
	(1835,2243)(1832,2223)(1828,2203)
	(1823,2183)(1818,2164)(1811,2146)
	(1805,2127)(1797,2108)(1788,2090)
	(1779,2070)(1768,2050)(1756,2029)
	(1742,2006)(1727,1982)(1710,1956)
	(1691,1928)(1670,1899)(1649,1868)
	(1627,1838)(1606,1809)(1587,1783)
	(1571,1760)(1542,1722)
\blacken\path(1603.504,1901.492)(1542.000,1722.000)(1698.898,1828.691)(1618.441,1822.164)(1603.504,1901.492)
\path(1407,462)(1411,464)(1418,469)
	(1431,478)(1450,491)(1473,507)
	(1499,526)(1526,546)(1554,567)
	(1579,587)(1603,606)(1624,624)
	(1643,642)(1660,659)(1674,675)
	(1686,691)(1697,708)(1707,724)
	(1716,742)(1723,760)(1730,780)
	(1736,800)(1741,821)(1745,843)
	(1749,865)(1751,888)(1751,911)
	(1751,934)(1750,957)(1748,979)
	(1744,1000)(1740,1020)(1735,1040)
	(1729,1058)(1722,1076)(1715,1092)
	(1706,1108)(1696,1123)(1685,1137)
	(1673,1152)(1659,1167)(1642,1182)
	(1624,1198)(1603,1214)(1580,1231)
	(1556,1248)(1532,1264)(1510,1280)
	(1489,1293)(1452,1317)
\blacken\path(1635.664,1269.383)(1452.000,1317.000)(1570.362,1168.708)(1557.709,1248.432)(1635.664,1269.383)
\path(777,1632)(775,1634)(771,1640)
	(765,1649)(754,1663)(740,1683)
	(723,1707)(703,1734)(681,1765)
	(658,1798)(635,1831)(612,1863)
	(590,1895)(569,1926)(550,1955)
	(532,1982)(516,2007)(502,2031)
	(489,2053)(477,2074)(466,2094)
	(456,2113)(448,2131)(440,2149)
	(431,2169)(424,2189)(417,2208)
	(411,2228)(406,2248)(402,2268)
	(398,2288)(396,2308)(394,2328)
	(393,2348)(394,2368)(395,2388)
	(398,2407)(401,2426)(405,2444)
	(411,2462)(417,2479)(424,2495)
	(432,2511)(441,2526)(451,2540)
	(462,2554)(474,2568)(487,2582)
	(501,2596)(517,2610)(535,2624)
	(555,2639)(578,2655)(603,2671)
	(630,2688)(660,2707)(692,2725)
	(725,2745)(759,2764)(792,2782)
	(823,2799)(851,2814)(874,2827)
	(891,2836)(902,2842)(909,2845)(912,2847)
\path(777,1542)(775,1540)(771,1535)
	(765,1526)(754,1513)(740,1495)
	(723,1473)(703,1448)(682,1420)
	(659,1390)(636,1359)(613,1329)
	(592,1299)(571,1271)(552,1244)
	(535,1219)(519,1196)(505,1174)
	(493,1153)(481,1134)(471,1115)
	(462,1097)(454,1079)(447,1062)
	(439,1041)(432,1021)(427,1000)
	(422,979)(418,958)(415,937)
	(413,915)(413,894)(413,872)
	(415,851)(418,830)(422,809)
	(428,789)(434,770)(441,752)
	(450,735)(459,718)(469,702)
	(480,687)(492,672)(504,659)
	(516,646)(530,634)(546,621)
	(563,608)(582,595)(603,581)
	(626,567)(652,552)(680,536)
	(709,520)(740,504)(771,487)
	(802,472)(830,457)(856,445)
	(877,434)(893,426)(903,421)
	(909,418)(912,417)
\path(1587,1497)(2577,1497)
\path(12,1587)(777,1587)
\blacken\path(597.000,1527.000)(777.000,1587.000)(597.000,1647.000)(651.000,1587.000)(597.000,1527.000)
\path(6807,1542)(6810,1542)(6816,1543)
	(6828,1545)(6846,1548)(6870,1552)
	(6900,1556)(6934,1562)(6972,1568)
	(7012,1575)(7053,1582)(7093,1589)
	(7133,1596)(7170,1603)(7205,1610)
	(7238,1618)(7269,1624)(7297,1631)
	(7324,1638)(7349,1646)(7372,1653)
	(7395,1661)(7416,1669)(7437,1677)
	(7459,1687)(7481,1697)(7503,1708)
	(7525,1720)(7547,1733)(7570,1748)
	(7595,1764)(7620,1781)(7646,1800)
	(7674,1821)(7703,1843)(7732,1866)
	(7761,1889)(7789,1911)(7815,1932)
	(7838,1951)(7856,1966)(7887,1992)
\blacken\path(7787.642,1830.358)(7887.000,1992.000)(7710.529,1922.301)(7790.460,1911.031)(7787.642,1830.358)
\path(6807,1452)(6810,1451)(6817,1449)
	(6829,1446)(6847,1441)(6871,1434)
	(6900,1426)(6933,1417)(6969,1406)
	(7006,1395)(7042,1384)(7078,1373)
	(7112,1362)(7144,1351)(7175,1341)
	(7203,1330)(7229,1320)(7255,1309)
	(7279,1299)(7302,1288)(7324,1276)
	(7347,1264)(7368,1253)(7388,1241)
	(7410,1228)(7431,1214)(7454,1200)
	(7478,1183)(7503,1166)(7529,1147)
	(7557,1126)(7587,1103)(7619,1079)
	(7652,1054)(7685,1028)(7719,1001)
	(7752,975)(7783,950)(7811,928)
	(7836,908)(7855,893)(7887,867)
\blacken\path(7709.464,933.940)(7887.000,867.000)(7785.135,1027.074)(7789.210,946.455)(7709.464,933.940)
\path(7212,417)(7209,417)(7202,418)
	(7190,420)(7171,423)(7146,427)
	(7115,432)(7079,437)(7040,444)
	(6999,451)(6956,458)(6915,465)
	(6875,473)(6837,480)(6802,488)
	(6769,495)(6739,503)(6712,510)
	(6687,518)(6664,525)(6644,533)
	(6624,542)(6606,550)(6590,559)
	(6571,571)(6553,584)(6536,597)
	(6520,611)(6504,627)(6490,643)
	(6476,660)(6463,678)(6451,697)
	(6440,716)(6430,735)(6420,755)
	(6412,774)(6404,794)(6397,813)
	(6391,832)(6386,851)(6381,869)
	(6376,887)(6372,904)(6368,924)
	(6363,944)(6359,964)(6355,985)
	(6351,1007)(6347,1031)(6343,1057)
	(6339,1084)(6334,1114)(6330,1144)
	(6325,1174)(6322,1202)(6318,1227)(6312,1272)
\blacken\path(6395.263,1101.509)(6312.000,1272.000)(6276.316,1085.649)(6328.653,1147.105)(6395.263,1101.509)
\put(1092,2757){\makebox(0,0)[lb]{\smash{{\SetFigFont{8}{9.6}{\familydefault}{\mddefault}{\updefault}0}}}}
\put(1092,372){\makebox(0,0)[lb]{\smash{{\SetFigFont{8}{9.6}{\familydefault}{\mddefault}{\updefault}0}}}}
\put(2757,1452){\makebox(0,0)[lb]{\smash{{\SetFigFont{8}{9.6}{\familydefault}{\mddefault}{\updefault}0}}}}
\put(6267,1452){\makebox(0,0)[lb]{\smash{{\SetFigFont{8}{9.6}{\familydefault}{\mddefault}{\updefault}$l_0$}}}}
\put(1902,1272){\makebox(0,0)[lb]{\smash{{\SetFigFont{8}{9.6}{\familydefault}{\mddefault}{\updefault}$\#$}}}}
\put(4692,1272){\makebox(0,0)[lb]{\smash{{\SetFigFont{8}{9.6}{\familydefault}{\mddefault}{\updefault}$\tick$}}}}
\put(1137,1497){\makebox(0,0)[lb]{\smash{{\SetFigFont{8}{9.6}{\familydefault}{\mddefault}{\updefault}$l_9$}}}}
\put(3972,1452){\makebox(0,0)[lb]{\smash{{\SetFigFont{8}{9.6}{\familydefault}{\mddefault}{\updefault}$l_{10}$}}}}
\put(6897,1677){\makebox(0,0)[lb]{\smash{{\SetFigFont{8}{9.6}{\familydefault}{\mddefault}{\updefault}$\epsilon$}}}}
\put(6897,1137){\makebox(0,0)[lb]{\smash{{\SetFigFont{8}{9.6}{\familydefault}{\mddefault}{\updefault}$\epsilon$}}}}
\put(6312,327){\makebox(0,0)[lb]{\smash{{\SetFigFont{8}{9.6}{\familydefault}{\mddefault}{\updefault}$\tick$}}}}
\put(6042,2667){\makebox(0,0)[lb]{\smash{{\SetFigFont{10}{12.0}{\familydefault}{\mddefault}{\updefault}$\A_1^{\dagger}$}}}}
\put(507,2217){\makebox(0,0)[lb]{\smash{{\SetFigFont{8}{9.6}{\familydefault}{\mddefault}{\updefault}$a$}}}}
\put(552,822){\makebox(0,0)[lb]{\smash{{\SetFigFont{8}{9.6}{\familydefault}{\mddefault}{\updefault}$b$}}}}
\end{picture}
}}
      \caption{Automaton $\A_2^{\dagger}$ with delay blocks such that $\lan(\A_2^{\dagger}) =   \lan^{\dagger}_2$.}
      \label{figure:complementtwo}
    \end{figure*}
  \end{enumerate}
The proof for untimed languages is similar to that for timed languages (we use
the untimed language $\untime(\lan^{\dagger})$).
\qed
\end{proof}

\smallskip\noindent
\textbf{Proof of Proposition~\ref{prop:NonClosureComp}.}
\begin{proof}
Since $\lan(\A_1 \cap \A_2) = 
\overline{\overline{\lan(\A_1)} \cup \overline{\lan(\A_2)}}$, and since
ADBs are not closed under intersection (Proposition~\ref{prop:NonClosureInt}) but are closed under
union (Proposition~\ref{prop:UnionClosure}) we have
that ADBs are not closed under complementation (a similar proof applies to untimed languages).
\qed
\end{proof}

\smallskip\noindent
\textbf{Proof of Proposition~\ref{prop:Membership}.}
\begin{proof}
Let $w = \tuple{w_0^{\sigma}, w_0^t} \tuple{w_1^{\sigma}, w_1^t} \dots
\tuple{w_n^{\sigma}, w_n^t}$.
Let the end timestamp of $w$ be $\tend$ (\emph{i.e.}  $   w_n^t = \tend$).
We first construct a finite state deterministic  automaton $\A_w$ with just one path 
(corresponding to $w$) over the
alphabet 
$\Sigma_w = \set{\tick, \tuple{w_0^{\sigma}, w_0^t},  \tuple{w_1^{\sigma}, w_1^t}
\dots \tuple{w_n^{\sigma}, w_n^t}}$ as follows.
\begin{itemize}
\item 
  The set of locations is
  $\set{l_0, l_1, \dots, l_{\tend}} \, \cup \, 
  \set{l_{ \tuple{w_0^{\sigma}, w_0^t}}, l_{\tuple{w_1^{\sigma}, w_1^t}}
\dots l_{\tuple{w_n^{\sigma}, w_n^t}}}$.
\item 
  Let $ w = \ol{w}_0 \circ  \ol{w}_1 \circ \dots  \ol{w}_{\tend}$ where
 $\ol{w}_k$ is the substring of $w$ containing all the timestamped tuples
of time $k$ (let $\ol{w}_k$ be $\epsilon$ if $w$ does not contain any $k$-timestamped
tuples).
Let $i_k = \sum_{j=0}^k |\ol{w}_{j}|$ for $k = 0..\tend$.
Note that if $w$ contains a timestamp $k$, then
$ \tuple{w_{i_k}^{\sigma}, w_{i_k}^t}$ is the last tuple timestamped with
$k$.
Thus,
\[
w = 
\underset{\text{timestamp } 0}{\underbrace{\tuple{w_0^{\sigma}, w_0^t} \tuple{w_1^{\sigma}, w_1^t} \dots 
\tuple{w_{i_0}^{\sigma}, w_{i_0}^t}}}
\underset{\text{timestamp } 1}{\underbrace{
\tuple{w_{i_0+1}^{\sigma}, w_{i_0+1}^t}\dots
\tuple{w_{i_1}^{\sigma}, w_{i_1}^t}}} \dots
\underset{\text{timestamp } \tend}{\underbrace{
\tuple{w_{i_{\tend -1}+1}^{\sigma}, w_{i_{\tend-1}+1}^t}\dots
\tuple{w_{I_\tend}^{\sigma}, w_{i_{\tend}^t}}}}
\]
Note that $w_{j}^t = k$  for all $i_{k-1}+1 \leq j \leq i_k$ in the above representation.
The finite state automaton $\A_w$ will generate the language
$ \ol{w}_0 \circ \tick \circ\,  \ol{w}_1 \circ\tick \circ \dots  \tick \circ\, \ol{w}_{\tend} \circ
\tick^*$.
\item 
The starting location  is $l_0$ and the only accepting location is
$l_{\tuple{w_{I_\tend}^{\sigma}, w^t_{i_{\tend}}}}$.
We refer to the accepting location $l_{\tuple{w_{I_\tend}^{\sigma}, w^t_{i_{\tend}}}}$ as
the location $l_{\bot}$.
\item
The set of transitions is as follows.
For all $k\geq -1$
\begin{itemize}
\item 
  If there is a timestamp $k+1$ in $w$, then
  there are the following transitions.
  \begin{itemize}
  \item 
    The transition
    $l_{k+1}\stackrel{\tuple{w_{i_k+1}^{\sigma}, w_{i_k+1}^t}}{\longrightarrow}
    l_{\tuple{w_{i_k+1}^{\sigma}, w_{i_k+1}^t}}$
    (letting $i_{-1} = -1$).
  \item 
    For  all $j$ such that $ i_k +1 < j \leq  i_{k+1}$, we have the transition
    $l_{j-1} \stackrel{\tuple{w_{j}^{\sigma}, w_{j}^t}}{\longrightarrow} i_{j}$.
  \item 
    If $k+1$ is not the greatest timestamp in $w$, then
    the transition
   $ l_{\tuple{w_{i_{k+1}}^{\sigma}, w_{i_{k+1}}^t}} \stackrel{\tick}{\longrightarrow} l_{k+2}$.
 \item 
   The looping  transition $l_{\tuple{w_{I_\tend}^{\sigma}, w^t_{i_{\tend}}}}
     \stackrel{\tick}{\longrightarrow}
l_{\tuple{w_{I_\tend}^{\sigma}, w^t_{i_{\tend}}}}$.
 \end{itemize}
\item 
   If there is no timestamp   $k+1$ in $w$, but
   there is a timestamp larger than $k+1$ in $w$, then the transition
   $l_{k+1}  \stackrel{\tick}{\longrightarrow} l_{k+2}$.

\end{itemize}
It can be confirmed that the
 finite state automaton $\A_w$  generates the  strings
$ \ol{w}_0 \circ \tick \circ\,  \ol{w}_1 \circ\tick \circ \dots  \tick \circ\, \ol{w}_{\tend}
\circ \tick^*$.

\end{itemize}

We construct a (non-deterministic)  ADB  $\A^{\ddagger}$ based on
$\A$ and $\A_w$ such that
$\A^{\ddagger}$ has an accepting  path  iff
$\A$ outputs the timed word $w$.
Let $M$ be the largest delay of a delay block in $\A$.
The automaton $\A^{\ddagger}$ will simulate the executions of $\A$; and of $\A_w$ 
simultaneously for
the current time, and for time upto $M$ time units in the future.
That is, the automaton $\A^{\ddagger}$ is able to verify that $\A_w$ 
 first generates the output symbols corresponding to the current time outputs in $\A$,
then generates symbols corresponding to current time plus one in $\A$, and so on.
To concurrently simulate executions of $\A_w$ corresponding to $M+1$ time
points, the automaton will maintain a tuple of locations.
The tuple  will have $M+2$ components:
\begin{enumerate}
\item
The first component will correspond to a location of $\A$, and is used  to simulate executions
of $\A$.
\item 
The next $M+1$ components will correspond to locations of $\A_w$ used for
concurrently simulating $\A_w$ corresponding to the 
current time point, and the next $M$
time points.
\end{enumerate}
Formally,
let $\A = (L^{\A}, D^{\A}, \Sigma, \delta^{\A}, l_s^{\A}, L_f^{\A})$. 
The ADB  $\A^{\ddagger}$ is as follows:
\begin{itemize}
\item
  The location set   $L^{\A^{\ddagger}}$ is 
  $\left(L^{\A} \cup \set{l_{\tick}^{\A}}\right)\times \left(L^{\A_w}\right)^{M+1}$ where
  $M$ is the largest delay of a delay block in $\A$, where
  $l_{\tick}^{\A}$ is a new location component (\emph{i.e.} it is not present in
  $\A$).
\item
  The initial location is 
  $\tuple{l_s^{\A},l_0^{\A_w},  l_1^{\A_w}, \dots, l_{M}^{\A_w}}$
  (we have added the superscript $\A_w$ to the locations of $\A_w$ for clarity).
\item 
  The set of accepting locations is 
  $\left(L_f^{\A}  \cup \set{l_{\tick}^{\A}}\right)\times \underset{M+1 \text{ occurences}}{\underbrace{
  \set{ l_{\bot}^{\A_w}} \times  \set{ l_{\bot}^{\A_w}} \times \dots
  \set{ l_{\bot}^{\A_w}}}}$.\\
(Recall that $ l_{\bot}^{\A_w}$ denotes the location $l_{\tuple{w_{I_\tend}^{\sigma}, w^t_{i_{\tend}}}}$,
and is the end sink accepting location of $\A_w$.)
\item The transition function $\delta^{\ddagger}()$ is as follows.
  \begin{itemize}
  \item $\delta^{\A^{\ddagger}}(
    \tuple{
      l^{\A},  l_{\tuple{0}}^{\A_w}, l_{\tuple{1}}^{\A_w}, \dots l_{\tuple{M}}^{\A_w}
    }, \epsilon) =
    \delta^{\A}(l^{\A}, \epsilon) \times \set{ l_{\tuple{0}}^{\A_w}}
    \times \set{ l_{\tuple{1}}^{\A_w}} \times \dots
      \set{ l_{\tuple{M}}^{\A_w}}
    $ (the locations $ l_{\tuple{j}}^{\A_w}$ are unrelated to the locations
    $ l_{j}^{\A_w}$).

  \item 
    $\delta^{\A^{\ddagger}}(
    \tuple{
      l^{\A},  l_{\tuple{0}}^{\A_w}, l_{\tuple{1}}^{\A_w}, \dots l_{\tuple{M}}^{\A_w}
    },  \sigma, \block{t}
    ) =$
    \[
    \begin{cases}
     \delta^{\A}( l^{\A},  \sigma, \block{t}) \times
     \set{l_{\tuple{0}}^{\A_w}, l_{\tuple{1}}^{\A_w}, \dots l_{\tuple{t-1}}^{\A_w}} \times
     \delta^{\A_w}(l_{\tuple{t}}^{\A_w}, \sigma) \times
     \set{l_{\tuple{t+1}}^{\A_w}, \dots l_{\tuple{M}}^{\A_w}} 
     & \text{if } l_{\tuple{0}}^{\A_w} = l_{\tuple{x, k}}^{\A_w},\\
     & \text{ or }
        l_{\tuple{0}}^{\A_w} =  l_k^{\A_w}; \\
        & \text{both with } k +t \leq \tend.\\
        \emptyset & \text{otherwise.}
      \end{cases}
      \]
     This corresponds to the fact that whenever $\A$ has  a delay transition $\sigma, \block{t}$, the
     automaton $\A^{\ddagger}$ must ensure that the $t$-time component of  $\A_w$ takes a transition
     on the symbol $\sigma$, so long as  $\A_w$ has not seen the whole word $w$.
   \item 
       $\delta^{\A^{\ddagger}}(
    \tuple{
      l^{\A},  l_{\tuple{0}}^{\A_w}, l_{\tuple{1}}^{\A_w}, \dots l_{\tuple{M}}^{\A_w}
    }, \tick
    )  $ is as follows.
    \begin{itemize}
      \item
        Let $f(l^{\A})$ defined as
        \[
        f(l^{\A}) = 
        \begin{cases}
          \delta^{\A}( l^{\A}, \tick) & \text{if }  l^{\A} \text{ is not an accepting location of }
          \A\\
          \delta^{\A}( l^{\A}, \tick)  \, \cup\, \set{ l_{\tick}^{\A}} &
          \text{if }  l^{\A} \text{ is an accepting location of }
          \A\\
           l_{\tick}^{\A} &  \text{if }  l^{\A} =  l_{\tick}^{\A}
        \end{cases}
        \]
        This corresponds to the fact that the accepting language of $\A$ does not change if we
        we add a $\tick$ transition from accepting locations to a new sink accepting
        location $l_{\tick}^{\A}$.
        Doing this simplifies the sequel.

      \item 
        Let $g( l_{\tuple{M}}^{\A_w})$ be defined as
        \[
        g( l_{\tuple{M}}^{\A_w}) =
        \begin{cases}
           l_{k+1}^{\A_w} & \text{ if }
            l_{\tuple{M}}^{\A_w} = l_{\tuple{x, k}}^{\A_w}, \text{ or }
            l_{\tuple{M}}^{\A_w} =  l_k^{\A_w}; \text{ both with } k < \tend\\
            l_{\bot}^{\A_w} & \text{otherwise}
          \end{cases}
          \]
          Recall that $ l_{\bot}^{\A_w}$ denotes the location $l_{\tuple{w_{I_\tend}^{\sigma}, w^t_{i_{\tend}}}}$,
          and is the end sink accepting location of $\A_w$.
          Thus, $g( l_{\tuple{M}}^{\A_w})$ denotes the starting location  $l_{k+1}^{\A_w}$ for time $k+1$ if
          $ l_{\tuple{M}}^{\A_w} = l_{\tuple{x, k}}^{\A_w}, \text{ or }
         l_{\tuple{M}}^{\A_w} =  l_k^{\A_w}; \text{ both with } k < \tend$.
         If $ l_{\tuple{M}}^{\A_w}$ corresponds to the $\tend$ locations, then the $g( l_{\tuple{M}}^{\A_w})$
         is the sink accepting location $ l_{\bot}^{\A_w}$.

       \item 
         $\delta^{\A^{\ddagger}}(
    \tuple{
      l^{\A},  l_{\tuple{0}}^{\A_w}, l_{\tuple{1}}^{\A_w}, \dots l_{\tuple{M}}^{\A_w}
    }, \tick
    )  $ is then
    \[
    \begin{cases}
      \emptyset & \left(
        \begin{array}{l}
          \text{if }  l_{\tuple{0}}^{\A_w} \neq l_{\tuple{w_{i_k}^{\sigma}, w_{i_k}^t}}
          \text{ for every } \\
           \qquad \qquad \qquad \qquad 0\leq k\leq \tend \\
          \text{i.e., }  l_{\tuple{0}}^{\A_w} \text{ is not a location in }
          \A_w\\ 
          \text{from which}
          \text{ there exists an}\\
          \text{outgoing }
          \tick \text{ transition}.
        \end{array}
      \right)\\ 
      f(l^{\A}) \times \set{ l_{\tuple{1}}^{\A_w}}\times\set{ l_{\tuple{2}}^{\A_w}}\times
      \dots \set{ l_{\tuple{M}}^{\A_w}}\times \set{g( l_{\tuple{M}}^{\A_w})}
      & 
      \left(
        \begin{array}{l}
          \text{otherwise, i.e. }  l_{\tuple{0}}^{\A_w} \text{ is a location}\\
          \text{in }\A_w \text{ from} 
          \text{ which there exists }\\
          \text{an outgoing }
          \tick \text{ transition}.
        \end{array}
      \right)
    \end{cases}
\]
     In this transition, the automaton $\A^{\ddagger}$ checks that for the time segment $\Delta$ corresponding 
     to $\tuple{x,\Delta} = {\tuple{0}}$ or $\Delta =  {\tuple{0}}$, the automaton
     $\A_w$ has finished reading all all the symbols for that time segment.
     If so, the elements in the location tuple of  $\A^{\ddagger}$ are left shifted, and the next
     starting location component corresponding to time $k+1$ added.
     If $k+1 > \tend$, then we just add the looping location $l_{\bot}^{\A_w}$.
     
     Observe that if $l^{\A}$ is an accepting location, and 
     $l_{\tuple{0}}^{\A_w}, l_{\tuple{1}}^{\A_w}, \dots l_{\tuple{M}}^{\A_w}$ all
     correspond to the locations denoting the ends of the corresponding time segments of $w$, and
     $l_{\tuple{M}}^{\A_w} = l_{\bot}^{\A_w}$, then the $\tick$ transitions guarantee that
     the location $\tuple{l^{\A},  l_{\bot}^{\A_w},  l_{\bot}^{\A_w}, \dots,  l_{\bot}^{\A_w}}$
     will be reached.
  
    \end{itemize}

\end{itemize}
\end{itemize}

It can be checked that the ADB $\A^{\ddagger}$ is such that
\[  
\lan(\A^{\ddagger}) =
\begin{cases}
 w  & \text{if } w\in \lan(\A)\\
 \emptyset & \text{otherwise}
\end{cases}
\]
Thus, $w\in  \lan(\A)$ iff $\lan(\A^{\ddagger}) \neq \emptyset$.
To check for emptiness of $\lan(\A^{\ddagger}) $, we just have to do a reachability search
from the initial location to the accepting locations.
The number of locations of $\A_w $ is  $|w| +\tend$.
The number of locations of $\A^{\ddagger} $ is $n_{\A}\cdot (|w| +\tend)^{M+1}$.
The number of edges of $\A^{\ddagger} $ is $m_{\A} + \tend + |w|$.
To check for reachability of the accepting locations, we do need to explicitly construct the
automaton $\A^{\ddagger}$, if we construct only the reachable from the initial state part using
an on the fly algorithm, we will encounter at most
$\tend + |w| + n_{\A}$ locations  and $m_{\A} + \tend + |w|$ edges.
To traverse an edge, we have to do $O(M)$ work, since
a state of $\A^{\ddagger}$  is a $M+2$-tuple.
Thus, the timed word membership test can be done in time
$O\left(M \cdot \left(n_{\A} + m_{\A} + |w| +  \tend\right)\right)$.
\qed
\end{proof}

\begin{proposition}
\label{proposition:IntersectionEmptiness}
Given three arbitrary ADBs $\A_1$, $\A_2$ and $\A_3$, it is undecidable to check
whether $\lan(\A_1) \cap \lan(\A_2) \cap \lan(\A_3)= \emptyset$.
\end{proposition}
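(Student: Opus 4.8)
The plan is to reduce the halting problem for Turing machines to the complement of this emptiness question. Fix a Turing machine $\T$ with tape alphabet $\Sigma$ and state set $Q$, and write $\Sigma^{\dagger} = \Sigma \cup Q \cup \set{\#}$. As in the proof of Proposition~\ref{proposition:Universality}, I encode a computation of $\T$ by a string $c_0 \# c_1 \# \dots \# c_n$ with each $c_i \in \Sigma^* Q \Sigma^*$ a configuration, and I place the $i$-th configuration at timestamp $i$, so that a halting computation corresponds to the timed word $\kappa_0(c_0 \#)\, \kappa_1(c_1 \#)\, \dots\, \kappa_n(c_n \#)$. The point is that the predicate ``$w$ encodes a valid halting computation of $\T$'' decomposes into three conditions, each realizable as an ADB: (i)~every even-to-odd step $c_{2j} \vdash c_{2j+1}$ is a legal move of $\T$; (ii)~every odd-to-even step $c_{2j+1} \vdash c_{2j+2}$ is legal; and (iii)~the timed word is well formed, $c_0$ is the initial configuration, and $c_n$ is accepting. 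Since $\T$ halts iff some such timed word exists, the triple intersection is nonempty iff $\T$ halts, which is undecidable.

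For condition (i) I build $\A_1$ along the lines of the automaton $\A_1^{\dagger}$ of Proposition~\ref{prop:NonClosureInt}, with one change: instead of copying the segment generated at time $2j$ verbatim into time $2j+1$, I copy its \emph{one-step image} under $\T$. Concretely, while $\A_1$ emits the symbols of $c_{2j}$ at timestamp $2j$ through delay blocks $\block{0}$, it simultaneously emits, through delay blocks $\block{1}$, the symbols of the configuration that $\T$ produces from $c_{2j}$ in one move; these delayed symbols surface at timestamp $2j+1$ and so constitute $c_{2j+1}$. Because a move of $\T$ rewrites only the cell under the head and its neighbours, the symbol of $c_{2j+1}$ at a given position is a function of a bounded window of $c_{2j}$, so $\A_1$ can produce it on the fly with finite memory (carrying the last one or two symbols and the head marker in its locations). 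The even-indexed configurations stay unconstrained by $\A_1$, just as the odd segments were free in $\lan^{\dagger}_1$. The automaton $\A_2$ for condition (ii) is identical but shifted by one time unit, reusing the ``shift'' gadget (the transitions through $l_9, l_{10}$) of the automaton $\A_2^{\dagger}$ in Proposition~\ref{prop:NonClosureInt}: $\A_2$ first emits a free configuration at time $0$, ticks once, and only then begins matching the pairs $(c_{2j+1}, c_{2j+2})$.

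For condition (iii), $\A_3$ generates any timed word $\kappa_0(c_0 \#)\, \kappa_1(c_1 \#)\, \dots\, \kappa_n(c_n \#)$ in which each $c_i$ is a syntactically well-formed configuration (a string of $\Sigma^* Q \Sigma^*$ terminated by $\#$), $c_0$ is the fixed initial configuration, and $c_n$ contains an accepting state; it relates no two consecutive configurations. These are all finite-state, per-timestamp conditions, so $\A_3$ is an ADB that merely emits each $c_i$ with delay $\block{0}$ at timestamp $i$ and separates timestamps by $\tick$ transitions. A timed word lies in $\lan(\A_1)\cap\lan(\A_2)\cap\lan(\A_3)$ exactly when it is a well-formed configuration sequence, starting at the initial configuration, ending in an accepting one, in which \emph{every} consecutive step is a legal move of $\T$; that is precisely a halting computation. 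Hence the intersection is empty iff $\T$ never halts, and emptiness of the intersection of three ADB languages is undecidable.

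The main obstacle is the correctness of the on-the-fly one-step image inside $\A_1$ and $\A_2$: I must verify that the delay-$\block{1}$ emissions reproduce the next configuration position-by-position, with the right length and $\#$ alignment, handling the boundary cases where the head sits at the left or right end of the tape (using a fixed blank-padding convention so that $|c_{i+1}| = |c_i|$ and the terminating $\#$ of $c_{2j+1}$ lands at timestamp $2j+1$). A second point needing care, though routine by Proposition~\ref{prop:RegularConnection}, is that the intersection is taken over \emph{output} timed words and not over runs: I only need each of $\A_1$, $\A_2$, $\A_3$ to have \emph{some} run yielding the common word, so the three automata may produce the shared configuration $c_{2j+1}$ by different means (as a delayed image in $\A_1$, as a freely chosen delay-$\block{0}$ segment in $\A_2$ and $\A_3$) provided the resulting timestamped symbols coincide.
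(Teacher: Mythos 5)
Your proposal is correct and follows essentially the same route as the paper's proof: a reduction from Turing machine acceptance, encoding the $i$-th configuration at timestamp $i$, with one ADB enforcing the even-to-odd steps via delay-$\block{1}$ emissions, a second (time-shifted by one $\tick$) enforcing the odd-to-even steps, and a third enforcing the purely syntactic per-timestamp conditions (well-formed configurations, initial, accepting). The paper merely numbers the automata differently, putting the syntactic checker first, and your two flagged care points (the bounded-window on-the-fly simulation and the fact that intersection is over output words rather than runs) are exactly the mechanisms the paper relies on.
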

\begin{proof}
 The proof is similar to the proof of the corresponding
  result for context free grammars.
  Let $\T$ be a Turing machine with $\Sigma$ as the tape alphabet,
  $Q$ as the set of locations, and
  $F$ as the set of accepting locations.
  Let a valid computation of $\T$ be denoted by an untimed string
  $w= w_0 \# w_1 \#\dots w_n$ such that $w_0$ represents the initial tape
  configuration, $w_{i+1}$ is a tape configuration that follows from $w_i$ for
  $i\geq 0$, and $\#$ is a special delimiter symbol.
  A valid computation of $\T$ can be denoted by the timed word
  $\widehat{w} = \kappa_0(w_0 \#) \kappa_1(w_1 \#)
  \kappa_2(w_2 \#)  \dots \kappa_n(w_n \#)$.
  That is, $\widehat{w}$ denotes the timed word corresponding to $w$ such that
  the substring  $w_i \#$ occurs at time $i$.
  Observe that a timed word $\widehat{w} = 
  \kappa_0(w_0 \#) \kappa_1(w_1 \#)
  \kappa_2(w_2 \#)  \dots \kappa_n(w_n \#)$
   represents a valid computation of $\T$
  iff all the following conditions hold.
  \begin{enumerate}
  \item
    Each $w_i$ represents a valid configuration of $\T$, that is
    $\kappa_i(w_i \#)$ belongs to the set $(\Sigma\times\set{i})^*
    (Q\times\set{i}) (\Sigma\times\set{i})^* \#$.
  \item
    $w_0$ represents the initial configuration of $\T$, that is
    $\kappa_i(w_0 \#)$ is of the form 
    $(\set{q_0}\times\set{i}) (\Sigma\times\set{i})^* \#$ where
    $q_0$ is the initial location of $\T$.
  \item
    $w_n$ represents an accepting configuration of $\T$, that is
    $\kappa_n(w_n)$ is a string in $(\Sigma\times\set{n})^* 
    (F\times\set{n}) (\Sigma\times\set{i})^* $.
  \item
    Each $w_{i+1}$ is represents a configuration of $\T$ following from
    $w_i$.
  \end{enumerate}

  We can construct an ADB $\A_1$ (having  delay blocks only of delay 0)
  which generates strings satisfying 
  conditions 1,2 and 3.
  We no show that we can construct  ADBs $\A_2$ and $\A_3$ such that
  the strings in $\lan(\A_2) \cap \lan(\A_3)$ satisfy condition 4.
  The automaton $\A_2$ checks that the configuration $w_{2i+1}$ occurring
  at time $2i+1$ follows from the configuration $w_{2i}$ occurring at time
  $2i$ for $i\geq 0$.
  It does this by first generating two symbols from $w_{2i} \#$ at time $2i$,
  and then generating the appropriate first symbol for $w_{2i+1}$ using a delay
  block of delay 1.
  It then repeatedly generates a symbol from  $w_{2i} \#$, and the corresponding
  next symbol for $w_{2i+1}$ using a delay block of delay 1.
  When the $\#$ symbol is generated for $w_{2i} \#$, there can be at most two
  symbols remaining for $w_{2i+1}$ which are then generated.
  This procedure can be repeated for each $i$ by a loop in $\A_2$.
  In addition, there are $\epsilon$-transitions to states which generate
  $w_{2i}$ and stop (this is in case the computation $\widehat{w}$ has on odd
  number of steps.
  A similar automaton $\A_3$ can be constructed which
  ensures that the configuration $w_{2i+2}$ occurring
  at time $2i+2$ follows from the configuration $w_{2i+1}$ occurring at time
  $2i+1$ for $i\geq 0$.
  Thus, a timed word $\widehat{w}$ represents a valid
  configuration of $\T$ iff it belongs to 
  $\lan(\A_1)\cap \lan(\A_2) \cap \lan(\A_3)$ 
  The result follows by noting that  $\T$ accepts a string iff
  $\lan(\A_1)\cap \lan(\A_2) \cap \lan(\A_3) \neq \emptyset$.
\qed
\end{proof}

\end{document}